\newcommand{\R}{\mathbb{R}}
\newcommand{\N}{\mathbb{N}}
\newcommand{\Q}{\mathbb{Q}}
\newcommand{\Z}{\mathbb{Z}}
\newcommand{\ee}{\varepsilon}
\newcommand{\Sc}{\mathcal{S}}
\newcommand{\Gc}{\mathcal{G}}
\newcommand{\FF}{\mathcal{F}}
\newcommand{\Ec}{\mathcal{E}}
\newcommand{\Vc}{\mathcal{V}}
\newcommand{\Pc}{\mathcal{P}}
\newcommand{\Ic}{\mathcal{I}}
\newcommand{\Hc}{\mathcal{H}}
\newcommand{\Lc}{\mathcal{L}}
\newcommand{\Cc}{\mathcal{C}}
\newcommand{\Oc}{\mathcal{O}}
\newcommand{\head}[1]{{#1}^-}
\newcommand{\tail}[1]{{#1}^+}
\newcommand{\lo}{\longrightarrow}
\newcommand{\li}{\left}
\newcommand{\re}{\right}
\newcommand{\mi}{\,\,\big|\,\,}
\newcommand{\eq}{\Longleftrightarrow}
\newcommand{\Os}{\mathrm{O}_{\mathrm{si}}}
\newcommand{\Oe}{\mathrm{O}_{\mathrm{el}}}
\newcommand{\Ps}{\mathrm{P}_{\mathrm{si}}}
\newcommand{\Pe}{\mathrm{P}_{\mathrm{el}}}
\newcommand{\el}{\mathrm{el}}
\newcommand{\si}{\mathrm{si}}
\newcommand{\spann}{\mathrm{span}}
\newcommand{\mult}{\mathrm{mult}}
\newcommand{\argmin}{\arg\!\min}
\newcommand{\argmax}{\arg\!\max}
 \newtheorem{lm}{Lemma}[section]
 \newtheorem{theo}{Theorem}
\newtheorem{pro}{Proposition}[section]
\newtheorem{strat}{Strategy}
\journalname{Preprint version submitted to Theory of Computing Systems}
\begin{document}

\title{Exact Localisations of Feedback Sets  }


\author{Michael Hecht}


\institute{Michael Hecht \at
Now at : MOSAIC Group, Chair of Scientific Computing for Systems Biology, Faculty of Computer Science, TU Dresden
\& Center for Systems Biology Dresden, Max Planck Institute of Molecular Cell Biology and Genetics, Dresden  \\
\email{hecht@mpi-cbg.de}            \\
$ $\\
This work was partially funded by the DFG project ``MI439/14-1''.
}


\maketitle

\begin{abstract}
The \emph{feedback arc (vertex) set problem}, shortened  FASP (FVSP), is to transform a given multi
digraph $G=(V,E)$ into an acyclic graph by deleting as few arcs (vertices) as
possible. Due to the results of Richard M. Karp in 1972 it is one of the
classic NP-complete problems. An important contribution of this paper is that the subgraphs
$G_{\el}(e)$, $G_{\si}(e)$ of all \emph{elementary cycles} or \emph{simple  cycles} running through some arc $e \in E$,  
can be computed in $\mathcal{O}\big(|E|^2\big)$ and $\mathcal{O}(|E|^4)$, respectively. We use this fact and introduce the notion of the \emph{essential minor} and \emph{isolated cycles}, 
which yield a priori problem size reductions and  in the special case of so called \emph{resolvable graphs} an exact solution in 
$\mathcal{O}(|V||E|^3)$. 
We show 
that weighted versions of the FASP and FVSP possess a Bellman decomposition, which yields exact solutions 
using a dynamic programming technique in times 
$\mathcal{O}\big(2^{m}|E|^4\log(|V|)\big)$ and $\mathcal{O}\big(2^{n}\Delta(G)^4|V|^4\log(|E|)\big)$, 
where  $m \leq  |E|-|V| +1$,  $n \leq (\Delta(G)-1)|V|-|E| +1$, respectively. The parameters $m,n$ can be computed in $\mathcal{O}(|E|^3)$, $\mathcal{O}(\Delta(G)^3|V|^3)$, respectively
and denote the maximal 
dimension of the cycle space of all appearing
\emph{meta graphs}, decoding the intersection behavior of the cycles. Consequently, $m,n$  equal zero if all meta graphs are 
trees. Moreover, we deliver several heuristics  and discuss how to control their variation from the optimum. 
Summarizing, the  presented results allow us to suggest a strategy for an implementation of a fast and accurate FASP/FVSP-SOLVER. 
\keywords{Feedback set problem, acyclic subgraph problem, linear ordering problem,  elementary cycle, simple cycle}
\end{abstract}

\section{Introduction}
\label{sec:intro}

The \emph{feedback arc set} problem, shortened FASP,  is to delete as less as possible arcs of a graph such that 
the resulting subgraph is acyclic, i.e., it contains no directed cycle. 
Another equivalent formulation is to find a linear ordering of the vertices of
the graph such that the number of back arcs is minimized.  Therefore the
problem is also known as \emph{maximum acyclic subgraph} problem or
\emph{linear ordering problem}.
For directed graphs this problem is one of the classic NP-complete problems
\cite{Karp:1972}. 
The problem of deleting  a smallest subset of vertices to result in an acyclic subgraph is known as \emph{feedback vertex set problem} (FVSP).  The
FASP and FVSP are linear time reducible among each other, by keeping the
relevant parameters fix as we will assert in section \ref{ssec:notation}, alternatively  see \cite{Even:1998}. Therefore 
algorithmic properties of the two problems are transferable. In particular, 
the FVSP is also NP-complete. 
Analogous problems for undirected graphs can be defined. As shown in \cite{Karp:1972} the feedback
vertex set problem remains NP-complete, while the feedback arc set problem can be solved
efficiently by solving a maximum spanning tree problem. 
An excellent overview on feedback sets can be found in \cite{bang}. 
The problem of finding minimal transversals of directed cuts is closely related 
to the FASP, see \cite{Lucchesi:1978}. The FASP stays NP-complete for graphs where
every node has an in-degree and out-degree of at most three or line digraphs
even when every clique has at most size three \cite{Gavril:1977}. 
It is also NP-complete for tournament graphs \cite{Alon:2006}. 
However, there also exist graph classes possessing polynomial time algorithms, e.g., planar directed graphs or more general
weakly acyclic digraphs \cite{Groetschel:1985}, and 
reducible flow graphs \cite{Ramachandran:1988}. 
The FASP or FVSP has a multitude of applications, e.g., retiming synchronous circuitry  \cite{leiserson},  circuit testing \cite{kunzmann},  computational biology and
neuroscience \cite{greedy}, network analysis and operating systems  \cite{Silber}. 

\subsection{Outline} In section \ref{Prel} we provide the graph theoretical concepts, which are fundamental for this article. In section \ref{main} we present our main results. The fact that 
the FASP/FVSP on multi-digraphs can be reduced to simple graphs is asserted in section \ref{sec:essential} and the first  a priori problem size reduction is deduced. 
In section \ref{sec:elecyc} we construct an algorithm determing the induced subgraph of all cycles with one arc in common efficently. This knowledge is used in section \ref{sec:iso} to introduce the concept of 
\emph{isolated cycles} and \emph{resolvable graphs} and providing an efficent solution of the FASP/FVSP on resolvable graphs. Moreover, the second a priori problem size reduction is given.  Afterwards we concentrate on the main 
result of the article. Namely, that the FASP/FVSP possesses a Bellmann decomposition and present exact solutions using this fact to apply a dynamic programming technique in section \ref{Bell}. 
In section \ref{Greedy}, we discuss how greedy approaches behave with respect to the problems and develope a strategy for a general FASP/FVSP-SOLVER. Finally, we discuss our results and other alternatives in 
section \ref{Algo}.

\section{Preliminaries}\label{Prel}

Before we can introduce the FASP (FVSP) formally, some basic concepts of graphs and 
cycles need to be mentioned.  

\subsection{Graphs and Cycles}
\label{ssec:notation}
A \emph{multi-directed graph}, or \emph{multi-digraph}, $G=(V,E)$ consists of a set of vertices
$V$ and a multi-set of arcs $E$ containing elements from $V\times V$. A \emph{directed graph} or \emph{digraph} is a multi-digraph with a 
simple arc set $E$, i.e., $E \subseteq V\times V$ and therefore every $e \in E$ occurs exactly once. 
A digraph is called  \emph{simple} if there are no loops, i.e., 
$E\cap \mathbb{D}(V\times V)= \emptyset$, where $\mathbb{D}(V\times V)=\{(v,v) \in V\times  V\mi v\in V\}$ denotes the diagonal.

If not stated otherwise throughout the 
article $G =(V,E)$ denotes a finite, connected, directed and
loop-free multi-digraph and $G\setminus e$, $G\setminus v$ denote the graphs which occur by deleting the arc $e$ 
and possibly occurring isolated vertices or the vertex $v$ and all its adjacent arcs. For $\ee \subseteq E$ and $\nu \subseteq V$ the graphs $G\setminus \ee$, $G\setminus \nu$ are analogously defined. Moreover, 
 $\Gc(\cdot)$, $\Ec(\cdot)$, $\Vc(\cdot)$ denote the induced graph, the set of all arcs, the set of all vertices 
which are inherited by a set or set system of graphs, arcs or vertices.
With $\Pc(A)$ we denote the power set of a given set $A$. 

For an arc $e=(u,v) \in E$ we denote $\tail{e}=u$ as the \emph{tail} and $\head{e}=v$ as the
\emph{head}  of the arc.  Two arcs $e$ and $f$ are called \emph{consecutive} if
$\head{e}=\tail{f}$ and are called \emph{connected} if $\{\head{e},\tail{e}\}\cap
\{\head{f},\tail{f}\} \not =\emptyset$. 
A \emph{directed path} from a vertex $u$ to a vertex $v$ is a sequence of
consecutive arcs where $u$ is the tail of the first arc and $v$ is the head of
the last. A \emph{connected path} from a vertex $u$ to a vertex $v$ is a
sequence of connected arcs containing $u$ and $v$ as vertices. 
A digraph is \emph{connected} if there is a connected path between every pair of
its vertices. A \emph{weighted digraph} $(G,\omega)$ or $(V,E,\omega)$, is a
digraph with  an additional weight function $\omega:E\lo \R$, which assigns a
(usually positive) weight to each arc. 
For a given vertex $v \in V$ the sets $N^{+}_V(v):=\li\{ u \in V \mi (v,u)\in E\re\}\,, N^{-}_V(v):=\li\{ u \in V \mi (u,v)\in E\re\}$,  
$N^{\pm}_E(v):=\li\{ e \in E \mi e^{\pm} = v\re\} $
shall denote the set of all outgoing or incoming vertices or  arcs of $v$ respectively. The \emph{indegree} (respectively \emph{out
degree}) of a vertex $u$ is given by $\deg^{\pm}(u) = |N^{\pm}_E(u)|$ and the degree of a vertex is 
$\deg(u)=\deg^-(u)+\deg^+(u)$. $\Delta^\pm(G), \Delta(G)$ shall denote the maximal (in/out) degree, respectively. 

A directed (connected) \emph{cycle} of a digraph is a multiset of arcs $\{e_0,\dots,e_k\}$ such that  
there is a permutation  $\phi: \{0,\dots,k\} \lo \{0,\dots,k\}$ with $e_{\phi(i)}$ 
and $e_{\phi(i)+1 \mod k+1}$ are consecutive (connected), for $1\leq i\leq k$.  
A \emph{loop} is a cycle containing only a single arc. A cycle is \emph{simple}
if the set of contained arcs $\{e_1,\dots,e_k\}$ is a simple set, i.e., it
visits every arc, it contains, exactly once.  A cycle is \emph{elementary} if
each vertex it contains is visited exactly once. We denote with $O(G)$ the set of all directed cycles and with $\Oe(G)$, $\Os(G)$ the set of all elementary or simple cycles, respectively. 
Analogously, $\Oe^0(G)$ and $\Os^0(G)$ shall denote the set of all \emph{connected} (and not necessarily directed) \emph{elementary} and \emph{simple cycles} respectively. If not stated otherwise in the whole article  a cycle is assumed to be directed and elementary.
A \emph{feedback vertex set} (FVS) of $G$ is a set $\nu \in \Pc(V)$ such that $G\setminus \nu$ is \emph{acyclic}, i.e., $G \setminus \nu$ 
contains no directed cycle.  A \emph{feedback arc set} (FAS) of $G$ is a set $\ee \in \Pc(E)$ such that $G\setminus \ee$ is acyclic.

\begin{definition}[line graph, natural hypergraph] The directed  \emph{line graph} $L(G)=(V_L,E_L)$ of a digraph $G$ is a digraph where each vertex
represents one of the arcs of $G$ and two vertices are connected by an arc
if and only if the corresponding arcs are consecutive.
In contrast the \emph{natural hypergraph} $\Hc(G)=(\bar V,\bar E)$ of $G$ is constructed by identifying the arcs of $G$ 
with the vertices of $\Hc(G)$, i.e., $\bar V$ is a simple set of vertices such that $|V| = |E|$, where $|E|$ is counted with multiplicities. By fixing the identification $\bar V \cong E$  we introduce a directed  
hyperarc $h_v$ for every vertex $v \in V$ by requiring that head and tail coincide 
with  all outgoing and ingoing arcs respectively, i.e., 
$h_v = \big(N_E^-(v), N_E^+(v)\big)$. Consequently, $\bar E \cong V$ and therefore every hyperarc can be labeled by its corresponding vertex.
See Figure \ref{fig:hyp}  for an example. 
\end{definition}
The directed, elementary cycles of the line graph $L(G)$ of $G$ are in 1 to 1 correspondence to the directed, simple cycles of $G$ 
while the directed simple cycles of $\Hc(G)$, i.e., directed cycles which run through every hyper arc
exactly once are called \emph{Berge cycles}, \cite{bergehyp}. Note that if $G$ is a simple digraph, i.e., there are no multi arcs, then 
the set of all Berge cycles of $\Hc(G)$ are in 1 to 1 correspondence to the set of all elementary cycles of $G$.  For a set of hyperarcs $\bar \ee \subseteq \bar E$
of $\Hc(G)$ we denote with $\ee \subseteq V$ the corresponding vertices in $G$. 
We summarize  some facts in this regard.

\begin{proposition} Let $G=(V,E)$ be a graph. 
\begin{enumerate}
 \item[i)] The line graph $L(G)=(V_L,E_L)$  can be constructed in $\Oc\big(|E|^2\big)$. 
 \item[ii)] The natural hypergraph $\Hc(G)=(\bar V,\bar E)$ can be constructed in $\Oc\big(\Delta(G)|V|\big)$. 
 \item[iii)]$\nu \subseteq V_L$ is a FVS of $L(G)$ if and only if $\nu$ is a FAS of $G$. 
 \item[iv)] $\bar \ee \subseteq \bar E$ is a FAS of $\Hc(G)$, with respect to the notion of Berge cycles, if and only if $\ee$ is a FVS of $G$.  
\end{enumerate}
\label{berge}
\end{proposition}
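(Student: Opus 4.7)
The proof splits into four essentially independent parts, and the main work is bookkeeping around the two correspondences $E \cong V_L$ and $E \cong \bar V$, $V \cong \bar E$.

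For \emph{(i)} I would proceed constructively. First create $|V_L|=|E|$ vertices. Then, for every ordered pair of arcs $(e,f)\in E\times E$, insert the arc $(e,f)\in E_L$ iff $\head{e}=\tail{f}$; this is $\Oc(|E|^2)$ comparisons. A slightly cleaner bookkeeping that gives the same bound is to iterate over vertices $v\in V$ and insert $\deg^-(v)\deg^+(v)$ arcs between $N_E^-(v)$ and $N_E^+(v)$; by Cauchy--Schwarz the total number of arcs is bounded by $\sum_v \deg^-(v)\deg^+(v) \le |E|^2$.

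For \emph{(ii)} I would set $\bar V$ to be a copy of $E$ (which costs $|E|\le \Delta(G)|V|$) and then, for every $v\in V$, assemble the hyperarc $h_v=(N_E^-(v),N_E^+(v))$ by a single traversal of the adjacency lists of $v$. Each arc of $G$ is read exactly twice (once at its head, once at its tail), so the whole construction takes $\Oc(|E|)=\Oc(\Delta(G)|V|)$.

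For \emph{(iii)} I would use the $1$-to-$1$ correspondence between directed elementary cycles of $L(G)$ and directed simple cycles of $G$ (stated just before the proposition): a sequence $e_1,\dots,e_k$ of pairwise distinct arcs of $G$ forms a simple directed cycle iff the corresponding vertices form an elementary cycle in $L(G)$. Now observe the commutation $L(G\setminus\nu)=L(G)\setminus \nu$ for $\nu\subseteq V_L\cong E$. Since every directed cycle of a digraph contains a simple directed cycle, $G\setminus\nu$ is acyclic iff it has no simple directed cycle iff $L(G)\setminus\nu$ has no elementary cycle iff $L(G)\setminus\nu$ is acyclic. This is exactly the equivalence claimed.

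Part \emph{(iv)} is the delicate one, and this is where I expect the main obstacle: Berge cycles are defined relative to hyperarcs, not to vertices of the underlying graph, and one has to be careful that the multi-arc vs.\ simple situation does not spoil the bijection. The plan is to set up a correspondence between directed elementary cycles of $G$ and Berge cycles of $\Hc(G)$: an elementary cycle $v_0,v_1,\dots,v_{k-1},v_0$ in $G$ uses each vertex $v_i$ exactly once, so under the identification $V\cong \bar E$ it traverses each hyperarc $h_{v_i}$ exactly once, which is precisely a Berge cycle of $\Hc(G)$; conversely a Berge cycle of $\Hc(G)$ projects to a closed arc-sequence in $G$ in which every vertex used is used exactly once, i.e.\ a directed elementary cycle of $G$. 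Under the identification $\bar\ee\subseteq \bar E \cong V$ we again have $\Hc(G\setminus \ee)=\Hc(G)\setminus \bar\ee$. Since breaking all elementary cycles of $G$ is equivalent to breaking all directed cycles of $G$ (every directed cycle contains an elementary one), we conclude that $\bar\ee$ is a FAS of $\Hc(G)$ w.r.t.\ Berge cycles iff $\ee$ is a FVS of $G$. I would close by remarking that in the multi-arc case the same proof works since the bijection is realised on the level of vertex-sequences, independent of how many parallel arcs realise a given transition.
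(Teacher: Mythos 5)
Your proof is correct and follows essentially the same route as the paper's (one-sentence) proof: parts (i) and (ii) by direct construction from an adjacency-list representation, and parts (iii) and (iv) via the consecutiveness correspondences between arcs of $G$ and vertices of $L(G)$, respectively vertices of $G$ and hyperarcs of $\Hc(G)$. You simply supply the details the paper leaves implicit — in particular the reduction from arbitrary directed cycles to simple/elementary ones and the compatibility of deletion with the two constructions.
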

\begin{proof} Storing $G$ as a adjacency list and following  the definitions immediately implies $i)$ and $ii)$. Since any two vertices $e,f$ of the line graph $L(G)$ 
are adjacent if and only if the corresponding arcs are consecutive in $G$ and any two arcs $h_u,h_v$ of $\Hc(G)$ are consecutive, i.e., 
$h_u^- \cap h_v^+\not = \emptyset$, if and only if the corresponding vertices $u$ and $v$ are adjacent in $G$, $ii)$ and $iii)$ follow. 
\qed 
\end{proof}

\begin{remark}\label{BIP} Note that by introducing an additional arc $h^*_v$ between head and tail of every hyperarc $h_v=(N_E^-(v),N_E^+(v))$ of $\Hc(G)$, the natural hypergraph becomes a directed graph 
$G^*=(V^*,E^*)$ with $|V^*|=|E|+2|V|$, $|E^*|\leq \Delta(G)|V|+|V|$. The directed cycles of $G^*$ are in 
1 to 1 correspondence to the Berge cycles of $\Hc(G)$ and a FAS $\ee$ of $G^*$ is in  1 to 1 correspondence to a FAS $\bar \ee$ of $\Hc(G)$ by identifying $\bar \ee$ with the additional introduced arcs belonging to the 
hyperarcs in $\bar \ee$ and identifying $\ee$ with the hyperarcs corresponding to the bipartite graphs cutted by $\ee$. If $\gamma : \bar E \lo \R+$ is an arc weight on $\Hc(G)$ then setting 
$\gamma^*(h) \equiv \gamma(h_v)$ for all $h \in N_E^-(v) \cup N_E^+(v)\cup\{h^*_v\}$ yields the translated weight.

\end{remark}

\begin{figure}[t!]
 \centering
 \input{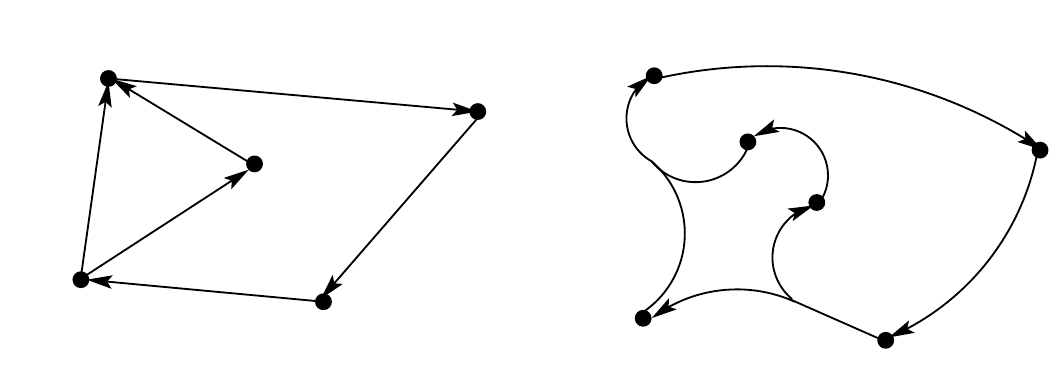_t}
 \caption{The natural hyper graph $\Hc(G)$ of $G$. }
 \label{fig:hyp}
\end{figure}
To give a more algebraic notion of cycles we consider
$$ X(G):= \bigoplus_{e \in E} \Z e $$ 
the  free  $\Z$-module generated by $E$. If we choose coordinates, i.e., a numbering for $E$ and $V$ then we can identify $E$ with $\{e_1,\dots,e_{|E|}\}$, $V$ with $\{v_1,\dots,v_{|V|}\}$ and $X$ with $\Z^{|E|}$. In this case 
an element $x \in X(G)$ is a tuple $x=(x_1,\dots,x_{|E|})$, which can be interpreted as a set of paths through $G$ where $x_i \in \Z$ indicates how often we pass the arc $e_i$ and the sign of $x_i$ determines 
in which direction this is done.
We denote with  
$\Ic(G)$  the incidence matrix of $G$ with respect to these identifications, i.e., $\Ic(G)=(\iota_{ij})_{\genfrac{}{}{0pt}{}{1 \leq i \leq | V|}{1 \leq j \leq |E |}}$ with 
$$ \iota_{ij} = \li\{\begin{array}{rl}
	0  & \,, \quad \text{if} \quad   \tail{e_i}\not= v_j  \,\, \text{and}\,\, \head{e_i}\not = v_j \\
	1  &  \,, \quad \text{if} \quad  \tail{e_i}= v_j  \,\, \text{and}\,\, \head{e_i}\not = v_j \\
	-1 &  \,, \quad \text{if} \quad  \tail{e_i}\not = v_j \,\, \text{and}\,\, \head{e_i}= v_j
\end{array} \re. \,.$$
It is a well known fact, see for instance \cite{M}, that $x \in X(G)$ is a cycle of $G$ if and only if $\Ic x  = 0$, i.e., 
the submodule of all cycles of $G$ coincides with the set of homogeneous solutions $\Lambda(G)=\ker \Ic(G)$. In particular,  this implies that 
$\Lambda(G)$ is a free $\Z$-module with  
\begin{equation}\label{dim}
 \dim_\Z \Lambda(G) = \dim_\Z (\ker \Ic(G)) = |E| -|V| + \#G,
\end{equation}
where $\#G$ denotes the number of connected components  of $G$ and therefore equals $1$ by our assumption on $G$.
Note that the vector space $\Lambda^0(G):=\Lambda(G)/\Z_2^{|E|}$ can be understood as the cycle space of connected cycles, given by the kernel of the incidence matrix $\Ic^0(G)$ defined with 
respect to $\Z_2$ coefficients. In this case  
\begin{equation}\label{dim2}
 \dim_{\Z_2} \Lambda^0(G) =  \dim_{\Z_2}\ker(\Ic^0(G)) = |E| -|V| + \#G,
\end{equation} 
still holds, see again \cite{M}. 
 
\begin{remark}\label{rem:OG}
If $c \in \Oe(G)$ then $1$ is the only non vanishing entry of $c$, i.e., $c \in
\{0,1\}^{|E|}$.  Moreover, no elementary cycle  is subset of another elementary cycle. Indeed assume the opposite and consider two elementary cycles $c,c^\prime \in \Oe(G)$ with $c\subseteq c^\prime$,
then $\Ic(G)(c^\prime -c)= \Ic(G)c^\prime-\Ic(G)c =0-0=0$. Thus, $c^\prime -c \not = 0$
is also a positive oriented cycle and therefore we have $c^\prime = (c^\prime -c) + c \not \in
\Oe(G)$.  A contradiction!
\end{remark}
Not that every simple cycle is given by the union of arc disjoint elementary cycles. The following example illustrates this fact. 
\begin{example} 
\label{exa:elementary}
Consider the graph $G$ in Figure~\ref{fig:elementary}. Then one observes that the cycle $c = \{e,f,g,h,i\}$ is a simple, non-elementary cycle while the cycles $\{e,f,i\}$,  $\{f,h\}$ and $\{h,g\}$ are 
elementary cycles. 
\end{example}

\begin{figure}[t!]
 \centering
 \input{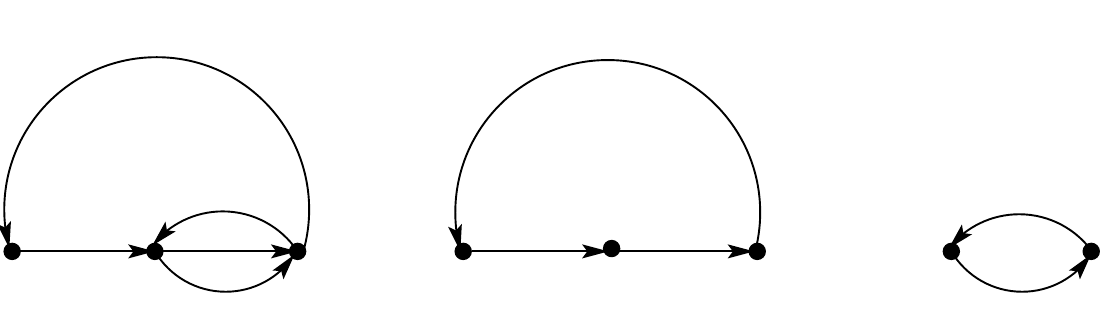_t}
 \caption{Elementary and non-elementary cycles in $G$, see Example~\ref{exa:elementary}.}
 \label{fig:elementary}
\end{figure}

\subsection{The Feedback Arc Set Problem (FASP)} \label{FASP}

Now we have all ingredients to give a formal definition of the FASP. 
\begin{problem}\label{def:fasp}
Let $G =(V,E)$ be  a finite, connected, directed, and loop-free graph, $\omega
: E\lo \N^+$ be a weight function.  Then the \emph{weighted FASP} is to
find a  set of arcs $ \ee \in \Pc(E)$  such that $G \setminus \ee$ is acyclic,
i.e., $\Oe\big(G \setminus \ee\big) = \emptyset$ and  
$$ \Omega_{G,\omega}(\ee):=\sum_{e \in \ee} \omega(e) $$
is minimized. We denote the set of solutions of this problem with $\Sc(G,\omega)$
and denote with $\Omega(G,\omega):=\Omega_{G,\omega}(\ee)$, $\ee \in \Sc(G,\omega)$ the optimal weight or \emph{feedback length}.  
If $\omega$ is constant, e.g., equal to $1$, then the problem coincides with
the \emph{unweighted minimal FASP}. 
\end{problem}

\begin{remark}
The condition on $G$ to be loop-free is not an essential restriction.
This is because every loop is contained in any solution of the minimal FASP.
\end{remark}
\begin{remark}
Note that, every cycle $c \in  O(G)$ can be generated by elementary cycles
$c_1,\dots,c_n \in \Oe(G)$ using only non-negative coefficients. Thus, if $\ee$ is
a solution of Problem \ref{def:fasp} then certainly
$O\big(G\setminus \ee\big) = \emptyset$, which implies that our
notion of acyclic graphs is consistent for the problem. 
\end{remark}

\begin{remark}
Note that for given graph $G=(V,E)$ the smallest
subgraph $G_o\subseteq G$, which contains all cycles of $G$, i.e., $G^\prime= G_o$
whenever $G^\prime \subseteq G_o$ and $\Oe(G)=\Oe(G^\prime)$ can be constructed in $\Oc(|E|^2)$.  Indeed the arc set of $G_o$ is
constructed from $E$ by removing arcs $(u,v)$ if there is no directed path
from $v$ to $u$. For every arc this can be done by depth-first search in linear time if $G$ is stored in an adjacency list.   
Since removing arcs does not generate new cycles it suffices to check every arc once yielding the estimated runtime performance. Certainly, a solution for $G_o$ is a
solution for $G$. We shortly denote with $\Gc_o(G):=G_o$ and with $\Gc_o^0(G):=G^0_o$ the analogous graph appearing by considering connected cycles instead of directed ones.
In particular, by the argumentation above, an elementary or directed cycle $c \in \Oe(G)$, $c' \in \Os(G)$ can be found in $\Oc(|E|^2)$ or no cycle exist. 

\label{GO}
\end{remark}

\begin{remark}
\label{half} 
If $G=(V,E)$ is a simple graph then denoting with $\lfloor\cdot\rfloor$ the Gauss bracket we observe that at most $\lfloor|E|/2\rfloor$ arcs have to 
be deleted to obtain a graph where no connected path of length $2$ exists anymore. 
In particular, the graph is acyclic in this case and therefore 
$$\Omega(G,\omega) \leq \max_{e\in E}\omega(e) \cdot |E|/2.$$  
See also \cite{Saab}.
\end{remark}

\subsection{The Feedback Vertex Problem (FVSP)} \label{FVSP}

Let $G=(V,E)$ be given and $\gamma : V \lo \R^+ $ be a vertex weight.
The feedback vertex set problem (FVSP) on $(G,\gamma)$ is obtained by replacing the role of arcs by
vertices in Problem~\ref{def:fasp}. In regard of Proposition \ref{berge}, by introducing the hyperarc weight $\bar \omega (h_v):=\gamma(v)$, we realize that the FVSP is equivalent to the the FASP on the natural hypergraph 
$\Hc(G)$ of $G$, with respect to the notion of Berge cycles. Already in  Remark \ref{BIP} we mentioned how to treat this case. Vice versa the FASP on an arc weighted graph $(G,\omega)$ 
is equivalent to the FVSP on the line graph
$L(G)$ of $G$ by introducing the vertex weight $\gamma(v) = \omega(v)$, $v \in V_L=E$. Since the described transformations can be done efficently, see Proposition \ref{berge}, 
an efficent solution of the
FASP or FVSP for an arc and vertex weighted instance $(G,\omega,\gamma)$ yields an efficent solution of the FVSP or FASP for the
transformed instances and vice versa. In particular, by summarizing some already known results we obtain: 

\begin{theo} The unweighted FASP and FVSP are APX complete. 
\end{theo}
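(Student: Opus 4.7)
The plan is to leverage Proposition~\ref{berge} together with Remark~\ref{BIP} to establish that the unweighted FASP and the unweighted FVSP are mutually AP-equivalent, and then to import a single APX-hardness result from the literature to conclude. The key observation is that the two reductions supplied by Proposition~\ref{berge}, namely $G \mapsto L(G)$ and $G \mapsto \Hc(G) \rightsquigarrow G^*$, are not only computable in (nearly) linear time but also \emph{cost preserving}: by parts (iii) and (iv), feedback sets on opposite sides of each transformation are in bijective correspondence via the identification of arcs with line-graph vertices, respectively of hyperarcs with vertices of $G$, and this bijection preserves cardinality. Consequently each transformation is an $L$-reduction with constants $\alpha = \beta = 1$, so both the optimum and the cost of any approximate solution are transferred without loss, which means that APX-completeness of either problem automatically implies it for the other.

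Given this AP-equivalence, the theorem reduces to invoking a single APX-hardness result and a matching constant-factor approximation. One natural route is to import the APX-hardness of the FASP on tournament graphs from \cite{Alon:2006}; since tournaments form a subclass of digraphs, this immediately yields APX-hardness for the unweighted FASP on general digraphs, and via the line-graph reduction also for the unweighted FVSP. Membership in APX is then cited from the standard literature. No new combinatorial construction is required at this stage: the theorem is essentially a corollary of Proposition~\ref{berge} combined with prior hardness and approximation results.

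The main obstacle lies not in the hardness transfer itself but in the bookkeeping for the FVSP-to-FASP direction. The raw natural hypergraph $\Hc(G)$ is not a digraph, so one must pass through the auxiliary graph $G^*$ of Remark~\ref{BIP} to obtain a genuine FASP instance. Here I have to verify that the bijection $\bar\varepsilon \leftrightarrow \varepsilon^*$ induced by identifying each hyperarc with its inserted companion arc $h^*_v$ remains cardinality preserving, and that the linear blow-up $|V^*| = |E| + 2|V|$, $|E^*| \le \Delta(G)|V| + |V|$ does not spoil the $L$-reduction constants. Both points are already recorded in Remark~\ref{BIP}, so once they are checked the proof collapses to a short citation argument.
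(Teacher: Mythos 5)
Your handling of the two reductions is fine and in fact more careful than the paper's own: the paper only invokes the line-graph direction $G \mapsto L(G)$ of Proposition~\ref{berge}, whereas you also verify the reverse direction through $\Hc(G)$ and the auxiliary graph $G^*$ of Remark~\ref{BIP}; both transformations preserve feedback length exactly, so your $L$-reduction constants $\alpha=\beta=1$ are correct. The genuine gap is your source of APX-hardness. You propose to import ``APX-hardness of the FASP on tournament graphs from \cite{Alon:2006}'', but that paper establishes only NP-hardness of the feedback arc set problem on tournaments and contains no inapproximability statement. Worse, minimum feedback arc set on tournaments is known to admit a PTAS (Kenyon-Mathieu and Schudy, STOC 2007), so it cannot be APX-hard unless $P=NP$. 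This step therefore fails, and the hardness half of your argument has no foundation. The paper instead derives APX-hardness of the FVSP from the $L$-reduction of Minimum Vertex Cover --- APX-complete by \cite{Dinur} --- to the FVSP via Karp's construction \cite{Karp:1972}, and then transfers hardness to the FASP through Proposition~\ref{berge}; you would need to substitute that (or an equivalent) hardness source for your tournament citation.

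A secondary caveat: you also state that membership in APX ``is cited from the standard literature''. No constant-factor approximation for directed feedback arc or vertex sets is known --- the best general ratio is the $\Oc\big(\log(\Omega)\log(\log(\Omega))\big)$ bound of \cite{Even:1998} cited elsewhere in this paper --- so that citation does not exist. The paper's own proof silently skips membership as well, so this is a weakness shared with the original rather than a divergence from it, but it is a step you explicitly rely on and cannot discharge as written.
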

\begin{proof} Since there is an $L$-reduction of the Minimum Vertex Cover Problem, which is APX complete due to  \cite{Dinur}, to the FVSP, see \cite{Karp:1972},  the FVSP is APX complete.
Proposition \ref{berge} shows that the unweighted FASP on $G$  is equivalent to the unweighted FVSP on $L(G)$. Thus, 
the feedback length of any solution remains unchanged yielding an $L$-reduction of the FASP to the FVSP implying the claim.  
 \end{proof}
We expect that the theorem above still holds for the weighted versions. However, due to the observations made so far, we will focus our studies on the FASP to increase the understanding of 
the localisation of feedback sets.

\section{Main Results} \label{main}

Though the FASP and FVSP are equivalent problems in graph theory and computer sciences the only exact solutions of the FVSP known to us is are  the algorithms of \cite{Razgon} with complexity $\Oc(1,9977^{|V|}|V|^{\Oc(1)})$ and  
\cite{Chen} requiring  $\Oc\big(|E|^44^\Omega\Omega^3\Omega!\big)$, where $\Omega$ denotes the feedback length. 
A detailed comparison to our approach is given in Section \ref{Algo}. For now we just mention our results: 

\begin{theorem}\label{AA} Let $(G, \omega)$ be a graph. Then there is an algorithm testing whether $(G,\omega)$ is resolvable and determing a solution of the  weighted FASP on $G$ in case of resolvability in 
 $\Oc(|V||E|^3)$. 
\end{theorem}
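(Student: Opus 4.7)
My plan proceeds in three conceptual layers: efficient detection of isolated cycles, a greedy resolution step, and controlled iteration. First I would preprocess $G$ by replacing it with $\Gc_o(G)$ (Remark~\ref{GO}) in $\Oc(|E|^2)$, so that every remaining arc lies on at least one directed elementary cycle. I would then run one \emph{sweep}: for each arc $e\in E$, compute the induced subgraph $G_{\el}(e)$ of all elementary cycles through $e$ using the algorithm from Section~\ref{sec:elecyc} at cost $\Oc(|E|^2)$ per arc. One sweep therefore costs $\Oc(|E|^3)$.

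The key notion is that of an \emph{isolated cycle}: an elementary cycle $c$ for which there is some arc $e\in c$ with $G_{\el}(e)=c$, i.e.\ $e$ belongs to no other elementary cycle. Any FAS must contain at least one arc of $c$, and a standard exchange argument (using that no other elementary cycle shares an arc with $c$) shows that an arc $e^{*}\in c$ of minimum weight $\omega(e^{*})$ belongs to some optimal FAS. I would therefore delete $e^{*}$ and add it to the output. Because $c$ was isolated, deletion of $e^{*}$ destroys every elementary cycle through any arc of $c$, so recomputing $\Gc_o$ strips all arcs of $c$ from the cycle-carrying part of the graph in $\Oc(|E|^2)$.

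I would call $(G,\omega)$ \emph{resolvable} when iterating this scan-and-delete procedure succeeds, that is, when every intermediate graph either already is acyclic or still contains an isolated cycle; this gives the recognition test for free during the solver's execution. Each successful iteration removes at least one vertex from the cycle support of the current graph (namely, the vertices of the resolved isolated cycle disappear from $\Gc_o$), which bounds the number of iterations by $\Oc(|V|)$. Combined with the $\Oc(|E|^3)$ cost per sweep, this yields overall runtime $\Oc(|V|\cdot|E|^3)$.

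The principal obstacle is the correctness of the greedy choice: one must formalise, via an exchange lemma exploiting the disjointness of an isolated cycle from the remaining elementary cycle structure, that swapping any FAS-arc in $c$ for the minimum-weight arc of $c$ preserves optimality, and that the reduction is inductively valid — in particular, that after deletion the class of isolated cycles in the reduced graph is exactly what a fresh sweep would detect, so the recognition of resolvability is consistent across iterations. A minor secondary obstacle is the tight bookkeeping needed to achieve the factor $|V|$ rather than $|E|$ in the iteration bound, which rests on the observation above that each resolved isolated cycle eliminates an entire cycle from $\Gc_o$ of the updated graph.
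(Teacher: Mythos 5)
There is a genuine gap, and it sits exactly at the point you flag as ``the principal obstacle.'' Your notion of an isolated cycle --- an elementary cycle $c$ with \emph{some} arc $e$ satisfying $G_{\el}(e)=c$ --- is weaker than the hypothesis your exchange argument actually needs. From $G_{\el}(e)=c$ you only know that $e$ lies on no other cycle; the \emph{other} arcs of $c$ may be shared with arbitrarily many cycles, so ``no other elementary cycle shares an arc with $c$'' does not follow, and swapping a FAS-arc of $c$ for the minimum-weight arc of $c$ can destroy optimality. Concretely, take $c=\{e,f,h\}$ with $\omega(e)=1$, $e$ private, and two further cycles $c_2=\{f,x\}$, $c_3=\{h,y\}$ with $\omega(f)=\omega(h)=2$, $\omega(x)=\omega(y)=10$: the unique optimum is $\{f,h\}$ of weight $4$, while every FAS containing your chosen arc $e^{*}=e$ costs at least $5$. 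This is precisely why the paper does not pick the cheapest arc of one cycle. Its isolating arcs $e\in I(c)$ are defined so that \emph{every} cycle meeting $c$ passes through $e$ (Definition~\ref{def:iso}), the whole bundle $G_I(e)$ of such cycles is resolved by a Min-$s$-$t$-Cut between $e^-$ and $e^+$ (Lemma~\ref{Cycle}), and --- crucially --- the arc $e$ is deleted only when the weight condition $\delta(e)=\Omega\big(G_I(e),\delta_{|G_I(e)}\big)$ of \eqref{eq:resolve} holds, which is what licenses the exchange. Resolvability is defined by that weight equality (interleaved with essential-minor contractions), not by the mere presence of cycles with private arcs; this is why the graph of Figure~\ref{DP} is not resolvable for the given weights but becomes resolvable for $\omega\equiv 1$. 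Your recognition test therefore decides membership in a different (and, with your greedy rule, incorrectly solved) class.

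The runtime bookkeeping also does not go through as stated. Deleting the private arc $e^{*}$ of $c$ kills the cycles through $e^{*}$, but the remaining arcs and vertices of $c$ may still carry other cycles, so they need not vanish from $\Gc_o$ of the updated graph; your claim that each iteration removes a vertex from the cycle support is unsubstantiated, and without it you only get an $\Oc(|E|)$ bound on the number of sweeps, i.e.\ $\Oc(|E|^4)$ overall. The paper obtains the factor $|V|$ from the essential-minor construction (no parallel arcs survive contraction, so the isolated-cycle recomputation is needed at most $|V|$ times), combined with the $\Oc(|E|^3)$ cost of computing $G_I(e)$ for all arcs and the $\Oc(|V||E|\log|V|)$ min-cuts. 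To repair your argument you would need to (i) strengthen the isolation hypothesis to the paper's $I(c)$, (ii) replace ``minimum-weight arc of $c$'' by the min-cut of $G_I(e)$ together with the equality test \eqref{eq:resolve}, and (iii) route the iteration count through the essential minor.
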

Though there are infinitely many resolvable graphs not all graphs are resolvable. 
However, if the graph $(G, \omega)$ is not resolvable, we still can find an exact solution: 

\begin{theorem}\label{B} Let  $(G,\omega)$ be a graph then there is a parameter  $m \in  \N$, $m \leq \dim_{\Z_2}\Lambda^0(G)= |E|-|V| +1$, which can be determined in $\Oc(|E|^3)$ and an algorithm \emph{CUT} with run time 
$\Oc\big(2^{m}|E|^4\log(|V|)\big)$ solving the weighted FASP.
\end{theorem}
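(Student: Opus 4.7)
The plan is to design a Bellman-type recursion on arcs, using the subgraph $G_\el(e)$ of all elementary cycles through an arc $e$, which by the earlier algorithmic contribution of the paper is computable in $\Oc(|E|^2)$. First I would reduce $G$ to $\Gc_o(G)$ via Remark \ref{GO}, then pick an arc $e$ lying on some cycle. Every FAS $\ee$ satisfies either $e \in \ee$ or $e \notin \ee$, producing the decomposition
\[
\Omega(G,\omega) \,=\, \min\bigl\{\,\omega(e)+\Omega(G\setminus e,\omega),\;\; \Omega^{\bar e}(G,\omega)\,\bigr\},
\]
where $\Omega^{\bar e}$ denotes the optimum under the constraint $e \notin \ee$. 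In the second case $\ee$ must cut every elementary cycle through $e$ without using $e$ itself, so the selection is driven by $G_\el(e)$; the complementary part of the problem lives on $G \setminus \Ec(G_\el(e))$ and is handled recursively.

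To evaluate $\Omega^{\bar e}$ efficiently I would introduce a \emph{meta graph} $M_e$ whose vertices are the elementary cycles of $G_\el(e)$ and whose edges record which pairs of these cycles share arcs (other than $e$). The meta graph encodes the combinatorics of simultaneously covering several elementary cycles with a single arc. Let $m_e := \dim_{\Z_2}\Lambda^0(M_e)$. When $M_e$ is a forest ($m_e = 0$) the minimum FAS that breaks every cycle through $e$ is obtained greedily: for each connected component pick the cheapest arc that covers it. When $m_e > 0$ one enumerates the $2^{m_e}$ sign assignments on a fixed $\Z_2$-basis of $\Lambda^0(M_e)$ and resolves a forest subproblem for each assignment. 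Setting $m$ to the maximum $m_e$ occurring in the recursion gives the advertised $2^m$ factor.

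The bound $m \leq |E|-|V|+1 = \dim_{\Z_2}\Lambda^0(G)$ would be proved by exhibiting an injection $\Lambda^0(M_e) \hookrightarrow \Lambda^0(G)$: a $\Z_2$-cycle in $M_e$ is a cyclic sequence of elementary cycles of $G$ in which consecutive cycles share an arc, and summing these elementary cycles over $\Z_2$ produces a nonzero element of $\Lambda^0(G)$, since the shared arcs cancel and any relation in $\Lambda^0(G)$ would force a relation in $\Lambda^0(M_e)$. Computing $m_e$ is then a kernel computation on $\Ic^0(M_e)$ in $\Oc(|E|^3)$, confirming the claim about the cost of determining $m$. For the global runtime, the construction of $G_\el(e)$ and $M_e$ at each node costs $\Oc(|E|^2)$, the enumeration costs $2^m$, summing over the $\Oc(|E|^2)$ arc choices at each recursion level yields $\Oc(2^m|E|^4)$ per level, and selecting $e$ so as to bisect the active strongly connected component keeps the recursion depth at $\Oc(\log|V|)$, giving the total $\Oc(2^m |E|^4\log|V|)$.

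The main obstacle will be the meta-graph reduction: rigorously verifying that enumerating a $\Z_2$-basis of $\Lambda^0(M_e)$ genuinely suffices to recover the optimum, i.e., that beyond a basis assignment the residual optimisation is a forest problem solvable greedily, and that the arc weights—in particular those of arcs shared between $G_\el(e)$ and $G\setminus \Ec(G_\el(e))$—are counted exactly once when the two recursive branches are combined. Once the injection above is established and the weight-additivity lemma is in place, the correctness of CUT follows by induction on $|E|$ and the runtime analysis is then a straightforward accounting over the recursion tree.
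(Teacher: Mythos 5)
Your high-level skeleton (reduce to $\Gc_o(G)$, work with $G_{\el}(e)$, introduce a meta graph recording cycle intersections, and pay a factor $2^{\dim_{\Z_2}\Lambda^0(\text{meta})}$) superficially matches the paper, but the two definitions on which your argument rests do not work as stated. First, your meta graph $M_e$ has one vertex per elementary cycle of $G_{\el}(e)$, and that set can be exponentially large in $|E|$ (the paper's own diamond example has $2^{|E|/5}$ elementary cycles), so $M_e$ cannot even be written down in the claimed time, let alone have its cycle space computed in $\Oc(|E|^3)$. Worse, the claimed injection $\Lambda^0(M_e)\hookrightarrow\Lambda^0(G)$ cannot exist once $\dim_{\Z_2}\Lambda^0(M_e)>|E|-|V|+1$, and your cancellation argument fails whenever an arc is shared by more than two cycles of a meta cycle or when distinct families of elementary cycles have equal $\Z_2$-sums (e.g.\ $c_1+c_2=c_3$). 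The paper avoids both problems by taking the meta vertices to be \emph{arcs} of $G$ (so $|V_{M_c}|\leq|E|$), joined by an ``arc sensitivity'' relation $f\twoheadrightarrow e$; a meta cycle is then a closed sequence of arcs of $G$, which lifts to a connected cycle of $G$ passing through exactly those arcs, and independence is preserved. Second, your forest case is wrong: even for a single elementary cycle through $e$, cutting all of $\Oe(e)$ without using $e$ is a minimum $e^-$\mbox{-}$e^+$ cut in $G_{\el}(e)$ (the paper's Lemma \ref{Cycle}), not a matter of picking one cheapest arc per component, and when several cycles overlap tree-fashion a componentwise greedy choice is not optimal. The paper's tree case is instead a leaf-to-root propagation of \emph{relative weights} $\sigma_{G,c,e,f}$, each step solving a min-cut, and its correctness is not self-evident --- it requires the Bellman identity of Theorem \ref{Bellmann}, which asserts that the locally computed difference $\big(\omega(e)-\Omega(H_{e,f},\sigma_e)\big)-\big(\omega(f)-\Omega(H_{f,e},\sigma_f)\big)$ equals the global difference $\big(\omega(e)+\Omega(G\setminus e,\omega)\big)-\big(\omega(f)+\Omega(G\setminus f,\omega)\big)$. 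That identity is the entire content of the theorem's correctness proof and is absent from your proposal.

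Two further points. Your top-level decomposition $\Omega(G,\omega)=\min\{\omega(e)+\Omega(G\setminus e,\omega),\ \Omega^{\bar e}(G,\omega)\}$ is a branch on membership of $e$, which without the Bellman identity gives no control on the number of branches; the paper instead fixes one cycle $c$, uses the identity to certify that the arc $k\in\Ec(c)$ maximizing the relative-weight difference extends to a global optimum, deletes $k$, and iterates --- no binary branching on arcs occurs, and the $2^m$ factor arises only inside the computation of $\sigma_{G,c,e,f}$ when the meta graph is not a tree (Proposition \ref{siggi}). Finally, the $\log(|V|)$ in the running time does not come from a recursion of depth $\Oc(\log|V|)$ obtained by ``bisecting'' strongly connected components (there is no reason such a bisecting arc exists); it is the $\log(|V|)$ from Dinic's max-flow algorithm with dynamic trees, $\Oc(|V||E|\log(|V|))$ per min-cut, which is invoked $\Oc(|E|^2)$-many times over the course of CUT.
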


Due to Proposition \ref{berge}  the analogous results with respect to the FVSP hold. In particular,  we call a vertex weighted graph $(G,\nu)$ resolvable iff its natural
hypergraph is resolvable, see section \ref{FVSP} again. If we replace every hyperarc of  $\Hc(G)=(\bar V,\bar E)$ with its corresponding bipartite graph then by following Remark \ref{BIP} we have 
$|V^*|=|E|+2|V|$, $|E^*|\leq (\Delta(G)+1)|V|$ for the resulting graph $G^*=(V^*,E^*)$.
The translated results therefore become:   

\begin{theorem}Let $(G, \nu)$, $\nu :V \lo \R^+$  be a vertex weighted graph. Then there is an algorithm testing whether $(G,\nu)$ is resolvable and determing a solution of the  weighted FVSP on $G$ 
in case of resolvability in  $\Oc(\Delta(G)^3|V|^3|E|)$. 
\end{theorem}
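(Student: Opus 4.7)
The plan is to reduce the vertex-weighted FVSP on $(G,\nu)$ to the arc-weighted FASP on a suitably constructed auxiliary digraph and then invoke Theorem \ref{AA}. Concretely, I would first translate $(G,\nu)$ into the natural hypergraph $\Hc(G)=(\bar V,\bar E)$ equipped with the hyperarc weight $\bar\omega(h_v):=\nu(v)$ as in Section \ref{FVSP}. By Proposition \ref{berge}\,iv), a FVS of $G$ corresponds bijectively (and weight-preservingly under this choice of $\bar\omega$) to a FAS of $\Hc(G)$ with respect to Berge cycles, so the two optimisation problems are equivalent.

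Next, I would apply Remark \ref{BIP} to replace every hyperarc $h_v$ of $\Hc(G)$ by its bipartite expansion together with the extra arc $h^*_v$, obtaining a loop-free directed graph $G^*=(V^*,E^*)$ whose directed cycles are in bijection with the Berge cycles of $\Hc(G)$, and whose FASs correspond to FASs of $\Hc(G)$. The weight $\gamma^*$ defined in Remark \ref{BIP} is constant on each bipartite block, so an optimum of the weighted FASP on $(G^*,\gamma^*)$ selects a full block (equivalently, the arc $h^*_v$) per chosen hyperarc, giving a weight-preserving correspondence. By definition, $(G,\nu)$ is called resolvable precisely when $\Hc(G)$ (and hence $(G^*,\gamma^*)$) is resolvable, so testing resolvability and solving the FVSP in the resolvable case both reduce to the corresponding tasks on $(G^*,\gamma^*)$.

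Now I would invoke Theorem \ref{AA} on $(G^*,\gamma^*)$, which yields an algorithm running in time $\Oc(|V^*|\,|E^*|^3)$. Using the bounds from Remark \ref{BIP} stated right before the theorem, namely $|V^*|=|E|+2|V|$ and $|E^*|\le (\Delta(G)+1)|V|$, together with $|E|+2|V|=\Oc(|E|)$ for a connected graph on at least two vertices, this becomes
\begin{equation*}
\Oc\bigl(|V^*|\,|E^*|^3\bigr)=\Oc\bigl(|E|\cdot \Delta(G)^3|V|^3\bigr)=\Oc\bigl(\Delta(G)^3|V|^3|E|\bigr).
\end{equation*}
The construction of $\Hc(G)$ takes $\Oc(\Delta(G)|V|)$ by Proposition \ref{berge}\,ii), and the bipartite expansion in Remark \ref{BIP} adds the same order, so these preprocessing costs are absorbed.

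The only delicate point, and what I expect to require some care, is the weight translation: one must check that the constant extension $\gamma^*(h)\equiv\gamma(h_v)$ on every arc of the bipartite block attached to $v$ and on $h^*_v$ genuinely induces a bijection between optimal FASs of $G^*$ and optimal Berge-FASs of $\Hc(G)$ (so that no optimum of the FASP on $G^*$ cuts a bipartite block in a way that does not correspond to removing a single hyperarc). This follows because any FAS of $G^*$ can be replaced, without increasing its weight, by one that uses only the distinguished arcs $h^*_v$, since all arcs in a block carry identical weight and either all of them together with $h^*_v$ or $h^*_v$ alone suffice to kill the corresponding Berge cycle. With this observation, correctness and the claimed complexity follow from Theorem \ref{AA}.
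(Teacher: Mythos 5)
Your proposal is correct and follows essentially the same route the paper takes: the paper obtains this theorem by translating the FVSP to the FASP on the bipartite expansion $G^*$ of the natural hypergraph via Proposition \ref{berge} and Remark \ref{BIP}, substituting $|V^*|=|E|+2|V|$ and $|E^*|\le(\Delta(G)+1)|V|$ into the $\Oc(|V||E|^3)$ bound of Theorem \ref{AA}. Your explicit verification that an optimal FAS of $(G^*,\gamma^*)$ can be normalised to use only the distinguished arcs $h^*_v$ is a worthwhile elaboration of a step the paper leaves implicit, but it is not a different argument.
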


In case of non-resolvability we have: 

\begin{theorem} Let  $(G,\nu)$ be a graph then there is a parameter  $m \in  \N$, $m \leq (\Delta(G)-1)|V|-|E| +1$, which can be determined in $\Oc(\Delta(G)^3|V|^3)$ and an algorithm \emph{CUT} with run time 
$\Oc\big(2^{m}\Delta(G)^4|V|^4\log(|E|)\big)$ solving the weighted FVSP.
\end{theorem}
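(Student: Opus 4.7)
The plan is to reduce the weighted FVSP on $(G,\nu)$ to a weighted FASP instance and then invoke Theorem \ref{B}. By Proposition \ref{berge}(iv), the FVSP on $(G,\nu)$ is equivalent (in the Berge sense) to the FASP on the natural hypergraph $\Hc(G)$ with hyperarc weight $\bar\omega(h_v):=\nu(v)$. Remark \ref{BIP} then replaces each hyperarc by its bipartite gadget plus the auxiliary arc $h_v^*$, producing an ordinary directed graph $G^*=(V^*,E^*)$ with $|V^*|=|E|+2|V|$ and $|E^*|\leq(\Delta(G)+1)|V|$, and translates $\bar\omega$ into the arc weight $\omega^*\equiv\nu(v)$ on the gadget of $v$. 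Since every arc of a single gadget carries the same weight and $h_v^*$ is the unique bottleneck through that gadget, any optimal FAS of $(G^*,\omega^*)$ may be assumed to consist solely of such bottleneck arcs, and this choice is in weight-preserving bijection with a FVS of $(G,\nu)$.

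The remaining step is to apply Theorem \ref{B} to $(G^*,\omega^*)$. It produces a parameter $m\in\N$ bounded by the cycle-space dimension of $G^*$, hence
\begin{equation*}
m\;\leq\;|E^*|-|V^*|+1\;\leq\;(\Delta(G)+1)|V|-(|E|+2|V|)+1\;=\;(\Delta(G)-1)|V|-|E|+1,
\end{equation*}
and is computable in time $\Oc(|E^*|^3)=\Oc(\Delta(G)^3|V|^3)$. The algorithm \emph{CUT} then returns an optimal FAS of $G^*$ in
\begin{equation*}
\Oc\!\bigl(2^{m}|E^*|^4\log|V^*|\bigr)\;=\;\Oc\!\bigl(2^{m}\Delta(G)^4|V|^4\log(|E|)\bigr),
\end{equation*}
where we used $|E^*|\leq(\Delta(G)+1)|V|$ as well as the connectedness bound $|V^*|=|E|+2|V|\leq 3|E|+2$, absorbing constants into the $\Oc(\cdot)$. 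Translating this FAS back via Remark \ref{BIP} yields an optimal FVS of $(G,\nu)$ at no extra asymptotic cost.

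The only real subtlety lies in the weight-preserving reduction of the first step: one must check that replacing a hyperarc $h_v$ by its bipartite fan plus $h_v^*$ does not create cheaper ways to destroy cycles than cutting a gadget bottleneck. This follows from a direct dominance argument, since every arc of the $v$-gadget carries weight $\nu(v)$, so cutting any $k\geq 1$ fan-arcs of that gadget costs $k\nu(v)\geq\nu(v)$, and the latter, achieved by removing $h_v^*$ alone, already destroys every cycle passing through $v$. Once this equivalence is in place the statement is, structurally, the dualisation of the FASP statement \ref{B} along the FASP/FVSP transfer formalised in Proposition \ref{berge} and Remark \ref{BIP}, so no additional argument is required.
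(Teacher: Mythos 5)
Your proposal is correct and follows essentially the same route as the paper: the paper also obtains this theorem by passing to the natural hypergraph, replacing each hyperarc by its bipartite gadget with bottleneck arc $h_v^*$ as in Remark \ref{BIP}, and then invoking Theorem \ref{B} on the resulting graph $G^*$ with $|V^*|=|E|+2|V|$ and $|E^*|\leq(\Delta(G)+1)|V|$, which yields exactly your bounds. Your explicit dominance argument for the weight-preserving correspondence of feedback sets is a welcome elaboration of what the paper only asserts in Remark \ref{BIP}.
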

Note that there are infinitely many instances where $m=0$ and $m \leq |V|$ on homogenous graphs, i.e., if $\Delta(G)\leq (|E|-1)/|V|+2$.  Thus, by computing the bound or directly $m$ 
we can decide whether the algorithm of \cite{Razgon} or our approach will be faster for a given instance and choose the better alternative. 
Moreover, the feedback length $\Omega$ can not assumed to be constantly bounded. Thus,  the algorithm of \cite {Chen} actually possesses a complexity of $\Oc\big(4^{|V|}|V|^3|E|^4|V|!\big)$, 
which is much slower than our approach.
Finally, we want to mention that all theorems are based on the following crucial fact:

\begin{theorem}
 Let $G=(V,E)$ be a graph and $e \in E$. Then there exist algorithms which compute:
 \begin{enumerate}
 \item[i)] The subgraph $G_{\el}(e) \subseteq G$ induced by all elementary  cycles $c \in \Oe(G)$ with $e\in \Ec(c)$  in $\Oc(|E|^2)$. 
 \item[ii)] The subgraph $G_{\si}(e)\subseteq G$ induced by all simple cycles $c \in \Os(G)$ with $e\in \Ec(c)$  in $\Oc(|E|^4)$. 
\end{enumerate}
\end{theorem}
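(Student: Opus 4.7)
The plan is to fix $e = (u,v) \in E$ and reduce both computations to sequences of reachability checks in auxiliary subgraphs of $G$.

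For part $i)$, the key observation is that an arc $f = (x,y)$ distinct from $e$ lies in $G_{\el}(e)$ iff $f$ belongs to some elementary $v$-to-$u$ path in $G \setminus e$, since pre-pending $e$ to such a path yields an elementary cycle containing both $e$ and $f$. I would first perform a forward depth-first search from $v$ and a reverse depth-first search from $u$ in $G \setminus e$ to prune all arcs that can not possibly lie on such a path; this preprocessing runs in $\Oc(|E|)$. For each surviving candidate arc $f$, I would then perform a single tailored DFS that attempts to extend the partial configuration $v \to x \to y \to u$ into an elementary cycle, using bookkeeping to maintain vertex-disjointness of the two halves. Each test takes $\Oc(|E|)$ and there are $\Oc(|E|)$ candidate arcs, giving the claimed $\Oc(|E|^2)$ bound.

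For part $ii)$, I would exploit Example~\ref{exa:elementary} and the general fact that every simple cycle decomposes into arc-disjoint elementary cycles whose arc sets form a single closed walk, so that any two consecutive elementary cycles in the decomposition must share at least one vertex. Starting with $G_{\si}(e) := G_{\el}(e)$, the algorithm iteratively extends $G_{\si}(e)$: for every arc $f \notin G_{\si}(e)$ incident to a vertex of $G_{\si}(e)$, invoke part $i)$ on the residual graph $G \setminus \Ec(G_{\si}(e))$ with arc $f$ reinserted and absorb any elementary cycle through $f$ that reconnects with $G_{\si}(e)$. Each round adds at least one arc, so there are $\Oc(|E|)$ rounds, each making $\Oc(|E|)$ calls to the $\Oc(|E|^2)$ algorithm of $i)$; the total cost is $\Oc(|E|^4)$.

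The main obstacle is the per-arc vertex-disjointness test in part $i)$. In full generality, the directed $2$-disjoint-paths problem (Fortune--Hopcroft--Wyllie) is NP-hard, so any $\Oc(|E|)$ per-arc routine has to exploit the special structure imposed by the fixed endpoints $u, v$ together with the distinguished arc $f$. I would expect the crux of the argument to be a structural lemma showing that an obstruction to disjointness, whenever present, is witnessed by a short certificate uncoverable in one traversal, so that the DFS can decide existence of the required elementary cycle through $\{e,f\}$ without explicit path enumeration.
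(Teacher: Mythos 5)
Your proposal for part $i)$ breaks down at exactly the step you flag yourself: you reduce membership of an arc $f=(x,y)$ in $G_{\el}(e)$ to the existence of two vertex-disjoint directed paths $v\to x$ and $y\to u$, note that this is an instance of the NP-complete directed two-disjoint-paths problem, and then defer the decisive point to an unstated ``structural lemma'' that would make each test run in $\Oc(|E|)$. That lemma is the entire content of the theorem, so the proposal is not a proof. The paper never poses a pairwise disjointness question at all. Its criterion is local and one-sided: for each arc $f$ it deletes every arc entering $\head{f}$ \emph{except} $f$ itself and checks by a single depth-first search whether $u$ still reaches $\head{f}$; every arc failing this test is removed, and $G_{\el}(u,v)$ is then read off as the set of arcs lying on directed $u$--$v$ paths of the pruned graph. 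This is $|E|$ reachability queries at $\Oc(|E|)$ each, hence $\Oc(|E|^2)$ overall, with no per-pair disjoint-path computation anywhere. If you want to salvage your route you must either prove your conjectured lemma or switch to a filtering criterion of this kind.

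Part $ii)$ also diverges from the paper, which attaches two auxiliary arcs $(x,u)$ and $(v,y)$ with $x,y\notin V$, passes to the line graph $L(G^*)$ (which has $\Oc(|E|)$ vertices and $\Oc(|E|^2)$ arcs), and applies part $i)$ there, using the bijection between elementary paths of the line graph and simple paths of $G$; the bound $\Oc(|E|^4)$ then falls out immediately. Your iterative absorption scheme is not only heavier but has a completeness gap: you search for new elementary cycles only in the residual graph $G\setminus\Ec\big(G_{\si}(e)\big)$, yet an elementary cycle occurring in the decomposition of some simple cycle through $e$ may reuse arcs that were already placed into $G_{\si}(e)$ by a \emph{different} simple cycle; such a cycle is not a cycle of your residual graph, so its remaining arcs can be missed, and nothing in your argument rules this out. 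Moreover, part $i)$ returns only the induced subgraph of all elementary cycles through $f$, not the individual cycles, so the instruction to ``absorb any elementary cycle through $f$ that reconnects with $G_{\si}(e)$'' presupposes a cycle-enumeration step that is not available within the claimed time bound.
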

A proof and a more precise version of the statement is given in Theorem \ref{TCycle}. 
 
\section{The Essential Minor}
\label{sec:essential}

In this section we introduce the notion of the essential minor $(C,\delta)$ of
given graph $(G,\omega)$, which is a simple, weighted digraph that decodes the
topological structure of $G$ in a compact way and is therefore very helpful. Even though there are some crucial differences
we want to mention that similar concepts were already introduced in \cite{berge}.

\begin{definition}[parallel arcs]
Let $G$ be a graph and  $e=(u,v) \in E$ then we denote with  
\begin{eqnarray}
F^+(e)&:=&\li\{f \in E \mi \tail{f}=u\,, \head{f} = v \re\} \text{ and} \nonumber\\ 
F^-(e)&:=&\li\{f \in E \mi \tail{f}=v\,, \head{f} = u \re\} \nonumber 
\end{eqnarray}
the sets of all parallel and anti-parallel arcs and set $F(e) = F^+(e)\cup
F^-(e)$.  
\end{definition}

\begin{figure}[t!]
 \centering
 \input{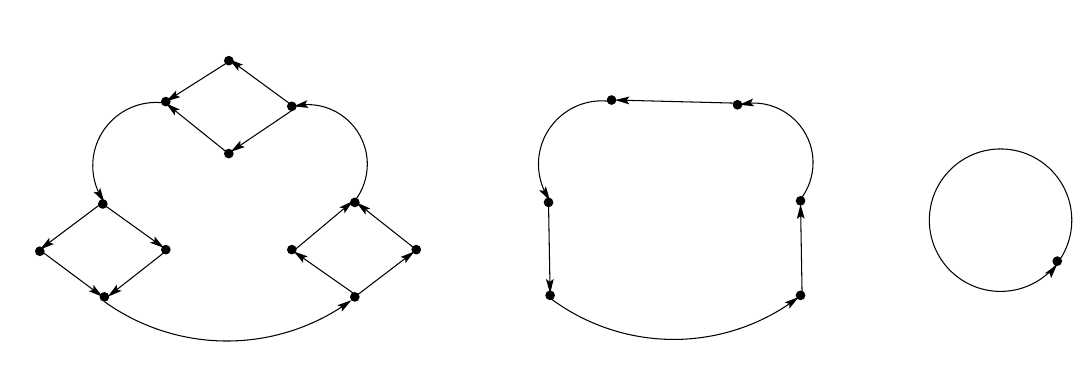_t}
 \caption{The construction of $(C,\delta)$, with respect to $(G,\omega)$. }
 \label{F1}
\end{figure}
We recall that for a given set $A$ and an equivalence relation $\sim$ on $A \times A$ the quotient $A/_\sim$ is given by the set of all equivalence classes $[a]_\sim= \li\{x \in A \mi x \sim a \re\}$.

\begin{definition}[contracted graph] \label{def:collapse} 
Let $G =(V,E)$ be a graph and $ u,v \in V$. An equivalence relation
$\sim_{u,v}$ on $V$ is defined by $$ x \sim_{u,v} y \qquad \eq \quad x=y \quad
\text{or} \quad x,y \in \{u,v\}.$$ The equivalence class of $x\in V$ is denoted
by $[x]_{\sim{u,v}}$ and $V/_{\sim_{u,v}}$ gives the quotient of $V$ with
respect to $\sim_{u,v}$. A multiset $E/_{\sim{u,v}}$ is defined by 
\begin{align*}
E/_{\sim{u,v}} = \Big\{ (p,q) \in V/_{\sim{u,v}}\times V/_{\sim{u,v}} \mi & \exists (x,y)\in E:[x]_{\sim{u,v}}=p \,, [y]_{\sim{u,v}}=q \Big\}\,. 
\end{align*}
 The \emph{contracted graph} of $G$ with respect to $e\in E$ is
defined as the topological minor   
\begin{equation}
 G/e:= \big(V/_{\sim_{e^+,e^-}},(E\setminus F(e))/_{\sim_{e^+,e^-}} \big).
\end{equation}
If $e,f \in E$ then one can check easily that by identifying $e,f$ with their images in $G/e$, $G/f$ respectively we have $(G/e)/f= (G/f)/e$. 
Thus, the definition does not depend on the order of the contracted edges. Hence, if $G^\prime \subseteq G$ is a subgraph then 
$G/G^\prime := G/\Ec(G^\prime)$  can be defined by contracting $\Ec(G^\prime)$ in arbitrary order.    
\end{definition}

\begin{definition}[essential minor]
\label{def:ecg}
Let $G =(V,E, \omega)$ be a  positively weighted graph.  The equivalence relation
$\sim_{\Gamma}$ on $E$ is defined by $e\sim_{\Gamma}f$ if and only if $e=f$ or  there exists a
directed, branch-point-free path $(w_0,w_1,\dots,w_n,w_{n+1})$ with
$\{(w_0,w_1),(w_n,w_{n+1})\}=\{e,f\}$ and $n\in\N^+$, i.e., for $i=1,\dots,n$ it
holds that $\deg^-(w_i)=\deg^+(w_i)=1$. We represent an equivalence class
$[e]_{\sim\Gamma}$ by an arc $(u,v)$, where $u$ and $v$ coincide with the
start and endpoint of the longest directed, branch-point-free path running
through $e$.  The positively weighted graph
$$
\big(G/_\Gamma,\omega/_\Gamma)\big):= \big(V/_\Gamma, E/_\Gamma, \omega/_\Gamma)\big)
$$
is defined by 
\begin{enumerate}[i)]
	\item $V/_\Gamma:=\{v\in V \mi \deg^-(v)> 1 \vee \deg^+(v)> 1\}
		$, 
	\item $E/_\Gamma:=\li \{ (u,v)\in V/_\Gamma\times V/_\Gamma
		\mi  \exists \,\, [e]/_{\sim\Gamma} \in E_{/\sim_\Gamma} \,\,
		\text{represented by } (u,v)  \re \}$ 
	\item $\omega/_\Gamma:E/_\Gamma\lo \N^+$ with
		$\omega/_\Gamma(e):=\min_{e'\in [e]/_{\sim\Gamma}}
		\omega(e') $.
\end{enumerate}
Let $\sim_{\Phi}$ be an equivalence relation on $E$ with $e \sim_{\Phi} f \eq
e,f \in F^+(e).$  For $(G,\omega)$ the positively weighted graph 
$$(G/_\Phi,\omega/_\Phi):=\big(V/_\Phi,E/_\Phi,\omega/_\Phi\big)$$ is defined by 
\begin{enumerate}[i)]
\item $V/_\Phi :=V$.
	\item $E/_\Phi:=E/_{\sim_{\Phi}}$, where we identify each equivalence
		class $[e]_{\sim\Phi}$ with an arbitrary representative of $F^+(e)$.
	\item $\omega/_\Phi:E/_\Phi \lo \N^+ \,,\,\, \omega/_\Phi(e):=
		\sum_{e^\prime \in [e]_{\sim\Phi}} \omega(e^\prime)\,.$
\end{enumerate}
Starting with $G_0:=G$ for $k >0$ we define 
$$(G_k,\omega_k):=(G'_{k-1}/_\Phi,\omega'_{k-1}/_\Phi)\, \text{ with } (G_{k-1}',\omega_{k-1}'):=(G_{k-1}/_\Gamma,\omega_{k-1}/_\Gamma)\,.$$ 
The weighted graph $(G_K,\omega_K)$
with $G_{K}=G_{K+1}$, $K\geq 0$, is called the \emph{essential minor} of $G$
and is denoted by $(C,\delta):=(V_C,E_C,\delta)$. 
\end{definition}

\begin{example}\label{exa:essential}
In Figure~\ref{F1} the construction of the essential minor  is illustrated.  
Furthermore, for a graph $G$ with $D$ diamonds  connected in a cycle as in the
example we obtain $\deg(G) = 3$ and $\big|\Oe(G)\big|= 2^D=2^{|E|/5}=2^{|V|/4}$. 
In contrast the essential minor $C$ of such a graph satisfies $|\Oe(C)|=1$.
Hence, even though the number of cycles in $G$ increases exponentially in $|E|$ and
$|V|$ by adding further diamonds, the number of cycles in $C$ remains constant 
equal to 1 while the weight $\xi$ decodes the number of cycles of the original graph $G$.
\end{example}

Since the following results are quite canonically their simple but technical proofs are given in Appendix A. 

\begin{proposition}
\label{cor:ecg_equiv}
Let $G=(V,E,\omega)$ be a positively weighted graph with essential minor
$(C,\omega_C)$ and let $\ee \in \Pc(E)$ and $\ee_C$ be the image of $\ee$ in $(C,\delta)$. Then 
$$\ee\in \Sc(G,\omega) \eq \ee_C \in \Sc(C,\delta)\,.$$
In particular $\Omega(G,\omega)=\Omega(C,\omega_C)$.
\end{proposition}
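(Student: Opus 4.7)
The plan is to prove the proposition by induction on $K$, the number of iterations in Definition \ref{def:ecg} defining $(C,\delta)=(G_K,\omega_K)$. This reduces matters to two elementary base cases: a single $\Gamma$-step $(G,\omega)\mapsto (G/_\Gamma,\omega/_\Gamma)$ and a single $\Phi$-step $(G,\omega)\mapsto (G/_\Phi,\omega/_\Phi)$. In each case I will first set up a cycle correspondence between the graph and its quotient, deduce the FAS correspondence from it, and finally verify that the prescribed quotient weight makes the identity of feedback lengths hold on optimal solutions.

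For the $\Gamma$-step the key observation is that every interior vertex $w_i$ of a maximal branch-point-free path $(w_0,\dots,w_{n+1})$ satisfies $\deg^-(w_i)=\deg^+(w_i)=1$, which forces any cycle using a single arc of the path to traverse the entire path. Hence the natural quotient map $\pi_\Gamma: E \to E/_\Gamma$ induces a bijection between $\Oe(G)$ and $\Oe(G/_\Gamma)$ in which a class $[e]_{\sim_\Gamma}$ is hit by a cycle of $G$ iff the corresponding quotient arc is hit by the image cycle. It follows directly that $\ee$ is a FAS of $G$ iff $\pi_\Gamma(\ee)$ is a FAS of $G/_\Gamma$. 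For the weight identity I will use that any $\ee\in\Sc(G,\omega)$ is minimal, and argue that $\ee$ can be replaced without loss by a FAS containing exactly the minimum-weight representative of each $\sim_\Gamma$-class it meets; combined with $\omega/_\Gamma(e)=\min_{e'\in[e]_{\sim_\Gamma}}\omega(e')$ this yields $\Omega_{G,\omega}(\ee)=\Omega_{G/_\Gamma,\omega/_\Gamma}(\pi_\Gamma(\ee))$.

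For the $\Phi$-step I consider a parallel class $F^+(e)$. Every cycle of $G$ using some arc $e'\in F^+(e)$ together with a return path $\pi$ admits $|F^+(e)|$ siblings obtained by substituting $e'$ with the remaining parallel arcs, and the whole family collapses to a single cycle of $G/_\Phi$. To destroy every member of the family, a FAS must cut either every arc of $F^+(e)$ or at least one arc of $\pi$, so any minimum-weight $\ee\in\Sc(G,\omega)$ satisfies $F^+(e)\subseteq\ee$ or $F^+(e)\cap\ee=\emptyset$ for every parallel class. Combined with $\omega/_\Phi(e)=\sum_{e'\in F^+(e)}\omega(e')$, this produces both the FAS correspondence and the weight identity $\Omega_{G,\omega}(\ee)=\Omega_{G/_\Phi,\omega/_\Phi}(\ee_\Phi)$.

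The main technical hurdle I anticipate is the implication $\ee\in\Sc(G,\omega)\Rightarrow \ee_C\in\Sc(C,\delta)$: because the image map $\Pc(E)\to\Pc(E_C)$ is many-to-one, one has to rule out that an optimal FAS of $G$ carries redundant mass inside a $\sim_\Gamma$-class or splits a $\sim_\Phi$-class non-trivially, since either pathology would destroy the weight identity. Both have to be excluded by a minimality argument producing, from any violating $\ee$, a strictly cheaper (or equally cheap but structurally compatible) FAS, contradicting $\ee\in\Sc(G,\omega)$. Once these structural constraints are secured, iterating the two base cases $K$ times yields the equivalence $\ee\in\Sc(G,\omega)\Leftrightarrow\ee_C\in\Sc(C,\delta)$ together with $\Omega(G,\omega)=\Omega(C,\delta)$.
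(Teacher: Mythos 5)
Your proposal follows essentially the same route as the paper: the paper also reduces to one $\Gamma$-step and one $\Phi$-step (iterated $K$ times), and its two auxiliary lemmas are precisely the structural constraints you single out as the main hurdle — that an optimal FAS meets each $\sim_\Gamma$-class in at most one arc, necessarily a weight-minimizer, and meets each $\sim_\Phi$-class either entirely or not at all, both established by the exchange/minimality arguments you describe. The plan is sound and matches the paper's proof in both decomposition and key ideas.
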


Proposition \ref{cor:ecg_equiv} states that solving the FASP for the essential minor 
is equivalent to solving the FASP on the original graph. Even though it is possible that $(C,\delta)=(G,\omega)$ the construction might yields 
an a~priori problem size reduction in many cases as in Example \ref{exa:essential}. 

\begin{proposition}
\label{pro:construct_essential}
Let $G =(V,E,\omega)$ be a finite, connected, directed,  weighted
multigraph then we can construct $(C,\delta)$ in time $\Oc(|E|^2)$.
Furthermore, there is an algorithm with run time $\Oc(|E|^2)$ which constructs a
solution $\ee \in S(G,\omega)$ given a solution $\ee_C \in \Sc(C,\delta)$. 
\end{proposition}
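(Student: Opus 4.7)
The plan is to decompose the construction into two alternating primitives, namely the contractions induced by $\sim_\Gamma$ and $\sim_\Phi$, bound the cost of a single application of each, and control the number of outer iterations.

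First I would implement one application of $/_\Gamma$ as follows. Store $G$ in an adjacency list, mark every $v\in V$ with $\deg^-(v)=\deg^+(v)=1$, and for each such vertex walk forwards along heads and backwards along tails until a branch point is reached. Each maximal branch-point-free directed path is thereby scanned only once, its minimum-weight arc is recorded, and the whole path is replaced by a single arc carrying that weight. One pass costs $\Oc(|V|+|E|)=\Oc(|E|)$. For $/_\Phi$, I would bucket-sort the remaining arcs by $(\tail{e},\head{e})$ and merge consecutive equal keys while summing their weights, again in $\Oc(|E|)$.

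Next I would bound the outer loop. Each non-trivial application of $/_\Gamma$ removes at least one interior vertex, while each non-trivial application of $/_\Phi$ removes at least one arc; hence the potential $|V_k|+|E_k|$ strictly decreases in every non-trivial round of the alternation, so the sequence $(G_k,\omega_k)$ stabilises after $K=\Oc(|E|)$ rounds with $(G_K,\omega_K)=(C,\delta)$. Together with the per-round cost this yields the total runtime $\Oc(|E|^2)$.

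For the reconstruction I would maintain, during the contraction process, a bookkeeping structure $\Pi$ (e.g.\ a union-find over $E$ augmented with a history bit that records, for each merge, whether it was of $\Gamma$-type or $\Phi$-type) that assigns to every $e_C\in E_C$ the full preimage set $[e_C]\subseteq E$ together with its internal tree of merges. Given a solution $\ee_C\in\Sc(C,\delta)$, I lift it by traversing $\Pi$ from each $e_C\in\ee_C$: at every $\Phi$-node I insert all preimage arcs into $\ee$, and at every $\Gamma$-node I insert a single minimum-weight arc from the corresponding branch-point-free path. By construction $\sum_{e\in\ee}\omega(e)=\sum_{e_C\in\ee_C}\delta(e_C)$, and $G\setminus\ee$ is acyclic because any cycle of $G$ projects to a cycle of $C$; combined with Proposition \ref{cor:ecg_equiv} this shows $\ee\in\Sc(G,\omega)$. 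The lift itself visits each arc of $E$ at most once and therefore costs $\Oc(|E|)$; the dominating $\Oc(|E|^2)$ is already absorbed in setting up $\Pi$ during the construction of $(C,\delta)$.

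The main obstacle I expect is the iteration count: since collapsing a parallel bundle can create fresh degree-one vertices and collapsing a path can create fresh parallel bundles, one must rule out a super-linear number of alternations. The monovariant $|V_k|+|E_k|$ described above resolves this, but care is needed to check that each round that is \emph{not} the fixed-point round actually produces at least one contraction; a convenient way is to perform each round only after scanning a work list of candidate vertices/arcs inherited from the previous round, guaranteeing progress whenever the loop is entered. Everything else amounts to routine adjacency-list bookkeeping.
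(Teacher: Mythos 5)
Your proposal follows essentially the same route as the paper: alternate the $\Gamma$- and $\Phi$-contractions, bound the cost of each pass and the number of alternations, and carry a bookkeeping map from contracted arcs back to their preimages (the paper's $\kappa:E_C\lo\Pc(E)$ plays exactly the role of your structure $\Pi$), so that a solution $\ee_C$ lifts to $\ee$ by taking all arcs of a $\Phi$-class but only a single minimising arc of a $\Gamma$-class, in accordance with Lemmas \ref{lem:ecg_simple} and \ref{lem:ecg_paths}. The one substantive difference is in the accounting: the paper implements a $\Gamma$-pass by iterating over $V$ at cost $\Oc(|V|^2)$ and its appendix proof therefore only concludes $\Oc(|V|^3+|V||E|^2)\subseteq\Oc(|V||E|^2)$, whereas your path-walking implementation makes each pass cost $\Oc(|E|)$ and, combined with your monovariant $|V_k|+|E_k|$ (or the paper's sharper observation that every non-terminal round must contract at least one vertex, so $K\leq |V|$), actually delivers the $\Oc(|E|^2)$ bound claimed in the statement of the proposition. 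So your version is, if anything, the tighter of the two; the only point to be careful about in the lift is that the acyclicity of $G\setminus\ee$ rests on the fact that a cycle of $G$ traversing a contracted branch-point-free path must use \emph{every} arc of that path, hence in particular the chosen minimiser --- which is precisely what Proposition \ref{cor:ecg_equiv} encapsulates and what you correctly invoke.
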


\begin{remark} 
Since for given $\ee_C \in \Sc(C,\delta)$ the construction of some $\ee_G \in\FF(\ee_C)$ is easy to compute (Proposition \ref{pro:construct_essential}) Proposition \ref{cor:ecg_equiv} states that 
it suffices to solve a the weighted FASP for the essential minor instead of 
the original graph. 
As a consequence multigraphs do not need to be considered and the number of elementary cycles of the essential minor can be drastically reduced, see for instance Example \ref{exa:essential}. 
\end{remark}

\section{Subgraphs of Elementary and Simple Cycles }
\label{sec:elecyc}
There are several approaches for generating the set $\Oe(G)$ of all elementary cycles of a graph, 
see \cite{Mateti:1976} for an overview. Since there can be an exponential 
number of cycles in a graph, generating algorithms have an exponential worst
case run time.  The best algorithms available today are the ones of \cite{Tarjan} and \cite{Johnson1975} solving the problem in $\Oc\big(|\Oe(G)|(|V|+|E|)\big)$. 
Of course counting all cycles might be less expansive than generating them. However, by reducing to the Hamiltonian cycle problem, see for instance \cite{Arora},  
counting all cycles is a NP-hard problem.
For our concerns, and supposedly in many other situations, the generation of
all cycles is not necessary, but the knowledge of the arc set of all cycles including a common arc suffices.  In
the following an algorithm for determing the smallest subgraph
$G_{\el}(e)\subseteq G$ which contains all elementary cycles
that include the arc $e$ is given. 

\begin{definition}\label{theta}
Let $G=(V,E)$ be a graph and $u,v \in V$ we denote with with $P(u,v),P_{\el}(u,v)$, $P_{\si}(u,v)$ 
the set of all directed, elementary or simple paths from $u$ to $v$ respectively.
For an arc $e \in E$ we let   
\begin{align*}
 \Oe(e):= &\li\{ c\in \Oe(G) \mi c \cap e \not = \emptyset\re\} = \{e\} \cup \Pe(e^-,e^+) \,\,\,\text{and}\,\,\,\\
 \Os(e):=& \li\{ c\in \Os(G) \mi c \cap e \not = \emptyset\re\}= \{e\} \cup \Ps(e^-,e^+)
\end{align*}
be the set of all elementary and simple cycles running through $e$. If $\ee \in \Pc(E)$ then we set 
$\Oe(\ee) := \cup_{e\in\ee}\Oe(e)$, $\Os(\ee):= \cup_{e\in\ee}\Os(e)$.
Moreover, we denote with $G(u,v):=\Gc(P(u,v)), G_{\el}(u,v) := \Gc\big(P_{\el}(u,v)\big)$, $ G_{\si}(u,v):= \Gc\big(P_{\si}(u,v)\big)$ the by the corresponding paths induced graphs and with  
$G(e):= \Gc(P(u,v))$, $G_{\el}(e):=\Gc\big(P_{\el}(e^-,e^+) \cup\{e\}\big)$, 
$G_{\si}(e):= \Gc\big(P_{\si}(e^-,e^+)\cup\{e\}\big)$ by the corresponding cycles induced graphs. Moreover, $P^0(u,v),\Pe^0(u,v)$, $\Ps^0(u,v)$, $O(e)$, $O_{\el}^0(e)$, $O_{\si}^0(u,v)$, 
$G^0(e)$, $G_{\el}^0(e)$, $G_{\si}^0(e)$ shall denote the connected (and not necessarily directed) analagons of the introduced sets and graphs.  
\end{definition}

Note that $P_{\el}(u,v) \subseteq P_{\si}(u,v)$ and therefore $\Oe(e)  \subseteq \Os(e)$. 
Moreover, the graphs $G(u,v),G(e)$ can be determined in $\Oc(|E|^2)$ by applying a depht first search technique similar to 
Remark \ref{GO}. In the other cases we observe: 

\begin{theo}\label{TCycle}
 Let $G=(V,E)$ be a graph and  $u,v \in V$, $e \in E$. Then there exist algorithms which compute:
 \begin{enumerate}
 \item[i)]  The graphs $G_{\el}(u,v)$,$G_{\el}^0(u,v)$, $G_{\el}(e)$,$G_{\el}^0(e)$  in $\Oc(|E|^2)$. 
 \item[ii)] The graphs $G_{\si}(u,v)$, $G_{\si}^0(u,v)$, $G_{\si}(e)$, $G_{\si}^0(e)$  in $\Oc(|E|^4)$.
\end{enumerate}
\end{theo}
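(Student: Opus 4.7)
The plan is to reduce all eight claims to the construction of two primary subgraphs, $G_{\el}(u,v)$ for part i) and $G_{\si}(u,v)$ for part ii), and to build each of these by testing arc-by-arc for membership within the prescribed time budget. The cycle versions follow from $G_{\el}(e) = G_{\el}(e^+,e^-) \cup \{e\}$ run on $G \setminus e$, and analogously for $G_{\si}(e)$. The connected (undirected) variants $G_{\el}^0, G_{\si}^0$ are handled by an analogous algorithm that works on the underlying undirected graph, using undirected BFS/DFS in place of directed traversals throughout. Both reductions preserve the asymptotic running time.

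For $G_{\el}(u,v)$, I would iterate over candidate arcs with an $\Oc(|E|)$ per-arc test, giving a total of $\Oc(|E|^2)$. After an initial $\Oc(|E|)$ pruning to $G(u,v)$ via forward BFS from $u$ and backward BFS from $v$, for each remaining arc $f = (x,y)$ the test decides whether $f$ lies on some elementary $u$-$v$ path. The intended witness is the existence of vertex-disjoint directed paths $P_1 : u \to x$ and $P_2 : y \to v$ in $G \setminus f$, which then assemble into $P_1 \cdot f \cdot P_2 \in P_{\el}(u,v)$. I would realise the test by a DFS that builds the two paths incrementally, maintaining a vertex-exclusion list and backtracking on conflicts, and retaining $f$ precisely when the DFS succeeds.

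For $G_{\si}(u,v)$, arc-disjointness replaces vertex-disjointness and admits a max-flow treatment. An arc $f = (x,y)$ belongs to $G_{\si}(u,v)$ iff there are arc-disjoint directed paths $u \to x$ and $y \to v$ in $G \setminus f$, equivalently iff a suitably defined maximum arc-flow has value at least $2$ (Menger's theorem for arcs). A Ford--Fulkerson-style augmenting-path routine resolves each test in $\Oc(|V| \cdot |E|)$, and iterating over all $|E|$ candidate arcs yields the claimed $\Oc(|E|^4)$ budget, which comfortably absorbs the reductions to the cycle and undirected variants.

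The main obstacle is the correctness of the linear-time per-arc test in the elementary case. The naive reachability conditions (a $u$-to-$x$ walk avoiding $y$ and a $y$-to-$v$ walk avoiding $x$ in $G \setminus f$) are necessary but not \emph{a priori} sufficient, because two such walks may still share internal vertices and so fail to combine into an elementary path. Overcoming this will rely on the structural preprocessing already available in this paper, namely the reduction to $\Gc_o(G)$ from Remark \ref{GO} and the passage to the essential minor from Section \ref{sec:essential}, which collapses branch-point-free paths and localises the obstructions that might prevent a vertex-disjoint assembly. Within this reduced structure I expect to prove that the naive conditions become sufficient, so that the DFS correctly certifies membership in linear time. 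With the elementary case in hand, the simple case and the full eightfold set of reductions follow routinely.
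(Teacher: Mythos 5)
There is a genuine gap at the heart of part i), and you have in effect flagged it yourself. Your per-arc test asks whether, for $f=(x,y)$, there exist vertex-disjoint directed paths $u\to x$ and $y\to v$; this is precisely the two-disjoint-paths instance of the directed subgraph homeomorphism problem that the paper itself isolates in Section \ref{HOMP} (cf.\ \cite{Fortune:1980}), and your ``DFS with a vertex-exclusion list and backtracking on conflicts'' is an exponential-time procedure for it, not an $\Oc(|E|)$ one. The hope that passing to $\Gc_o(G)$ (Remark \ref{GO}) or to the essential minor makes the naive one-sided reachability conditions sufficient is left entirely unsubstantiated, and it is not what the paper does: Algorithm \ref{Gel} uses a different local criterion, deleting an arc $f$ exactly when $P(u,f^-)=\emptyset$ in $G\setminus\big(N_E^-(f^-)\setminus\{f\}\big)$, i.e.\ when $u$ cannot reach $f^-$ once every in-arc of $f^-$ other than $f$ is removed --- a single reachability query per arc --- and then argues, via the observation that a non-elementary $u$--$v$ walk must re-enter some vertex, that every surviving arc lies on an elementary path. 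Whatever one makes of that sufficiency argument, it is the entire content of the paper's proof of i), and your proposal replaces it with an acknowledged open step.

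Part ii) also fails as written. Two arc-disjoint paths with the \emph{prescribed pairing} $u\to x$ and $y\to v$ cannot be certified by one max-flow computation: Menger's theorem with super-source $\{u,y\}$ and super-sink $\{x,v\}$ only yields two arc-disjoint paths from the source set to the sink set, and these may realize the wrong pairing ($u\to v$ together with $y\to x$); the correctly paired version is the weak-$2$-linkage problem, again NP-hard in general digraphs, so no flow argument of this shape can close the step. The paper avoids the issue entirely by reducing ii) to i): it appends a pendant arc $u^*$ entering $u$ and a pendant arc $v^*$ leaving $v$, passes to the line graph $L(G^*)$ (with $|V_L|=|E|$ and $|E_L|\le|E|^2$), and uses the correspondence between simple paths of $G$ and elementary paths of $L(G)$, so that $\Ec\big(G_{\si}(u,v)\big)=\Vc\big(G_{\el}(u^*,v^*)\big)\setminus\{u^*,v^*\}$ and the $\Oc(|E|^4)$ bound is just part i) applied to the line graph. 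That reduction is the idea your proposal is missing, and it is what makes the simple-cycle case follow ``for free'' from the elementary one.
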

 \begin{proof} Let $p \in P_{\el}(u,v)$ then no vertex $w \in \Vc(p)$ is passed twice of $p$. Thus, for every $f \in \Ec(p)$ there is a path $q \in P(u,f^-)$ with respect to 
 $G \setminus \big(N_E^-(f^-)\setminus\{f\}\big)$. 
Vice versa if $p \in P(u,v)$ is such that $\Ec(p) \not \subseteq G_{\el}(u,v)$ then has to be a vertex $w \in \Vc(p)$, which is passed at least twice by $p$ implying that there is $f \in \Ec(p)$ such that 
\begin{equation}\label{DELf}
 P(u,f^-) = \emptyset \quad \text{with respect to} \quad  G \setminus \big(N_E^-(f^-)\setminus\{f\}\big)\,.
\end{equation}
Thus, by setting $G':= G \setminus\li\{f \in E\mi f\,\, \text{fulfills \eqref{DELf}}\re\}$, every path $p \in P(u,v)$ with $\Gc(p) \not \subseteq G_{\el}(u,v)$
is interrupted in $G'$. Hence,
$G'(u,v)= \Gc(P(u,v))$ with respect to $G'$ coincides with $G_{\el}(u,v)$. Algorithm \ref{Gel} formalizes this procedure and runs  in $\Oc(|E|^2)$ if $G$ is stored in an adjacency list, enabling us to 
test whether $P(u,v)=\emptyset$ in $\Oc(|E|)$. The other cases of $i)$ can now be solved by 
replacing $u,v$ with $e$ and directed paths or cycles with connected ones.

To show $ii)$ we add two arcs $u^*=(x,u)$, $v^*=(v,y)$, $x,y \not \in V$ denote with $G^*$ the resulting graph and consider the line graph $L(G^*)=(V_L^*,E_L^*)$.
We recall that $|V_L| =|E|$, $|E_L|\leq |E^2|$ and apply the fact that  the elementary paths of $L(G)$ are in $1$ to $1$ correspondence 
to the simple paths of $G$ and therefore $\Vc\big(G_{\el}(u^*,v^*)\big) \setminus \{u^*,v^*\} = \Ec\big(G_{\si}(u,v)\big)$. Hence $ii)$ follows analogue to $i)$.  
\qed \end{proof}

\begin{figure}[t!]
  \begin{algorithm}[H]
	\KwIn{ $G=(V,E)$, $e=(u,v)\in E$ }
	\KwOut{$G_{\el}(e)$}
	$G \leftarrow G(u,v)$\;
	$E^* \leftarrow  \emptyset$\;
        \For{$f \in E'$}{
        \If{$P(u,f^-) = \emptyset$ w.r.t. $ G' \setminus \big(N_E^-(f^-)\setminus\{f\}\big)$}{
        $E^* \leftarrow E^*\cup\{f\}$\;}}
        $G_{\el}(u,v)\leftarrow G(u,v)$ w.r.t. $G\setminus E^*$\;
	\Return $G_{\el}(u,v)$
	\caption{The induced subgraph $G_{\el}(u,v)$.\label{Gel}}
\end{algorithm}
\end{figure}

\begin{remark} Note that if $(C,\delta)$ is the essential minor of $(G,\omega)$. Then the treatment of  ``parallel'' paths is avoided by the essential minor construction. Thus, 
we expect that if $C$ is significant smaller than $G$ the run time performance will increase drastically.
\end{remark}

\section{Isolated Cycles}
\label{sec:iso}
Of course the question arises whether a solution of the FASP on $G_{\el}(e)$ can be determined independently of the 
rest of the graph. 
The notion of \emph{isolated cycles} is our starting point of investigations in this manner and as it will turn out it is    a very helpful concept of
answering this question.

We recall that a Min-$\mbox{s-t}$-Cut with source $s=u$ and sink $t=v$ is given by a set $\ee \subseteq E$ such that 
$P(u,v) = \emptyset$ in $G\setminus \ee$ and $\Omega_{G,\omega}(\ee) = \sum_{e\in\ee}\omega(e)$ is minimized.

\begin{lemma}\label{Cycle} Let $(G,\omega)$ be a weighted graph and $e \in E$. Then  there is an  algorithm, which determines 
 a solution $\ee \in \Sc\big(G_{\el}(e),\omega\big)$  of the FASP on  $(G_{\el}(e),\omega)$ in $\Oc\big(|V||E|\log(|V|)\big)$, where we slightly abused notion by still denoting $\omega$ for the the restriction of $\omega$ 
 to $G_{\el}(e)$.  
\end{lemma}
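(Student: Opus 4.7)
The plan is to reduce the weighted FASP on $G_{\el}(e)$ to a single minimum $s$-$t$ cut computation with $s := e^-$ and $t := e^+$. The key structural observation is that every elementary cycle of $G_{\el}(e)$ through $e$ decomposes uniquely as $\{e\} \cup p$ with $p \in P_{\el}(e^-,e^+)$. Consequently, a candidate set $\ee \subseteq \Ec(G_{\el}(e))$ breaks every cycle of $G_{\el}(e)$ that uses $e$ if and only if either $e \in \ee$ or $\ee \setminus \{e\}$ separates $e^-$ from $e^+$ in $G_{\el}(e) \setminus e$.

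Exploiting this dichotomy, I would compute two candidates and return the cheaper. The first candidate is simply $\{e\}$, costing $\omega(e)$. The second is a minimum weight $e^-$-$e^+$ arc cut $\ee^*$ of the weighted digraph $(G_{\el}(e) \setminus e, \omega)$; by max-flow min-cut duality its weight equals the value of a maximum $e^-$-$e^+$ flow. Applying a standard preflow-push or dynamic-tree max-flow algorithm of Goldberg--Tarjan/Sleator--Tarjan type to $G_{\el}(e)$ delivers $\ee^*$ in time $\Oc(|V||E|\log|V|)$, matching the claimed bound, and the output FAS is whichever of $\{e\}$ and $\ee^*$ is lighter.

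For the optimality argument I would observe that any FAS of $G_{\el}(e)$ either contains $e$ (so its weight is at least $\omega(e)$), or does not, in which case the characterisation above forces it to be an $e^-$-$e^+$ cut in $G_{\el}(e) \setminus e$ and hence at least as heavy as $\ee^*$. Thus the returned weight lower-bounds the weight of any feasible FAS, and since the returned set is itself a FAS by the ``if'' direction of the dichotomy, it is optimal. The total runtime is dominated by the single max flow invocation, as the construction of $G_{\el}(e)$ itself takes $\Oc(|E|^2)$ by Theorem \ref{TCycle} and is absorbed into the bound.

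The step I expect to be the main obstacle is verifying that each candidate genuinely renders \emph{all} of $G_{\el}(e)$ acyclic, rather than only the cycles through $e$. By construction every arc of $G_{\el}(e)$ lies on some elementary cycle through $e$, but $G_{\el}(e) \setminus e$ can in principle still contain ``secondary'' directed cycles that do not themselves pass through $e$, arising from pieces of different elementary cycles glued together. Showing that such secondary cycles are automatically destroyed by the cut $\ee^*$---or, failing that, can be absorbed into the argument by a local modification of the min-cut instance---is the delicate point, and it is precisely this difficulty that appears to motivate the subsequent notion of \emph{isolated cycles}, where such secondary obstructions are ruled out by definition.
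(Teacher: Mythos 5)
Your proposal is essentially the paper's own proof: the paper likewise observes that a solution of the FASP on $G_{\el}(e)$ is the cheaper of $\{e\}$ and a minimum $e^-$-$e^+$ cut, obtains the cut via the Min-Cut-Max-Flow theorem, and cites Dinic's algorithm sped up with dynamic trees to reach $\Oc\big(|V||E|\log(|V|)\big)$. The ``secondary cycles'' obstacle you flag in your final paragraph --- that $G_{\el}(e)\setminus e$ may still contain directed cycles avoiding $e$ which neither candidate destroys --- is not addressed in the paper's proof either, which simply asserts the two-candidate dichotomy; on that point your write-up is, if anything, more explicit about the remaining gap than the original.
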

\begin{proof} Observe that by interpreting $\omega$ as a capacity function on $G_{\el}(e)$ a solution of the FASP on $\ee$ is given by $\{e\}$ or a Min-$\mbox{s-t}$-Cut $\ee$ with source $s=e^-$ and sink $t=e^+$. The option with 
the smaller weight is chosen.
Due to the famous Min-Cut-Max-Flow Theorem a Min-$\mbox{s-t}$-Cut can be determined by
solving a Max-Flow problem with respect to $\omega$ and $s=e^-$, $t=e^+$.
The algorithm of \cite{Dinic1970} solves the Max-Flow problem for arbitrary weights in time
$\Oc(|V|^2|E|)$ and can be speedend up to $\Oc\big(|V||E|\log(|V|)\big)$, \cite{Dinitz}  by using the data structure of dynamic trees. Thus the statement is proven.
\qed \end{proof}

\begin{remark} Note that if $(C,\delta)$ is the essential minor of $(G,\omega)$ then the absence of ``parallel'' paths might speeds up the time required to determine a Min-$\mbox{s-t}$-Cut drastically. 
Moreover, the Max-Flow-Problem is very well understood, yielding many 
alternatives to the algorithm of \cite{Dinic1970} and providing faster solutions in special cases, see \cite{Dinitz} and \cite{Korte} for an overview.
\end{remark}

\begin{figure}[t!]
 \centering
  \input{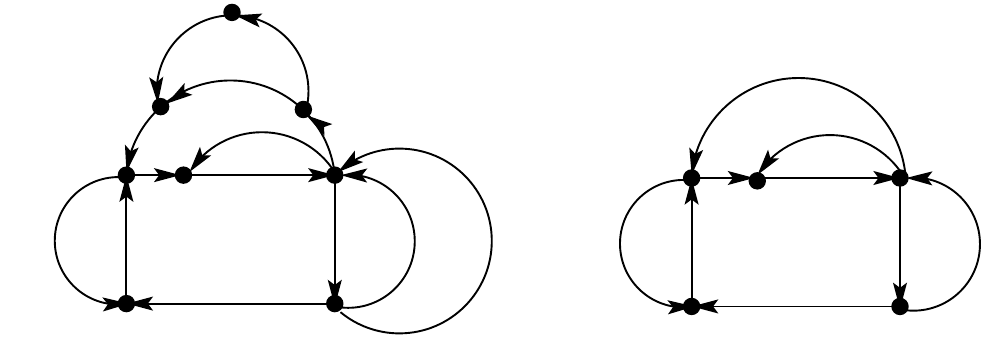_t}
 \caption{A graph $(G,\omega)$  and its essential minor $(C,\delta)$.  The numbers at the arcs of $C$ indicate the value of 
 $\delta$. $\{e_1,f_1\}$, $\{e_2,f_3\} $, $\{e_3,f_3\}$, $\{e_4,f_4\}$, $\{e_5,e_6,e_7,f_5,f_4\}$, and $\{e_5,e_8,e_9,e_7,f_5,f_4\}$  are isolated cycles of $G$ and $I(G)= \{f_1,f_3,f_4\}$. } 
 \label{F0}
 \end{figure}

\begin{definition}
\label{def:iso}
Let $G$ be  a graph and $c \in \Oe(G)$ then we denote with
$$I(c): =\li\{ e \in \Ec(c) \mi  c\cap c^\prime  = \emptyset \,, \,\forall c^\prime
\in \Oe(G)\setminus \Oe\big(F^+(e)\big) \re\}  $$ 
the set of {\em isolating arcs} of $c$, i.e., if $e \in I(c)$ then $c$ has
empty intersection with every cycle $c^\prime$ that does not contain $e$ or an
parallel arc of $e$.
For an arc $e\in E$ or set of arcs $\ee \subseteq E$ we set 
$$O_I(e):=\li\{c \in \Oe(G) \mi e \in I(c)\re\}\,, \quad O_I(\ee)=\bigcup_{e\in\ee}O_I(e)$$
and $G_I(e):= \Gc\big(O_I(F^+(e))\big)$.  
\end{definition}

\begin{remark}\label{hierarchy}
Note, that the sets of isolated cycles possess a \emph{flat hierarchy} in the
following sense. If $e,f\in E$, $e\neq f$, with  $O_I(e)\not = O_I(f)$ then
$O_I(e)\cap O_I(f) = \emptyset $. If vice versa $O_I(e)=O_I(f)$  then by
definition we obtain $\Oe(F^+(e))=\Oe(F^+(f))$.
\end{remark}

\begin{remark}
Let $c \in O_I(G)$ be an isolated cycle and $I(c)$ the set of all isolating arcs of $c$.
If we contract $I(c)$ then the resulting graph  $G/I(c)$ fulfills 
$$\li<c',c''\re> = 0 \,, \quad \text{for all} \,\,\, c' \in O_I(e)/I(c), c'' \in  \big(O_{\el}(G/I(c)\big)\setminus \big(O_I(e)/I(c)\big)\,, $$
where $\li<\cdot,\cdot\re>$ denotes the standard scalar product on $\R^n$, $n =
\big|E/I(c)\big|$.  Thus, by detecting isolated cycles we obtain an
orthogonal splitting 
$$\Lambda\big(G/I(c)\big) = \spann\big( O_I(e)/I(c)\big)
\oplus \spann\big(O_{\el}(G/I(c)\big)\setminus \big(O_I(e)/I(c)\big)\,.  $$
Such a splitting is certainly helpful whenever one wants to find a basis of
$\Lambda(G)$, e.g.,  a minimal cycle basis of $\spann\li(O_I(e)\re)$ can be
extended to a minimal basis of $\Lambda(G)$. 
\end{remark}

Consider an isolating arc $e$ of a graph $(G,\omega)$ or its essential minor $(C,\delta)$. The isolated cycles $O_I(e)$ running through $e$ can be cut either by removing
the arc set  $\ee_0= F^+(e)$ or another feedback set $\ee_1$ of $G_I(e)$. By definition the arc set 
$\ee_0$ cuts at least the cycles in $O_I(\ee_0)$. By Remark \ref{hierarchy} 
the feedback set $\ee_1$ cuts only the isolated cycles $O_I(\ee_0)$ or is given by  $\ee_1=F^+(f)$ of another isolating arc $f \in O_I(e)$ with 
$\Oe(\ee_0) =\Oe(\ee_1)$. Thus, if the weight of $\ee_0$  equals the weight 
of a  solution of the FASP on $G_I(e)$ then there is a solution $\ee$ of the FASP on $G$  with $\ee_0 \subseteq \ee$. 
The following definition reflects this idea more formally.

\begin{figure}[t!]
  \begin{algorithm}[H]
	\KwIn{ $e=(u,v)\in E$ }
	\KwOut{$G_I(e)$}
	$G \leftarrow G_{\el}(e)$\;
	$E^*\leftarrow \li\{f \in E_{\el} \,|\, \Oe(f)\not = \emptyset \,\,\, \text{in}\,\,\ G\setminus e\re\}$\;
	$G_I(e) \leftarrow G_{\el}(e)$ with respect to $G\setminus E^*$ \;
	\Return $G_I( e )$
	\caption{The isolated cycles of a graph.\label{alg:iso}}
\end{algorithm}
\end{figure}
\begin{definition}
\label{def:resolve}
Let $(G,\omega)$ be a graph. We define a maximal list of graphs
$(G_0,\omega_0),\dots,(G_k,\omega_k)$ with $(G_0,\omega_0) =
(G,\omega)$, $(G_i,\omega_i)\neq (G_{i+1},\omega_{i+1})\,, \,\forall\, i\in[1:k-1]$ as follows.  
Let $(C_i, \delta_i)$, with $C_i=(V_{C_i},E_{C_i})$, be the essential minor of $(G_i,\omega_i)$ and 
$E^*_i\subseteq I(C_i)$ be a maximal subset of pairwise different isolating arcs of $C_i$ such that 
$\forall e \in  E^*_i:$
\begin{equation}\label{eq:resolve}
\delta(e) = \Omega\big(G_I(e),\delta_{|G_I(e)}\big)\,, \quad G_{\el}(e)\not = G_{\el}(f) \,, \,\text{whenever}  \,\,\, e\not = f\,.
\end{equation}
Then the weighted graph $(G_{i+1}, \omega_{i+1})$ is given by  
$$G_{i+1}:=(V_i,E_i)= \Gc_o\big(E_{C_i}\setminus E_i^*\big)\,, \quad \omega:= \delta_{i|E_i'}$$   
where $\delta_{i+1}:= \delta_{i|E_{i+1}}$ denotes the restriction of $\delta_i$ to $E_{i+1}$. 
If $G_{k+1}=G_k$ for some $k \in \N$ then 
$(S,\tau):=\big((V_S,E_S),\tau\big)=(C_k,\delta_k)$, $k \in \N$ is called the \emph{resolved graph} of $G$, which we shortly denote with 
$(S,\tau)=\big(S(G),\tau(\omega)\big)$.  A graph $(G,\omega)$ is called \emph{resolvable} if and only if $S = \emptyset$. 
\end{definition}
\begin{example}
 Note that $(C,\delta)$ in Figure \ref{F0} is resolvable, while $(G,\omega)$ in Figure \ref{DP} is not resolvable, but becomes resolvable for uniform weight $\omega \equiv 1$. 
\end{example}

The construction has an immediate consequence. 
\begin{theo}\label{res} Let $(G, \omega)$ be a graph. Then there is an algorithm testing whether $(G,\omega)$ is resolvable and determing a solution of the  weighted FASP on $G$ in case of resolvability in 
 $\Oc(|V||E|^3)$. 
\end{theo}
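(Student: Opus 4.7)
The plan is to establish three things: that the iterative procedure defining $(S,\tau)$ terminates in at most $|E|$ rounds; that each round can be carried out in $\Oc(|V||E|^2\log|V|)$; and that whenever $S=\emptyset$, pulling $\bigcup_i E_i^*$ back through the composition of essential-minor maps yields an optimal FAS of $(G,\omega)$. Termination is immediate from Definition~\ref{def:resolve}: the construction only continues if some isolating arc satisfies \eqref{eq:resolve}, in which case $|E_{i+1}|<|E_i|$, so after at most $|E|$ reductions the sequence stabilises on some $(S,\tau)$, whereupon the algorithm reports ``resolvable'' exactly when $S=\emptyset$ and otherwise returns ``non-resolvable''.

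For the cost of a single round I would chain the ingredients already built in the paper. Proposition~\ref{pro:construct_essential} delivers $(C_i,\delta_i)$ in $\Oc(|E|^2)$. For each candidate arc $e\in E_{C_i}$, Theorem~\ref{TCycle} produces $G_{\el}(e)$ in $\Oc(|E|^2)$, Algorithm~\ref{alg:iso} distills $G_I(e)$ from $G_{\el}(e)$ within the same bound by pruning every $f$ whose deletion of $e$ still leaves a cycle through $f$, and Lemma~\ref{Cycle} evaluates $\Omega\bigl(G_I(e),\delta_{i|G_I(e)}\bigr)$ via a Min-$s$-$t$-Cut with $s=e^-$, $t=e^+$ in $\Oc(|V||E|\log|V|)$, which is exactly what is needed to check condition \eqref{eq:resolve}. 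Running over all $\Oc(|E|)$ arcs of $C_i$ and then selecting a maximal family $E_i^*$ of isolating arcs with pairwise distinct $G_{\el}(\cdot)$ costs $\Oc(|V||E|^2\log|V|)$ per round, and multiplying by the at most $|E|$ rounds gives the announced $\Oc(|V||E|^3)$ bound (the $\log|V|$ factor is absorbed into the Max-Flow subroutine).

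The genuine obstacle is the correctness argument, which I would handle by induction on the iteration index. The flat hierarchy of Remark~\ref{hierarchy} guarantees that for distinct isolating arcs $e,f$ of $C_i$ either $O_I(e)=O_I(f)$, a case excluded by the clause $G_{\el}(e)\neq G_{\el}(f)$ of \eqref{eq:resolve}, or $O_I(e)\cap O_I(f)=\emptyset$; hence the classes selected in $E_i^*$ are pairwise arc-disjoint and may be cut independently. The equality $\delta_i(e)=\Omega\bigl(G_I(e),\delta_{i|G_I(e)}\bigr)$ certifies that the parallel bundle $F^+(e)$ is an optimal local cut for $O_I(e)$, while Proposition~\ref{cor:ecg_equiv} transports local optima between $(G_i,\omega_i)$ and its essential minor $(C_i,\delta_i)$. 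The inductive step removes the handled classes and re-applies the same reasoning to $(G_{i+1},\omega_{i+1})$; the terminal condition $S=\emptyset$ asserts that every cycle of the original graph has been absorbed into some such class, so concatenating the local cuts and pulling them back through the essential-minor maps (which is routine by Proposition~\ref{pro:construct_essential}) yields a global optimal FAS. The delicate point, and the one that deserves the most care, is verifying that local optimality composes into global optimality without slack; this reduces precisely to the arc-disjointness recorded in Remark~\ref{hierarchy}, which is exactly why the definition of isolating arc was set up that way.
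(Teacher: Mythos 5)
Your overall architecture matches the paper's proof: chain Proposition~\ref{pro:construct_essential}, Theorem~\ref{TCycle} together with Algorithm~\ref{alg:iso}, and Lemma~\ref{Cycle}, iterate until the sequence stabilises, and report resolvability according to whether $S=\emptyset$. The correctness discussion via Remark~\ref{hierarchy} and Proposition~\ref{cor:ecg_equiv} is also consistent with the paper (which in fact relegates correctness to the informal argument preceding Definition~\ref{def:resolve} and proves only the complexity bound in Theorem~\ref{res}).

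The genuine gap is in the iteration count, and it is not a cosmetic one. You bound the number of rounds by $|E|$ via the trivial observation that $|E_{i+1}|<|E_i|$. Your own per-round cost is $\Oc\bigl(|E|^3+|V||E|^2\log|V|\bigr)$ (the $|E|^3$ term, which you drop, comes from computing $G_{\el}(e)$ and $G_I(e)$ for all $\Oc(|E|)$ arcs at $\Oc(|E|^2)$ each and dominates on dense graphs), so multiplying by $|E|$ rounds yields roughly $\Oc\bigl(|E|^4\log|E|\bigr)$, which exceeds the claimed $\Oc(|V||E|^3)$ whenever $|E|>|V|$ --- i.e.\ for essentially every interesting instance. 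The paper's accounting rests on a sharper observation that you are missing: because each intermediate graph $(C_i,\delta_i)$ is an essential minor, it contains no parallel arcs and no non-branching paths, and the isolated-cycle computation therefore has to be repeated at most $|V|$ times, not $|E|$ times. That factor of $|V|$ is precisely where the $|V|$ in $\Oc(|V||E|^3)$ comes from, and without it the stated bound does not follow. Separately, your parenthetical claim that ``the $\log|V|$ factor is absorbed into the Max-Flow subroutine'' is backwards: the $\log|V|$ originates in the Max-Flow subroutine of Lemma~\ref{Cycle}, and asserting that it is absorbed there does not make it disappear from the product; if you want to suppress it you must argue (as the paper implicitly does) that the min-cut cost is dominated by the $\Oc(|E|^2)$ subgraph computations on the instances at hand.
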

\begin{proof} Due to Proposition \ref{pro:construct_essential} the construction of the essential minor $(C_i,\delta_i)$ can be achieved in $\Oc(|E|^2)$ for every $1 \leq i \leq k$.  
Since checking whether $\Oe(f) \not = \emptyset $ can be done in $\Oc\big(|E|\big)$ by storing $G$ in an adjacency list and using depht first search to figure out whether $P(f^+,f^-)\not = \emptyset$ 
the Algorithm \ref{alg:iso} 
computes the set $G_I(e)$ in $\Oc(|E|^2)$ and therefore computing $G_I(e)$ for all arcs requires at most $\Oc(|E|^3)$ computation steps.  
Furthermore, a solution of the FASP on $G_I(e)$ can be computed due to 
Lemma \ref{Cycle} in $\Oc\big(|E|^2\big)$. Due to the fact that during the construction of $(S,\tau)$ no parallel arcs appear, we have to recompute the isolated cycles at most $|V|$ times. 
Thus,   $(S,\tau)$ can be determined in $\Oc\big(|V||E|^3\big)$. Furthermore, we can use the backtracking procedure of 
Proposition \ref{pro:construct_essential} to compute a solution of the FASP in $\Oc(|V||E|^2)$ once $(S,\tau)$ is known. \qed 
\end{proof}

\begin{figure}[t!]
 \centering
 \input{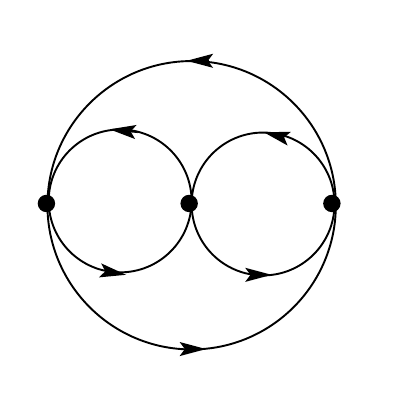_t}
 \caption{The smallest graph without isolated cycles.}
 \label{6}
\end{figure}
Observe that Theorem \ref{res} was already stated in section \ref{main} as Theorem \ref{AA}.
However, the result leads to the question : What are fast (linear, quadratic time) checkable conditions a graph $(G,\omega)$ has to satisfy 
to be resolvable. Though, we can easily construct resolvable graphs as $(C,\delta)$ in Figure \ref{F0} or modified versions of $(C,\delta)$ by adding additional isolated cycles a characterization of 
resolvable graphs is still open.  A better understanding of the non-resolvable graphs might help to solve that problem. 
In order to investigate these graphs the class of graphs without isolated cycles at all, seems to  be interesting. Therefore, the next result might be a good starting point for further 
studies. 

\begin{proposition} The directed clique $D_3=(V,E)$ in Figure~\ref{6} is the smallest graph with $\Oe(G) \not = \emptyset$ and  $O_I(G) = \emptyset$, i.e. any non-isomorphic graph $G^\prime =(V^\prime, E^\prime)$ with 
$O(G^\prime)\not = \emptyset$ 
and $O_I(G^\prime)=\emptyset$
satisfies 
$|V^\prime |+ |E^\prime|>  |V|+|E|$\,. 
\end{proposition}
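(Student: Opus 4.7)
The plan is in two parts: verifying $D_3$ has the stated property, then excluding every other connected loop-free multi-digraph $G'$ with $|V'|+|E'|\leq 9$.

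For Part~1, I would enumerate $\Oe(D_3)$: the three $2$-cycles $c_{ij}=\{(i,j),(j,i)\}$ and the two directed $3$-cycles $c_{+}=\{(1,2),(2,3),(3,1)\}$, $c_{-}=\{(1,3),(3,2),(2,1)\}$. Since $D_3$ is simple, $F^{+}(e)=\{e\}$, so $e\in c$ lies in $I(c)$ iff every elementary cycle $c'\neq c$ meeting $c$ contains $e$. For each cycle and each of its arcs I would exhibit a witness cycle that meets $c$ in a different arc and avoids $e$: $c_{-}$ meets $c_{12}$ in $(2,1)$ without using $(1,2)$; $c_{13}$ meets $c_{+}$ in $(3,1)$ without using $(1,2)$. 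The $S_3$-symmetry of $D_3$ reduces the verification to these two representative cases, yielding $O_I(D_3)=\emptyset$.

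For Part~2, let $G'=(V',E')\not\cong D_3$ be as hypothesised; I would split on $|V'|$. The cases $|V'|\leq 1$ (no cycles by loop-freeness) and $|V'|=2$ (every elementary cycle is a $2$-cycle $\{e,e'\}$ between the two vertices, and every other cycle uses an arc parallel to each of $e$ and $e'$, so both lie vacuously in $I(\{e,e'\})$) are immediate. The case $|V'|\geq 5$ is excluded by the size bound: connectedness together with the presence of a directed cycle forces $|E'|\geq|V'|\geq 5$, so $|V'|+|E'|\geq 10$, contradicting the hypothesis.

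For $|V'|\in\{3,4\}$ I would restrict to $\Gc_o(G')$ (Remark~\ref{GO}), which does not increase $|V|+|E|$ and preserves $\Oe$ and $O_I$, and then pass to the essential minor $(C,\delta)$ of Definition~\ref{def:ecg}. The crucial preservation claim is that both $\Gamma$-contraction and $\Phi$-identification preserve the existence of isolating arcs in both directions: all arcs of a branch-point-free path share the isolating status of their $\Gamma$-image in $C$, and the parallel class $F^{+}(e)$ in $G'$ corresponds bijectively with the singleton class $\{\bar e\}$ in $C$, so bypass cycles lift and descend. In $C$ every vertex is both a branch point and on a cycle, so $\deg\geq 3$ and $2|E_C|\geq 3|V_C|$; combined with $|V_C|+|E_C|\leq|V'|+|E'|\leq 9$ this forces $|V_C|\leq 3$. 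The only remaining configurations are $|V_C|=3$ with $|E_C|\in\{5,6\}$: if $|E_C|=6$ then $C\cong D_3$, which together with the size bound forces $G'\cong D_3$, contradicting the hypothesis; if $|E_C|=5$ then $C$ is $D_3$ minus one arc $(i,j)$, and the anti-parallel arc $(j,i)$ isolates the $2$-cycle $c_{ij}$ because every other elementary cycle meeting $c_{ij}$ would have required the missing $(i,j)$ to close. The main obstacle is the preservation claim for the essential-minor construction: one must verify both that a bypass cycle in $G'$ descends to one in $C$ and that every bypass in $C$ lifts to $G'$, while accommodating the degenerate case where $C$ collapses to the empty graph (this happens only for pure cycles, which are themselves single-cycle graphs and hence trivially isolated). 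Once this is in hand, the remaining combinatorics is a short inspection of at most a handful of elementary cycles per case.
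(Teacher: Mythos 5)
Your overall route is genuinely different from the paper's. The paper argues globally: it first shows that $\Oe(G')\neq\emptyset$ together with $O_I(G')=\emptyset$ forces at least three elementary cycles whose span in the cycle space has dimension at least $3$, whence $|E'|-|V'|+1\geq 3$, i.e.\ $|E'|\geq |V'|+2$; this kills every case $|V'|\geq 4$ in one line ($|V'|+|E'|\geq 10$), and the remaining case $|V'|=3$ is settled by identifying $G'$ with a subgraph of $D_3$ and checking that deleting any arc of $D_3$ creates an isolated cycle. Your Part~1 (explicit verification that $O_I(D_3)=\emptyset$ via the $S_3$-symmetry) is correct and is actually more careful than the paper, which takes this for granted. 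Your cases $|V'|\leq 2$ and $|V'|\geq 5$ are also correct. What your approach buys is that it avoids the linear-independence argument entirely and, via the essential minor, handles multigraphs on $3$ or $4$ vertices explicitly, a point the paper glosses over when it identifies ``any smaller graph'' with a subgraph of the simple graph $D_3$.

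However, the step you yourself flag as the main obstacle is a genuine gap, and it is the load-bearing step for exactly the cases ($|V'|\in\{3,4\}$) that the easy counting does not dispose of. The claim that $O_I(G')=\emptyset$ implies $O_I(C)=\emptyset$ for the essential minor $C$ is not established by the observation that $\sim_\Gamma$-equivalent arcs lie on the same elementary cycles and that $F^+(e)$ collapses to a single arc. The difficulty is that the iterated $\Gamma/\Phi$ construction does \emph{not} induce a bijection between $\Oe(G')$ and $\Oe(C)$: for instance, two elementary cycles meeting only in a common branch point both contract to loops at that vertex, and $\Phi$ then identifies these loops, so distinct cycles of $G'$ with empty arc intersection can acquire the same image in $C$. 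Consequently, lifting a witness cycle $\bar c'\in\Oe(C)\setminus\Oe(F^+(\bar e))$ back to $G'$ need not produce a cycle avoiding $F^+(e)$, nor one that still meets the lift of $\bar c$ in an \emph{arc} rather than merely a vertex. Your argument would need a lemma of the form ``an isolating arc of $C$ lifts to an isolating arc of $G'$'' proved against the actual quotient maps, including the degenerate collapses; the paper's Lemmas in Appendix~A concern feedback sets, not isolating arcs, so they cannot be cited for this. Until that lemma is supplied, the conclusion in the cases $|V'|\in\{3,4\}$ does not follow. A cheaper repair is available: replace the essential-minor detour by the paper's dimension bound $|E'|\geq|V'|+2$, which settles $|V'|=4$ outright and reduces $|V'|=3$ to a finite check of $3$-vertex multigraphs with $|E'|\in\{5,6\}$.
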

\begin{proof} If $G^\prime$ is a graph with $O(G^\prime)\not = \emptyset$, $O_I(G^\prime)= \emptyset$ then $G^\prime$ possesses at least three vertices and  $\big|O(G^\prime)\big| \geq 3$ has to hold.  We claim 
that there are at least three linear independent cycles. Indeed if $d_1,d_2,d_3 \in O(G^\prime)$ with $\lambda d_1 + \mu  d_2 +\eta d_3=0$, $ \lambda, \mu, \eta \in \Z$ then due to Remark \ref{rem:OG} we know that 
$d_1,d_2,d_3 \in \{0,1\}^{|E|}$ and no elementary cycle is subset of another. So w.l.o.g. we can assume that $ d_1 = d_2 + d_3$, which contradicts that all cycles are elementary. Thus, $\dim_\Z \Lambda(G^\prime)= |E^\prime|-|V^\prime|+1 \geq \dim_\Z O(G^\prime)  \geq 3$. This observation implies that $|E^\prime|\geq |E|$ whenever $|V^\prime| >3$. 
Since $D_3$ is an directed clique we can identify any smaller graph $G^\prime$ with a subgraph of $D_3$. It is easy to see that deletion of any arc $e \in E$ produces an isolated cycle in $D_3$. 
For instance if we delete $e_4$ then $c_3$ will be isolated, if we delete $e_5$ then $c_2$ will be isolated and so on. Hence  $G^\prime \cong D_3$. 
\qed \end{proof}

If $G$ is a non-resolvable graph then one can think about different  methods to
solve the FASP of the resolved graph $(S,\tau)$. One possibility is discussed
in the next section.

\section{The Bellman Decomposition}\label{Bell}

In this section we formulate an solution of the FASP based on a dynamic programming technique. Such an approach can be applied to optimization problems whenever there is a decomposition 
of the problem into 
subproblems which satisfy  the Bellman principle, i.e., every optimal solution consists only of optimal subsolutions. To motivate the following definitions we first consider an example. 
\begin{example}  Consider the graph $(G,\omega)$ in Figure~\ref{DP}. 
If we want to know, which arc of $c_3$ we have to cut for an optimal solution 
then this depends on the cycles $c_1,c_2$. The benefit of cutting $e_1$ instead of $e_2$ or $e_3$ is that we do not
have to cut $c_1$ anymore which costs at least $2$. Thus we introduce a new weight $\sigma$, 
which equals $\omega$ on 
$E\setminus\{e_1\}$ and is set to $\sigma(e_1)= \omega(e_1)-2 =3$ on $e_1$. Since  no other cycles than $c_3$ are cut by $d_3$ the weight of $d_3$ remains unchanged. 
Now we consider $H_{e_3,d_3} = G_{\el}(e_3)\setminus d_3$ and $H_{d_3,e_3} = G_{\el}(d_3)\setminus e_3$ and compute 
\begin{align}
\big(\omega(e_3) - \Omega(H_{e_3,d_3},\sigma)\big) -&  \big(\omega(d_3) - \Omega(H_{d_3,e_3},\sigma)\big)  =  \nonumber\\
(\omega(e_3) -3) -& (\omega(d_3)-0)  =  1 \label{sub}
\end{align}
The best solution, which contains $e_3$ is  $\{e_3,d_1\}$ 
and the best solution  containing $d_3$ is $\{d_3,e_1\}$ and we observe that  
$$\big(\omega(e_3)+\omega(d_1)\big) - \big(\omega(d_3)+\omega(e_1)\big) = 8 -7 = 1 \,.$$
Thus, the difference of the solutions coincides with the difference of the subproblems in \eqref{sub} with respect to the new weight $\sigma$.    
\end{example}

\begin{figure}[t!]
 \centering
 \input{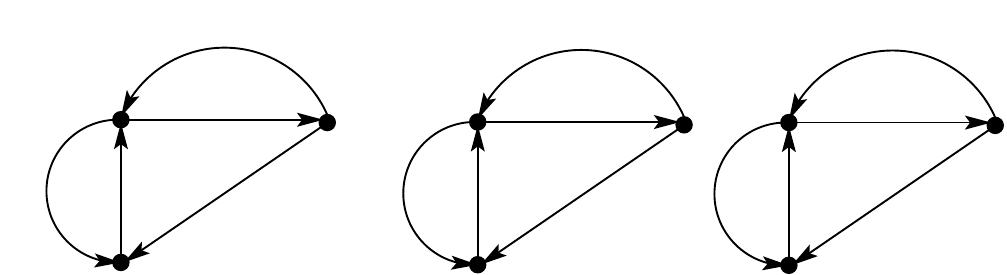_t}
 \caption{ The computation of the relative weights with respect to $c_3$. }
 \label{DP}
\end{figure}

We  need to introduce several concepts to show that this observation remains true in general.

\begin{definition}[arc sensitivity] Let $(G,\omega)$ be a graph and $e,f \in E$, $e \not = f$ and $G_{\el}(e)=(V_e,E_e)$ be given. Then we say that $f$ is \emph{arc sensitive} to $e$ with respect to the FASP, 
denoted by $f \twoheadrightarrow e$, 
if and only if  
$$ f \in E_e \quad \text{and} \quad \Oe(f)\not = \emptyset \,\,\,\text{w.r.t.}\,\, G\setminus e \,.$$
We denote with $\mathcal{N}_{\twoheadrightarrow}(e) =\li\{f \in E \mi f\twoheadrightarrow e\re\}$   the set of all arcs, which are sensitive to $e$.
\end{definition}
Note that the arcs $f$ of an isolated cycle $c \in O_I(e)$ can not be sensitive to $e$. Thus, arc sensitivity detects arcs, which might prevent us from solving the FASP on $G_{\el}(e)$
independently from the rest of the graph. An understanding of these dependencies can be reached by understanding the \emph{meta graph} of $G$ defined in the following.

\begin{definition}[meta graph] Let $(G,\omega)$ be a graph and $c \in \Oe(G)$.  
We set $V_0=\Ec(c)$, $E_0=W_0=\emptyset$ and  
for $k\geq 1$ we define recursively $W_{k} = \cup_{i=0}^{k} V_i $ with
\begin{align*}
V_k := & \bigcup_{h\in V_{k-1}}\li\{\mathcal{N}_{\twoheadrightarrow}(h)\,\,\,\text{w.r.t.}\,\,\, \big(G\setminus(W_{k-1} \setminus \{h\}),\omega\big)\re\} \,, U_k = V_k \cup V_{k-1} \\
E_k :=&   \li\{[h,f] \in U_k\times U_k \mi  f \twoheadrightarrow h \,\,\text{w.r.t.}\,\, \big(G\setminus(W_{k-1} \setminus \{h\}),\omega\big)\re \}\,,
\end{align*} 
Stopping the recursion if $K\in \N$ is such that $V_K=\emptyset$ we introduce the simple, undirected graph $M_c := (V_{M_c},E_{M_c})=  \bigcup_{k=0}^{K}(V_k,E_k)$ as the \emph{meta graph} of $G$ 
with respect to $c$.
Furthermore, we introduce $\Cc(c)= \Gc(\Oe(V_{M_c}))\subseteq G$ as the subgraph of all \emph{arc sensitive cycles} containing $c$. 

\label{META}
\end{definition}

\begin{lemma}\label{MG} Let $G=(V,E)$ be a graph, $c\in \Oe(G)$. Then we can construct the meta graph $M_c=(V_M,E_M)$ in $\Oc(|E|^4)$.
\end{lemma}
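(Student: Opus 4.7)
The plan is to implement the recursive construction of $M_c$ from Definition \ref{META} directly, using the arc-sensitivity test as the inner primitive. Starting with $V_0 := \Ec(c)$ and $W_0 = E_0 = \emptyset$, at each iteration $k \geq 1$ I maintain the accumulated vertex set $W_{k-1}$ and, for every $h \in V_{k-1}$, compute the sensitivity neighborhood $\mathcal{N}_{\twoheadrightarrow}(h)$ with respect to the modified graph $G' := G \setminus (W_{k-1} \setminus \{h\})$. New vertices are those elements of $\mathcal{N}_{\twoheadrightarrow}(h)$ not already in $W_{k-1}$, and the newly discovered pairs $[h,f]$ are added to $E_k$. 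The recursion terminates as soon as $V_K = \emptyset$, which must happen after at most $|E|$ iterations because $V_{M_c} \subseteq E$.

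The core subroutine is the arc-sensitivity test. For a fixed $h$ and the modified graph $G'$, I would first compute $G_{\el}(h)$ with respect to $G'$ via Theorem \ref{TCycle}, which costs $\Oc(|E|^2)$. Then for each candidate $f \in \Ec(G_{\el}(h))$ I check whether $\Oe(f) \neq \emptyset$ in $G' \setminus \{h\}$; this is equivalent to testing whether a directed path from $f^+$ to $f^-$ exists in $G' \setminus \{h\}$, which a depth-first search decides in $\Oc(|E|)$. Aggregating, computing $\mathcal{N}_{\twoheadrightarrow}(h)$ in its modified graph costs $\Oc(|E|^2)$ per $h$.

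For the run-time bound, at each level $k \leq K \leq |E|$ we loop over at most $|E|$ vertices $h$ under consideration, each requiring $\Oc(|E|^2)$ for its sensitivity computation, yielding an aggregate cost of $\Oc(K \cdot |E| \cdot |E|^2) = \Oc(|E|^4)$. A sharper count using $\sum_{k} |V_{k-1}| \leq |V_{M_c}| \leq |E|$ would even give $\Oc(|E|^3)$, but the looser bound already establishes the lemma.

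The main obstacle I anticipate is the bookkeeping: the graph $G \setminus (W_{k-1} \setminus \{h\})$ must be re-set up each time $h$ changes within the same level, so that constructing $G'$ does not cost more than the sensitivity test itself. This is resolved by storing $G$ in adjacency-list form with a global ``deleted'' flag on each arc: the arcs of $W_{k-1}$ are marked deleted once per level, and only the single arc $h$ is temporarily reactivated during its test. With this representation all graph modifications are $\Oc(1)$ per arc, and the cost analysis above goes through.
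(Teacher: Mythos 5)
Your proposal is correct and follows essentially the same route as the paper: the sensitivity test is reduced to one computation of $G_{\el}(h)$ via Theorem \ref{TCycle} in $\Oc(|E|^2)$ plus an $\Oc(|E|)$ depth-first reachability check, and the total number of tests is bounded using the disjointness of the levels $V_k$, giving $\Oc(|E|^4)$. Your amortization of the $G_{\el}(h)$ computation over all candidates $f$ for a fixed $h$ is a slight sharpening of the paper's per-pair count (and your remark that this yields $\Oc(|E|^3)$ is right); the only blemish is the direction of the reachability query, which by the paper's convention ($f^+$ the tail, $f^-$ the head) should be a path from $f^-$ to $f^+$.
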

\begin{proof} Storing $G$ in an adjacency list enables us to test whether $\Oe(e) \not = \emptyset$  in $\Oc(|E|)$ by depht first search. Due to Theorem \ref{TCycle} the graph 
$G_{\el}(e)$ can be determined in $\Oc(|E|^2)$. 
Thus, testing sensitivity requires  $\Oc(|E|^2)$.  
Due to the fact that $V_{k}\cap V_{k-1}=\emptyset $, $1 \leq k \leq K$  the  construction of $M_c$ tests for sensitivity  at most $|E|^2$ times, which yields the claimed complexity.
\qed \end{proof}

Next we define the relative weight $\sigma_{G,c,e,f}$ with respect to some $c \in \Oe(G)$ and $e,f\in \Ec(c)$.  As it will turn out $\sigma_{G,c,e,f}$ decodes which arcs of $c$ can be cutted to obtain a minimal feedback set. 
For a better clarity we firstly restrict ourselves to the case where 
the meta graph $M_c \setminus f$ is a tree. In this case, $e$ is chosen as the root and  $\sigma_{G,c,e,f}$ is given by solving the FASP for every leaf $h$ on 
$$(G_{\el}(h),\omega)\setminus \big\{ \{f\} \cup \text{inner nodes of}\,\, M_{c}\setminus f\big\}$$ 
and subtracting this value from the weight $\omega(h)$ of the predecessor of $h$. Afterwards, we delete all leafs of $M_{c}\setminus f$ and iterate this procedure till $e$ becomes a leaf. More precisely:  
\begin{figure}[t!]
\begin{minipage}[t]{1.0\textwidth}
   \begin{algorithm}[H]
	\KwIn{$(G,\omega), M, e,f \in V_M$}
	\KwOut{$\sigma_{G,M,e,f}$}
 	$M:=(V_M,E_M)\leftarrow M_{c,f}$\;
 	$G' \leftarrow G\setminus f$\; 
        $\Lc \leftarrow  \Lc_e(M):=\{h \in V_{M}\setminus\{e\} \mi \deg(h)=1\}$ \;
	$\sigma \leftarrow \omega$\; 
	\While{ $\Lc \not = \emptyset$}{
	\For{$h \in \Lc$}{$G_h \leftarrow G' \setminus \big (\{f\}\cup (V_M \setminus \Lc)\big)$ \;
	$\sigma(h) \leftarrow  \sigma(h) - \Omega(G_{\el}(h),\sigma)$ w.r.t. $G_h$\;}
	$M \leftarrow M\setminus \Lc$\; 
	$\Lc \leftarrow \Lc_e(M)$\;
 }
	\Return $\sigma$ \; 
	\vspace{2mm}
	\caption{The relative weight for meta trees. \label{sigma:tree}}
\end{algorithm}
\end{minipage}

\end{figure}

\begin{definition}[relative weight for meta trees] Let $(G,\omega)$ be a weighted graph $c\in \Oe(G)$, $e,f \in \Ec(c)$. Let $M_c$ be the meta graph of $G$ with respect to $c$ and assume that 
the connected subgraph $M_{c,f}$ of $M_c \setminus f$, which contains $e$ is a tree. Then  we define  the \emph{relative weight} of $G$ with respect to $c,e,f$
$$
\sigma_{G,M,e,f} : E \lo \R
$$
as the output of Algorithm \ref{sigma:tree} with input $\big((G,\omega),M=M_{c},e,f\big)$. 
\label{treeM}
\end{definition}
To define the relative weight in general, we have to consider all spanning trees of  $M_{c,f}$ generated  by deleting edges, which cut cycles for the first time,
seen from $e$. In Example \ref{sigex} we assert the definition for a special meta graph.  The precise definition can be found below, using the following notions: \\

 \begin{figure}[t!]
\begin{minipage}[t]{1.0\textwidth}
   \begin{algorithm}[H]
	\KwIn{$(G,\omega), M=(V_M,E_M) , e \in V_M$}
	\KwOut{$\sigma_{G,M,e}$}
	$M:=(V_M,E_M)\leftarrow M_{c,f}$\;
$W \leftarrow $ FILO $\{e\}$, $Q \leftarrow $ FILO $\{e\}$,  $K \leftarrow K(M,e)$, $U_{M_W} \leftarrow U(M,e)$\; 
\While{$W \not = \emptyset$}{
 
\eIf{$M_W$ is a tree}{
$W \mapsto h$, $Q \mapsto q$\;
$k \leftarrow d(q,p_h(M,q))$\; 
$N \leftarrow M_W\setminus D_{k-1}(M_W,q)$\; 
$\sigma(h) \leftarrow \sigma_{N,p_h(M_W,e),f}(h) $\;
$U(M_W,q) \leftarrow U(M_W,q)  \setminus\{h\}$\;
$W \leftarrow W \setminus h$\;
\eIf{$U(M_W,q) \not = \emptyset$}{
Choose $h\in U(M_W,q)$ \;
Push $h$ to $W$ \; }
{ $Q \mapsto q$, $Q \mapsto \mapsto o$\; 
$W \mapsto h$, $W \leftarrow W\setminus h$\; 
$M_W \leftarrow M_W^{\leq U(M_W,q))}$ \;
$ W\mapsto h$\; 
$\sigma(h) \leftarrow \sigma_{G,M_W,p_h(M_W,o),f}(h) $\;
$ Q \leftarrow Q \setminus q$\;}

}
{ 
$Q \mapsto q$\; 
$U \leftarrow U(M_W,q)$\;
Choose $h \in U$\; 
Push $h$ to $W$\;
Push $p_h(M_W,q)$ to $Q$\;}}
	\Return $\sigma$
	\vspace{2mm}
	\caption{The relative weight for arbitrary meta graphs. \label{sigma:gen}}
\end{algorithm}
\end{minipage}
\end{figure}

For any tree $M=(V_M,E_M)$ and any vertices $h,e \in V_M$ we denote with $p_h(M,e)$ the predecessor of $h$ with respect to root $e$. 
If  $M=(V_M,E_M)$ is an arbitrary simple, undirected graph and $q\in V_M$ then  
we consider the set  
$D_k(M,q):=\li\{w \in V_{M} \mi d(q,w)=k\re\} $ of all vertices possessing shortest path distance $k$ with respect to $q$ in $M$.
Furthermore, we consider 
\begin{align*}
 U_k(M,q) := \big\{w \in D_k(M,q) \mi &  \exists\, x \in D_k(M,q)\setminus\{w\}  \,\,\text{such that}\,\, P(w,x)\not = \emptyset \\ 
 & \text{with respect to}\,\,  M\setminus (\cup_{l=0}^{k-1} D_l) \big\},
\end{align*}
$K(M,q): = \min \{k \in \N \mi U_k(M,q)\not = \emptyset \}$ and $U(M,q):=U_{K(M,q)}(M,q)$. 
In other words: $U(M,q)$ denotes the set of vertices $w$ which cut cycles for the fist time, seen from starting point $q$.

We set  
\begin{equation}
 \widetilde M_h:= M \setminus\li\{[p_h(M,q),k] \in E_M  \mi k \in U(M,q)\setminus\{h\}\re\} 
\end{equation}
and denote with $M_h$ the connected component of $\widetilde M_h$ containing $h$. Recursively  for $n \geq 1$ and an ordered set $F=\{h_0,\dots,h_n\}$ with 
$h_n \in U(M_{h_0,\dots, h_{n-1}},q)$ we define
$M_{h_0,\dots,h_n}:= (M_{h_0,\dots,h_{n-1}})_{h{_n}}$. 
For  $h \in U(M,q)$ we consider 
$$ \widetilde M^{\leq  U(M,q)}:= M \setminus \li \{[h,k] \in E_M \mi k \in U(M,q)\,, d(q,k) \geq d(q,h) \re \} $$
and set $M^{\leq  U(M,q)}$ to be the connected component of  $ \widetilde M^{\leq  U(M,q)}$ containing $q$, which is therefore a tree.

\begin{definition}[relative weight in general] 
Let $(G,\omega)$ be a graph, $c\in \Oe(G), e,f\in \Ec(c)$ and $M_{c}$ be the meta graph of $G$ with respect to $c$. Let $\sigma_{G,M,e}$ be  the output of 
Algorithm \ref{sigma:gen}
with input $\big((G,\omega),M_c,e,f\big)$, $M=M_{c,e}$. Then, we define 
$$\sigma_{G,c,e,f}: E \lo \R \,, \quad \sigma_{G,c,e,f}(h) := \li\{\begin{array}{ll}
                                                             \sigma_{G,M,e,f}(h) & \,, \text{if}\,\, h \in \mathcal{N}_{\twoheadrightarrow}(e)\,\, \text{w.r.t.} \,\,G\setminus f\\
                                                             \omega(h) & \,, \text{else}
                                                            \end{array}\re.
$$ 
as the \emph{relative weight} of $G$ with respect to $c,e,f$.  
\label{relweight}
\end{definition}

\begin{example} Let $(G,\omega)$ be a graph, $c\in \Oe(G), e_0,e_1\in \Ec(c)$ and assume that  $M_{c,e_1}$ coincides with $M$ from Example \ref{EXW}. 
We follow Algorithm \ref{sigma:gen} 
to compute $\sigma_{G,M,e_0,e_1}: E \lo \R$. Observe that  $U(M,e_0)=\{f_0,f_1\}$ and 
$U(M_{f_1},e_0)=\{h_0,h_1\}$. The graph $M_{f_1,h_0}$ is sketched in the next picture  and turns out to be a tree. 
Now we delete all vertices which are closer to $e_0$ as 
$p_{h_1}(M,e_0)$ and obtain the graph $N$.  Next we compute the relative weight $\sigma(h_1):=\sigma_{N,p_{h_1}(M,p_{h_1}),e_1}(h_1)$ of $h_1$ with respect to $N,p_{h_1}$. 
Analogously, we compute 
$\sigma(h_0):=\sigma_{N,p_{h_0}(M,p_{h_0}),e_1}(h_0)$ and consider the graph $M^{\leq}= M_{f_1}^{\leq U(M_{f_1},e_0)}$, which is sketched in the last picture. 
Now $M^{\leq}$  is a tree and the predecessor of $f_1$ is $e_0$. Thus, we can compute 
$$\sigma_{M_{f_1}^{\leq U(M_{f_1},e_0)},e_0,e_1}(f_1) \quad \text{and  analogously} \quad \sigma_{M_{f_0}^{\leq U(M_{f_0},e_0)},e_0,e_1}(f_0)\,,$$
which finishes the computation of $\sigma_{G,M,e_0,e_1}: E \lo \R$ by replacing $\omega(f_0),\omega(f_1)$ with these weights, respectively.
 \label{sigex}
\end{example}

\begin{figure}[t!]
 \centering
  \input{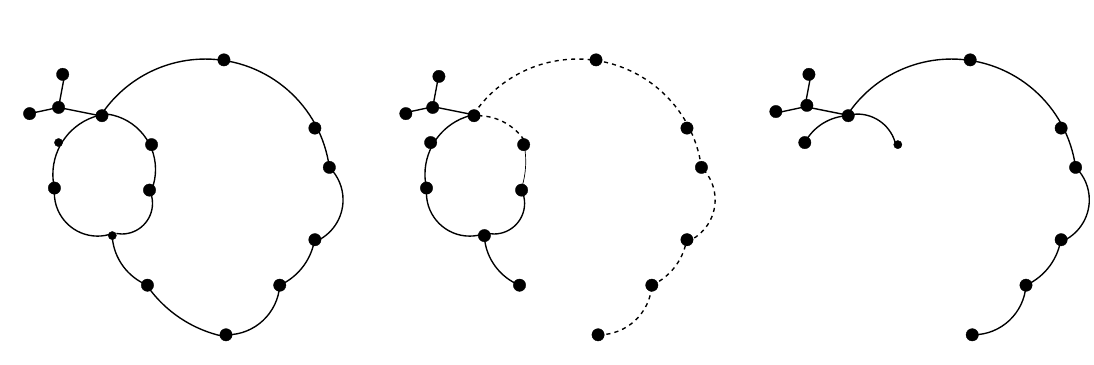_t}
 \caption{Computation of $\sigma_{G,M,e_0}(f_1)$.  }
 \label{EXW}
\end{figure}

\begin{proposition}\label{siggi} Let $G=(V,E)$ be a graph $c \in \Oe(G)$, $e,f \in \Ec(c)$ and let the meta graph $M_{c}$ of $G$ with respect to $c$ be given. Denote with 
$$m(c,f):= \dim_{\Z_2}\big(\Lambda^0(M_{c,f})\big) = |E_{M_{c,f}}|-|V_{M_{c,f}}|+1 $$
the $\Z_2$-dimension of the cycle space $\Lambda^0(M_{c,f})$ of $M_{c,f}$.  Then 
\begin{enumerate}
 \item[i)] The computation of  $\sigma_{G,c,e,f}$ can be realized in $\Oc\big(2^{m(c,f)}|V||E|^2\log(|V|)\big)$. 
 \item[ii)] $m(c,f) \leq \dim_{\Z_2}\Lambda^0(G)= |E|-|V| +\#G$ 
\end{enumerate}
\end{proposition}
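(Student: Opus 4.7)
The plan is to handle the two claims separately. For part (ii), I will construct a topological-minor embedding of $M_{c,f}$ inside the undirected shadow of $G$. Every vertex $a \in V_{M_{c,f}}$ is by construction an arc of $G$, so I identify it with $a$ itself; every edge $[h,k] \in E_{M_{c,f}}$ records that $k$ is arc-sensitive to $h$, which in particular forces $h$ and $k$ to lie on a common elementary cycle inside a subgraph of $G$, hence to be joined by a connected path in $G$ whose interior avoids the arcs representing the meta-vertices. I pick such witness paths and, in a second step, refine them to be pairwise internally arc-disjoint by using the inductive stratification $W_k$ built into Definition \ref{META}: each new layer $V_k$ is generated from the sensitivity relation relative to $G \setminus (W_{k-1}\setminus\{h\})$, so freshly introduced witness paths can be taken to avoid earlier layers. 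Realising $M_{c,f}$ as a topological minor of $G^0$ in this way yields $\dim_{\Z_2}\Lambda^0(M_{c,f}) \leq \dim_{\Z_2}\Lambda^0(G) = |E|-|V|+\#G$ via \eqref{dim2}, which is the asserted bound. The main obstacle is exactly this disjointness argument; it is where the recursive construction of the meta graph is really used.

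For part (i), I begin with the tree case handled by Algorithm \ref{sigma:tree}. The outer while-loop processes leaves of $M_{c,f}$ until $e$ itself becomes one, running at most $|V_{M_{c,f}}| \leq |E|$ iterations. Each iteration performs a single call to $\Omega(G_{\el}(h),\sigma)$ on a restricted subgraph, which by Lemma \ref{Cycle} is computable as a Min-$\mbox{s-t}$-Cut in $\Oc(|V||E|\log|V|)$. Therefore the tree subroutine costs $\Oc(|V||E|^2\log|V|)$, matching the claimed bound when $m(c,f)=0$.

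For the general case, Algorithm \ref{sigma:gen} explores $M_{c,f}$ from $e$ and whenever the currently processed subgraph $M_W$ fails to be a tree it branches over the vertices in $U(M_W,q)$ that close a shortest cycle seen from $q$. Each such branching effectively resolves one independent cycle of the current meta graph: the alternative choices in $U(M_W,q)$ correspond to removing one of the edges of the closing cycle, so the cyclomatic number of the relevant subproblem strictly decreases in each branch. An induction on $m(c,f)$ therefore bounds the total number of leaves of the recursion tree by $2^{m(c,f)}$, and each leaf reduces to one tree computation of cost $\Oc(|V||E|^2\log|V|)$. Summing over leaves gives $\Oc\bigl(2^{m(c,f)}|V||E|^2\log(|V|)\bigr)$, and combining with (ii) finishes the proposition. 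The delicate step here will be verifying the induction: one must show that Algorithm \ref{sigma:gen} generates at most two effective branches per independent cycle rather than one per spanning tree of $M_{c,f}$, so the role of $U(M_W,q)$ and of the tree-like reduction $M^{\leq U(M,q)}$ have to be leaned on carefully to rule out redundant branching.
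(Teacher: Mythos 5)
Your part (i) follows the paper's own argument: in the tree case Algorithm \ref{sigma:tree} performs at most $|V_{M_{c,f}}|\leq|E|$ evaluations of $\Omega(G_{\el}(h),\sigma)$, each a Min-$\mbox{s-t}$-Cut costing $\Oc(|V||E|\log(|V|))$ by Lemma \ref{Cycle}, and in the general case the branching of Algorithm \ref{sigma:gen} over $U(M_W,q)$ is charged to $\Z_2$-independent meta cycles to produce the factor $2^{m(c,f)}$. The paper's own justification of that last count (``in the worst case the combination of the pairs is independent'') is no more detailed than yours, so on (i) you match both the approach and the level of rigor.

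Part (ii) is where you diverge, and where there is a genuine gap. The paper does \emph{not} realise $M_{c,f}$ as a topological minor of $G^0$: it lifts each meta cycle $d=\{h_0,\dots,h_k\}$ to a connected cycle of $G$ by inserting \emph{arbitrary} connected paths $p_i$ between the arcs $h_i$ (which are the meta vertices), and argues that $\Z_2$-independent meta cycles lift to $\Z_2$-independent elements of $\Lambda^0(G)$ regardless of the path choices; no disjointness is ever needed. Your route instead stands or falls with the refinement of the witness paths to be pairwise internally arc-disjoint, and the justification you offer --- that the stratification $W_k$ of Definition \ref{META} lets freshly introduced witness paths avoid earlier layers --- does not deliver this: the sets $W_k$ consist of meta \emph{vertices}, i.e.\ distinguished arcs of $G$, so avoiding $W_{k-1}$ says nothing about avoiding the \emph{interiors} of witness paths already chosen for earlier meta edges. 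Two meta edges incident to the same meta vertex $h$ are typically witnessed by elementary cycles through $h$ that overlap heavily outside the meta-vertex arcs, and nothing in the recursive construction separates them; indeed $M_{c,f}$ can have more edges than any arc-disjoint packing of connecting paths in $G$ would permit. Since you yourself flag this disjointness as the main obstacle and leave it unproved, your proof of (ii) is incomplete as written. The repair is to abandon the topological-minor ambition and argue independence of the lifted cycles directly, as the paper does: the meta-vertex arcs contained in the lifted cycles already witness that a vanishing $\Z_2$-combination upstairs forces a vanishing combination of the $d_i$, so the bound $m(c,f)\leq\dim_{\Z_2}\Lambda^0(G)=|E|-|V|+\#G$ follows from \eqref{dim2} without any control on how the connecting paths intersect.
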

 \begin{proof} Assume that  $M_{c,f}$ is a tree. 
As already mentioned, due to \cite{Dinic1970} and \cite{Dinitz} the feedback length $\Omega(G_f,\sigma)$  can be determined in \linebreak $\Oc\big(|V||E|\log(|V|)\big)$, and has to be computed
at most $|V_{M_{c,f}}|\leq |E|$ times.
If $M_{c,f}$ is not a tree then we observe that the most expansive computation step in Algorithm \ref{sigma:gen}
is again the computation of the relative weight with respect to a certain subtree of $M_{c,f}$ (lines 8 and 19 in Algorithm \ref{sigma:gen}).  
This computation step has to be computed for every pair 
$f,h \in U(M,q)$ twice, for some $M,q$. In worst case the combination of the pairs is independent, i.e., 
every other pair still appears once $M_f$ and $M_h$ are considered. In this case the set of remaining cycles running through the remaining pairs $(h',f')$ do not contain 
the edges $h$ and $f$ and can therefore not be generated  by the cycles running through 
$(h,f)$ with respect to $\Z_2$-coefficients. Consequently, there are  at most $2^{m(c,f)}$ iterations. Together with the argumentation above this yields $i)$.

To show $ii)$ we write  $d= \{h_0,\dots,h_k\} \subseteq V_{M_{c,f}}$ as a list of connected meta vertices. Now we choose connected paths $p_i\subseteq G$, $0 =1 \dots,k+1$ connecting $h_i^-$ with $h_{i+1 \mod k+1}^+$. Then 
$c =\{\{h_0\} \cup p_0 \cup \dots, \{h_k\} \cup p_k \in \Lambda^0(G)$ is a connected cycle in $G$. If $d_1,\dots,d_m \in \Lambda^0(M_{c,f})$ is a set of $\Z_2$-linear independent meta cycles then regardless 
of choices for the paths $p_i$ representing an meta edge the corresponding cycles 
$c_1,\dots,c_m\in \Lambda^0(G)$ are $\Z_2$-linear independent in $\Lambda^0(G)$. Thus, $m(c,f)$ is bounded by the $\Z_2$-dimension of $\Lambda^0(G)$,  proving $ii)$. 
\qed \end{proof}

\begin{remark} Note that a graph $G=(V,E)$ with $c \in \Oe(G)$ such that $M_c$ coincides with $M$ in Figure \ref{relweight} can be easily constructed by choosing a starting cycle $c$ and additional cycles
$c_1,c_2$ intersecting with $c$ in $f_0,f_1$, respectively. Then we continue this process by follwing $M$ for the ramining cycles. Hence, the set of graphs $G$ with cycle $c$ and meta graphs $M_c$ such that 
$\dim_{\Z_2\Lambda^0}(M)$ is small, is actually a huge set. 
\label{Msmall}
\end{remark}

Indeed the relative weights satisfy a Bellman condition, which can be formulated as follows. We recall that $\Gc_o(G)\subseteq G$ denotes the subgraph induced by  all cycles of $G$ and state:   
\begin{theo}\label{Bellmann} Let $(G,\omega)$ be a weighted graph  $c \in \Oe(G)$, $e,f\in \Ec(c)$  and 
$H_{e,f} = \Gc_o\big(G_{\el}(e)\setminus f\big),H_{f,e} = \Gc_o\big(G_{\el}(f)\setminus e\big)$ then 
\begin{align}\label{bel}
  \big(\omega(e) - \Omega(H_{e,f},\sigma_{e})\big) - &\big(\omega(f) - \Omega(H_{f,e},\sigma_{f})\big)  =  \nonumber\\
  \big(\omega(e)+ \Omega(G\setminus e ,\omega)\big) - &
  \big(\omega(f) +\Omega(G \setminus f,\omega)\big)\,,
\end{align}
where we shorten $\sigma_e=\sigma_{G,c,e,f},\sigma_f=\sigma_{G,c,f,e}$ and slightly abuse notation by still denoting  $\sigma_{e},\sigma_{f},\omega$ 
for the restriction of the arc  weights to the corresponding subgraphs, respectively. 
\end{theo}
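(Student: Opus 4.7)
My plan is to prove the equivalent symmetric identity
\[
\Omega(G\setminus e,\omega) + \Omega(H_{e,f},\sigma_e) \;=\; \Omega(G\setminus f,\omega) + \Omega(H_{f,e},\sigma_f),
\]
which, after rearranging, is precisely \eqref{bel}. I will show that both sides coincide with a common constant $K=K(G,c,\omega)$ depending only on $(G,\omega)$ and $c$ but not on the particular choice of arc along $c$; subtracting the two instances then yields the theorem.

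First I would partition $\Oe(G)$ with respect to $e$ and $f$ into cycles through both (class $A$, containing $c$), cycles through $e$ but not $f$ (class $B$), cycles through $f$ but not $e$ (class $C$), and cycles through neither (class $D$). From the definitions in Section \ref{sec:elecyc} one identifies the cycles of $G\setminus e$ with $C\cup D$ and the cycles of $H_{e,f}=\Gc_o(G_{\el}(e)\setminus f)$ with $B$, and symmetrically for the $f$-side. In the base case $M_{c,f}=\{e\}$ (no arc outside $\Ec(c)$ is arc sensitive to $e$ w.r.t.\ $G\setminus f$) one has $\sigma_e\equiv\omega$, the classes $B,C,D$ are supported on pairwise arc-disjoint regions of $G$, and both sides reduce to $\Omega(G\setminus\Ec(c),\omega)$ plus the contributions forced by $c$, giving the identity trivially.

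For the inductive step I would unroll the recursive definition of $\sigma_e$ produced by Algorithm \ref{sigma:tree}. Induction on the depth of the tree $M_{c,f}$ (and, in the general case, on the pair $(m(c,f),|V_{M_{c,f}}|)$ from Proposition \ref{siggi}) should establish the local invariant
\[
\sigma_e(h) \;=\; \omega(h) \;-\; \Omega\!\bigl(G_{\el}(h)\cap G_h,\sigma_e\bigr)
\]
for every arc $h$ processed at some leaf of the current meta tree, where $G_h$ is the restricted subgraph defined in Algorithm \ref{sigma:tree}. This identity says that the effective cost of cutting $h$ already accounts for the savings obtained by simultaneously destroying all cycles lying below $h$ in the meta hierarchy. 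Telescoping this invariant along any optimal FAS $\ee$ of $(H_{e,f},\sigma_e)$ and re-adding the subtracted contributions would then yield
\[
\omega(e) + \Omega(G\setminus e,\omega) \;=\; \omega(e) - \Omega(H_{e,f},\sigma_e) + K,
\]
with $K$ the $\omega$-cost of an optimal FAS cutting every cycle of $G$ outside the sensitivity reach of $c$ together with the fixed contributions absorbed during the meta-tree collapse. The quantity $K$ manifestly depends only on $G$, $c$, and $\omega$, so the analogous derivation with the roles of $e$ and $f$ exchanged produces the same constant.

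The main obstacle will be the general non-tree case of $M_{c,f}$. When $M_{c,f}$ carries cycles, a single arc $h\in U(M_{c,f},e)$ can simultaneously close several meta cycles, and a naive telescoping would double-count savings, so one must argue that Algorithm \ref{sigma:gen}'s enumeration over the $2^{m(c,f)}$ spanning trees obtained by breaking one edge per meta cycle actually delivers the correct, branch-independent value. The natural way to handle this is an exchange argument: given an optimal FAS $\ee^*$ of $G$ containing $e$ and an optimal FAS $\tilde\ee$ of $(H_{e,f},\sigma_e)$ read off from any branch of the enumeration, one considers their symmetric difference within the sensitivity zone $\Cc(c)$ and shows by cycle-by-cycle exchange that its net $\omega$-weight vanishes, forcing all branches of Algorithm \ref{sigma:gen} to agree and the tree-case identity to extend. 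Verifying that the exchange is well-defined across overlapping meta cycles and that no elementary cycle of $G$ is hit twice in the accounting is the combinatorial heart of the proof.
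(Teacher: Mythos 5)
Your rearrangement of \eqref{bel} into the symmetric form $\Omega(G\setminus e,\omega)+\Omega(H_{e,f},\sigma_e)=\Omega(G\setminus f,\omega)+\Omega(H_{f,e},\sigma_f)$ is correct, and exhibiting a common value $K$ would indeed suffice, but as written the argument assumes exactly what has to be proved. The quantity you call $K$ is described as the cost of an optimal feedback set outside the sensitivity reach of $c$ ``together with the fixed contributions absorbed during the meta-tree collapse''; those absorbed contributions are computed from the meta tree $M_{c,f}$ rooted at $e$ with $f$ deleted, so they depend a priori on the ordered pair $(e,f)$. The claim that $K$ is ``manifestly'' independent of this choice is precisely the content of the theorem, and nothing in the telescoping sketch establishes it. The telescoping step itself is not carried out: the local invariant $\sigma_e(h)=\omega(h)-\Omega\big(G_{\el}(h)\cap G_h,\sigma_e\big)$ is essentially the definition of Algorithm \ref{sigma:tree}, and the genuine difficulty --- that the subgraphs $G_{\el}(h)$ attached to different meta vertices overlap, so that summing the subtracted optima along a feedback set of $H_{e,f}$ can double-count savings --- is deferred to an unexecuted ``exchange argument'' in the non-tree case and not addressed at all in the tree case. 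A further slip: $\Oe(H_{e,f})$ is not your class $B$ of cycles through $e$ avoiding $f$; since $H_{e,f}=\Gc_o\big(G_{\el}(e)\setminus f\big)$ and the union of several cycles through $e$ can contain elementary cycles avoiding $e$ altogether, $\Omega(H_{e,f},\sigma_e)$ must also pay for cycles outside $B$, which breaks the four-class accounting on which your base case rests.

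For comparison, the paper avoids the constant-$K$ detour entirely. It inducts on $|\Oe(G)|$, first disposing of the case $\Oe(e)=\Oe(f)$, then passes to $G'=G\setminus f$ (which has strictly fewer elementary cycles), selects on a cycle $c'\in\Oe(G')$ through $e$ a partner arc $p$ maximizing the relative difference, so that $\omega(p)+\Omega(G'\setminus p,\omega)=\Omega(G',\omega)$ by the Bellman principle, and then applies the induction hypothesis twice --- once to $G'$ with the weight $\omega$ and once to $G''=H_{e,f}$ with the weight $\gamma=\sigma_{G,c,e,f}$ --- gluing the two applications together with Lemma \ref{CE}, which guarantees that the relative weights restrict consistently to these subgraphs. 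It is this compatibility lemma and the double use of the induction hypothesis that your proposal lacks; without an analogue of Lemma \ref{CE}, the link between the locally computed $\Omega(H_{e,f},\sigma_e)$ and the global $\Omega(G\setminus e,\omega)$ remains an assertion rather than a proof.
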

\begin{remark} Note that \eqref{bel} is a quite comfortable way of formulating the Bellman principle, i.e., 
though we do not know the values of $\Omega(G\setminus e,\omega)$ and
$\Omega(G\setminus f,\omega)$ we know 
that if  $e$ maximizes 
$$\omega(e) - \Omega(H_{e,h},\sigma_{G,c,e,h})  - \li ( \omega(h) - \Omega(H_{h,e},\sigma_{G,c,h,e})\re) \,,    $$
for all on $\Ec(c)$.  Then 
$\omega(e) + \Omega(G\setminus e,\omega)  = \Omega(G,\omega) $.
Thus,  $\{e\}$ can be extended to a global optimal solution. 
Maybe this relative formulation 
can be applied also to other problems for which one wants to use  a dynamic programming technique. The described observation is also used in the proof
of Theorem \ref{Bellmann}. 
\end{remark}

In addition to the observation above the following statement is needed  to prove Theorem \ref{Bellmann}. 
\begin{lemma}  Let $G=(V,E)$ be a graph, $c \in \Oe(G)$ and $e,f \in \Ec(c)$ such that there is  $c' \in \Oe(e)\setminus\Oe(f)$ and 
$p \in \Ec(c')$. Then
\begin{enumerate}
 \item[i)] $\sigma_{G,c,e,f}(h) = \sigma_{G\setminus f,c',e,p}(h) \quad \text{for all} \,\,\,h \in \Ec\big(H'_{e,p}\big)$.
 \item[ii)] $\sigma_{G,c,e,f}(p) = \omega(p) - \Omega\big(H'_{p,e},\sigma_{G\setminus f,c',p,e}\big) $,
 where $H'_{p,e}$ is understood with respect to $G'$.
\end{enumerate}
\label{CE}
\end{lemma}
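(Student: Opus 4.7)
The plan is to unfold both parts of the claim directly from Algorithm \ref{sigma:tree} (or its generalisation Algorithm \ref{sigma:gen}) and verify the identities by induction on BFS depth in the associated meta graphs. The first observation is that Algorithm \ref{sigma:tree} sets $G'\leftarrow G\setminus f$ in its second line and performs every subsequent max-flow computation inside subgraphs of $G\setminus f$; hence $\sigma_{G,c,e,f}$ is intrinsically a function of the ambient graph $G\setminus f$ and the component $M_{c,f}$ of $M_c\setminus f$ containing $e$, and does not depend on $G$ or $f$ beyond the deletion already performed. The same remark applies to $\sigma_{G\setminus f,c',e,p}$, whose computation depends only on $(G\setminus f)\setminus p$ and on the component $M_{c',p}$ of $M_{c'}\setminus p$ containing $e$, where $M_{c'}$ is the meta graph built inside $G\setminus f$ from $c'$.

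The main technical step is a meta-graph identification lemma: the components $M_{c,f}$ and $M_{c',p}$ rooted at $e$ coincide on the vertex set $\Ec(H'_{e,p})$. Definition \ref{META} builds the layers $V_k$ purely from iterated sensitivity tests, and the starting cycle enters the construction only through $V_0$. Since both $c$ and $c'$ contain $e$, and the sensitivity test for an arc $h\in V_{k-1}$ is performed in $G\setminus(W_{k-1}\setminus\{h\})$, a careful induction on $k$ shows that for every arc lying on some cycle through $e$ in $G\setminus f$ that avoids $p$, the level at which it enters the meta graph together with its parent edges are the same in both constructions. I expect this to be the hardest part of the argument; the intuition is that once $c$ is broken by removing $f$ (respectively $c'$ by removing $p$) the remaining cycle skeleton driving sensitivity is identical in the two settings.

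With the meta graphs identified, part (i) follows by induction on BFS depth: at each pass of the outer while-loop in Algorithm \ref{sigma:tree} the leaf set $\Lc$, the reduced graph $G_h$, and the current $\sigma$-values agree between the two runs by the induction hypothesis, so the leaf update $\sigma(h)\leftarrow \sigma(h)-\Omega(G_{\el}(h),\sigma)$ yields identical values on both sides. The non-tree case handled by Algorithm \ref{sigma:gen} is treated analogously, once the spanning-tree choices have been fixed, because the set $U(M,q)$ driving the recursion depends only on $M$ and $q$. For part (ii) I would observe that during the traversal of $M_{c,f}$ the vertex $p$ is absorbed last into $e$, whereupon the algorithm executes $\sigma(p)\leftarrow \omega(p)-\Omega(G_{\el}(p),\sigma)$ in a graph $G_p$ which, after restriction to arcs supporting at least one cycle and therefore relevant for $\Omega$, coincides with $H'_{p,e}$. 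By part (i) applied with the roles of $e$ and $p$ swapped, the weight $\sigma$ used at that moment equals $\sigma_{G\setminus f,c',p,e}$ on $\Ec(H'_{p,e})$, and combining these two observations yields the claimed formula.
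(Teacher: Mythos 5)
You should know up front that the paper's own ``proof'' of this lemma is a single sentence: it says that $i)$ and $ii)$ follow by directly unfolding Definitions \ref{META}, \ref{treeM} and \ref{relweight}, and leaves everything to the reader. So your strategy --- unfold the two algorithms and compare them run against run --- is exactly the approach the paper intends, and your reduction of part $ii)$ to ``the last leaf update at $p$ plus part $i)$ with the roles of $e$ and $p$ swapped'' is the right shape for the argument.

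The genuine gap is the step you yourself flag as the hardest: the claim that $M_{c,f}$ and $M_{c',p}$ coincide (as labelled graphs, with the same parent structure and the same BFS layering, since that is what Algorithms \ref{sigma:tree} and \ref{sigma:gen} actually consume) on the arcs of $H'_{e,p}$. As written this is asserted, not proved, and the assertion is not obviously true. The two constructions are seeded differently --- $V_0=\Ec(c)$ for $M_c$ versus $V_0=\Ec(c')$ for $M_{c'}$, and these arc sets are genuinely different since $c'\in\Oe(e)\setminus\Oe(f)$ --- and they live in different ambient graphs ($G$ versus $G\setminus f$). Your remark that ``the starting cycle enters the construction only through $V_0$'' understates the difficulty: in Definition \ref{META} the sensitivity test producing $V_k$ is performed in $G\setminus(W_{k-1}\setminus\{h\})$, so $V_0$ determines $W_0$, which determines the graph in which $V_1$ is computed, and every discrepancy between the two seeds propagates through all subsequent layers and through the edge sets $E_k$. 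An arc could in principle be sensitive to some arc of $c$ at level $1$ in one construction and only appear at level $2$ (with a different parent) in the other, which would change the order of the $\Omega(G_{\el}(h),\sigma)$ updates and hence the resulting weights. Until you carry out the ``careful induction on $k$'' you allude to --- showing that for every arc of $H'_{e,p}$ the entry level, the parent, and the graph $G_h$ in which its max-flow subproblem is solved agree between the two runs --- part $i)$ is not established, and part $ii)$ inherits the same gap because it invokes $i)$ with $e$ and $p$ exchanged. To be fair, the paper does not supply this induction either; but it is the actual content of the lemma, and your proposal does not yet contain it.
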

\begin{proof} To verify  $i)$ and $ii)$ one has to follow directly Definitions \ref{META}, \ref{treeM} and \ref{relweight}, which  is left to the reader. 
\qed \end{proof}

\begin{proof}[of Theorem \ref{Bellmann}] If $\Oe(e)= \Oe(f)$ with respect to $G$ then $\sigma_{G,c,e,f}=\sigma_{G,c,f,e}$ and $\Omega(G\setminus e ,\omega)\big)= \Omega(G\setminus f ,\omega)\big)$ and therefore 
the claim follows. 
Now we argue by induction on $|\Oe(G)|$. If $|\Oe(G)| =1$ then there is only one totally isolated cycle and therefore  $\Oe(e) =\Oe(f) = \{c\}$. 
Thus, we are in a special 
case of the situation above and obtain the claim. 
Now assume that   $|\Oe(G)| >1$ and $\Oe(f)\subsetneq \Oe(e)$. We consider $G':=G\setminus f$ and observe that $|\Oe(G')|<|\Oe(G)|$. 
We choose $c' \in \Oe(G')$ with $e \in \Ec(c')$ and choose $p \in \Ec(c')$ such that 
\begin{equation}\label{max}
  \big(\omega(e) -\Omega(H'_{e,p},\sigma_{G',c',e,p})\big)  - \big( \omega(p) -\Omega(H'_{p,e},\sigma_{G',c',p,e}) \big) 
\end{equation} 
is maximized on $c'$, where $H'_{e,p}$ ,$H'_{p,e}$ are understood with respect to $G'$. Thus, following Remark \ref{bel} there holds  
\begin{equation}\label{opt}
 \omega(p) + \Omega(G'\setminus p, \omega) = \Omega(G',\omega)\,. 
\end{equation}
We set $\sigma'_e :=\sigma_{G',c',e,p}, \sigma'_p :=\sigma_{G',c',p,e}$ then by induction and \eqref{opt} we compute 
\begin{align}
\big(\omega(e) -\Omega(H'_{e,p},\sigma'_{e})\big)- \big(\omega(p) -\Omega(H'_{p,e},\sigma'_{p})\big) = & \big(\omega(e)+ \Omega(G'\setminus e ,\omega)\big)   \nonumber  \\
 - &  \big(\omega(p) +\Omega(G' \setminus p,\omega)\big) \nonumber   \\ 
                                                           = &  \, \omega(e) + \Omega(G\setminus \{ e,f\},\omega) \nonumber \\ 
                                                           -&  \Omega(G\setminus f,\omega) \,.  \label{I} 
\end{align}
On the other side we consider
$G''=(V'',E''):= H_{e,f}$ with  the arc weight  
$$ \gamma : E'' \lo \R^+ \,, \quad  \gamma(h):= \sigma_{G,c,e,f}(h) \,. $$ 
Now observe that $H''_{p,e} = \emptyset $ with respect to $G''$  and by Lemma \ref{CE} $i)$ we have  $\sigma''_{e}:=\sigma_{G'',c',e,p}(h) = \sigma_{G',c,e,p}(h) = \gamma(h)$ for all 
$h \in \Ec (H''_{e,p})$, where $H''_{e,p}$ 
is understood with respect to $G''$. 
Moreover, $\gamma(e)=\omega(e)$ and therefore  
\begin{eqnarray}
\big(\gamma(e) -\Omega(H''_{e,p},\sigma''_{e})\big)-  \big(\gamma(p) -\Omega(H''_{p,e},\sigma''_{p})\big)
&=&  \gamma(e) - \Omega(H''_{e,p},\gamma) - \gamma(p)  \nonumber \\
&=&  \gamma(e) - \Omega(G''\setminus p,\gamma)  - \gamma(p) \nonumber  \\
 &=& \omega(e)  - \Omega(H'_{e,p},\sigma_{G',c,e}) \nonumber  \\ 
 &-&\sigma_{G,c,e}(p)   \label{II}
\end{eqnarray}
Due to Lemma \ref{CE} $ii)$ we have that $ \sigma_{G,c,e,f}(p) = \omega(p) -\Omega(H'_{p,e},\sigma'_{p})$, $\sigma_p'=\sigma_{G',c',p,e}$. 
Inserting this fact in \eqref{II} gives 
\begin{align}
\big(\gamma(e) -\Omega(H''_{e,p},\sigma''_{e})\big)-  \big(\gamma(p) -\Omega(H''_{p,e},\sigma''_{p})\big) = &  \big(\omega(e) - \Omega(H'_{e,p},\sigma'_{e})\big) \nonumber  \\ 
  - & (\omega(p) - \Omega(H'_{e,p},\sigma'_{p})\big) 
\label{III}
\end{align}
On the other, by \eqref{opt} we have that \eqref{III}  
is maximized on $c'$. Thus, again by induction 
\begin{eqnarray}
 \big(\gamma(e) -\Omega(H''_{e,p},\sigma''_{e})\big) - (\gamma(p) - \Omega(H''_{p,e},\sigma''_{p})\big)  &=&
 \big(\gamma(e)+ \Omega(G''\setminus e,\gamma)\big) \nonumber\\ 
 &-& \big(\gamma(p) + \Omega(G''\setminus p ,\gamma) \big) \nonumber \\
 &=& \omega(e) -  \Omega(G'',\gamma)  \label{IV}  \\
 &=& \omega(e) -  \Omega(H_{e,f},\sigma_{G,c,e,f}) \nonumber 
 \end{eqnarray} 
Thus, by combining \eqref{III} with \eqref{I} and again \eqref{III} with \eqref{IV} we obtain
\begin{equation} \label{V}
   \omega(e) -  \Omega(H_{e,f},\sigma_{G',c,e}) =  \omega(e) + \Omega(G\setminus \{ e,f\},\omega) - \Omega(G\setminus f,\omega) 
\end{equation}
If $\Oe(f) \subseteq \Oe(e)$ then $ \Omega(H_{f,e},\sigma_{G,c,f,e}) = 0$ and  $\Omega(G\setminus \{ e,f\},\omega) =  \Omega(G\setminus e,\omega)$. Thus, by \eqref{V}
this yields the claim. If $\Oe(f) \subsetneq \Oe(e)$ then the analogous of \eqref{V} with respect to $f$ yields 
\begin{eqnarray*}
   \big(\omega(e) - \Omega(H_{e,f},\sigma_{e})\big) - \big(\omega(f) - \Omega(H_{f,e},\sigma_{f})\big) &=& 
   \omega(e) + \Omega(G\setminus e,\omega)  \\
  + \Omega(G\setminus \{ e,f\},\omega)   - \Omega(G\setminus \{ f,e\},\omega)       &-& \omega(f) - \Omega(G\setminus f,\omega)  \,.
\end{eqnarray*}
Since $\Omega(G\setminus \{ e,f\},\omega)   - \Omega(G\setminus \{ f,e\},\omega) =0$ this finishes the proof. 
\qed \end{proof}

We consider the Algorithms  \ref{CUT},\ref{CR}, denote with $\mathrm{output}(A)$ the set of all possible outputs an algorithm $A$ can produce and conclude :    
\begin{corollary} Let  $(G,\omega)$ be a  graph then the algorithm \emph{CUT} is exact and complete with respect to the FASP, i.e,  
$$\mathrm{output}(\text{\emph{CUT}})= \Sc(G,\omega)\,,$$
while the algorithm \emph{CUT \& RESOLVE} is exact, i.e., 
$$\mathrm{output}(\text{\emph{CUT \& RESOLVE}}) \subseteq  \Sc(G,\omega)\,.$$ 
Moreover, there is $m \in  \N$ such that \emph{CUT } and \emph{CUT \& RESOLVE} possess run times 
$\Oc\big(2^{m}|E|^4\log(|V|)\big)$, where the parameter $m \leq |E|-|V|+1$ can be determined in $\Oc\big(|E|^3)\big)$.  
\end{corollary}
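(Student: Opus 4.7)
The plan is to reduce exactness of both algorithms to Theorem \ref{Bellmann} and Theorem \ref{res} via induction on the number of arcs. I would describe CUT as follows: pick an elementary cycle $c$ of $(G,\omega)$ (which exists and is found in $\Oc(|E|^2)$ by Remark \ref{GO}); for every ordered pair $(e,f)\in\Ec(c)\times\Ec(c)$ compute the relative weights $\sigma_{G,c,e,f}$, $\sigma_{G,c,f,e}$ together with the subgraphs $H_{e,f},H_{f,e}$; evaluate the two feedback lengths $\Omega(H_{e,f},\sigma_{e})$ and $\Omega(H_{f,e},\sigma_{f})$ via a Min-$\mbox{s-t}$-cut (Lemma \ref{Cycle}); and select an arc $e^{*}\in\Ec(c)$ that maximises the Bellman difference
$$\omega(e)-\Omega(H_{e,f},\sigma_{e})-\bigl(\omega(f)-\Omega(H_{f,e},\sigma_{f})\bigr)$$
over all partners $f$ on $c$. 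By the remark following Theorem \ref{Bellmann}, such an $e^{*}$ satisfies $\omega(e^{*})+\Omega(G\setminus e^{*},\omega)=\Omega(G,\omega)$ and hence extends to a global optimum; CUT recurses on $G\setminus e^{*}$ and terminates once no cycle remains. Correctness of CUT \& RESOLVE then follows by composing two exact subroutines: Theorem \ref{res} produces an optimal partial feedback set on the resolvable part of $(G,\omega)$, reducing the instance to the resolved graph $(S,\tau)$ on which CUT is invoked.

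For the completeness claim $\mathrm{output}(\text{CUT})=\Sc(G,\omega)$, I would run the above argument in reverse. Given $\ee\in\Sc(G,\omega)$ and any $e^{*}\in\ee$, optimality of $\ee$ yields $\omega(e^{*})+\Omega(G\setminus e^{*},\omega)=\Omega(G,\omega)$, and the Bellman identity \eqref{bel} forces $e^{*}$ to attain the maximising difference on every cycle through $e^{*}$. Hence the nondeterministic choice inside CUT can legitimately pick $e^{*}$, and induction on $|\ee|$ shows that each arc of every optimal feedback set is accessible by some run of CUT. Note that the analogous equality cannot be claimed for CUT \& RESOLVE, since the resolvability step commits to specific isolating arcs in Definition \ref{def:resolve} and can miss some members of $\Sc(G,\omega)$; only the inclusion $\subseteq$ is preserved.

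For the complexity analysis, set $m:=\max_{(c,f)}m(c,f)$, taken over the pairs inspected by CUT during its entire execution; by Proposition \ref{siggi}(ii) this is bounded by $\dim_{\Z_{2}}\Lambda^{0}(G)=|E|-|V|+1$. Within one outer iteration of CUT, Theorem \ref{TCycle} delivers each $H_{e,f}$ in $\Oc(|E|^{2})$, while Proposition \ref{siggi}(i) bounds the cost of one $\sigma_{G,c,e,f}$ by $\Oc(2^{m}|V||E|^{2}\log|V|)$; the $|V|\log|V|$ Min-$\mbox{s-t}$-cut factor is already absorbed inside the $\sigma$-computation. A careful aggregation over the pairs $(e,f)\in\Ec(c)^{2}$ and the $\Oc(|E|)$ outer iterations compresses to the stated $\Oc(2^{m}|E|^{4}\log|V|)$ bound after observing that a freshly deleted arc drops out of all later $H_{\cdot,\cdot}$ and that the meta graphs shrink monotonically under the recursion. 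The parameter $m$ itself is obtained without fully constructing every meta graph via Lemma \ref{MG}: one only needs the cycle-space dimension $|E_{M_{c,f}}|-|V_{M_{c,f}}|+1$, which can be read off through a single sensitivity sweep together with a rank computation on the incidence matrix of $M_{c,f}$, fitting the whole determination into $\Oc(|E|^{3})$.

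The main obstacle will be the tight runtime bookkeeping: one must verify that the recursion on $G\setminus e^{*}$ does not re-enlarge the meta graphs beyond the global bound $m$ (so that $2^{m}$ is indeed a uniform factor and not $2^{m_{i}}$ accumulated over iterations), and that the $\sigma$-computation for the $\Oc(|E|^{2})$ pairs in one cycle $c$ can share intermediate data (reachability, partial min-cuts along $M_{c,f}$) to collapse what looks like $\Oc(|E|^{5})$ naive accounting into the claimed $\Oc(|E|^{4}\log|V|)$. Once this amortisation is in place, the rest of the proof is a direct combination of the cited lemmas.
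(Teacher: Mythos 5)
Your proposal is correct and follows essentially the same route as the paper's proof: exactness of \emph{CUT} from Theorem \ref{Bellmann} together with the remark following it, completeness by reversing that argument for an arbitrary $\ee\in\Sc(G,\omega)$, exactness of \emph{CUT \& RESOLVE} by composing with Theorem \ref{res}, and the run time and the bound $m\leq|E|-|V|+1$ from Proposition \ref{siggi} and Lemma \ref{MG}. The two ``obstacles'' you flag (monotonicity of the meta-cycle dimension under arc deletion and the amortised accounting over pairs on $c$) are precisely the points the paper's proof asserts via the decomposition into arc-disjoint components $\Cc(c_0),\dots,\Cc(c_n)$ and the estimate $\Oc(|V|^2)=\Oc(|E|)$, so no genuinely different idea is involved.
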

Indeed the corollary proves Theorem \ref{B} and by Remark \ref{Msmall} the set of graphs with small $m \in\ N$, $m << |V|$ is a huge set. 
\begin{figure}[t!]
\begin{minipage}[t]{1.0\textwidth}
	\begin{algorithm}[H]
	\KwIn{$G=(V,E,\omega)$}
	\KwOut{$\ee \in \Sc(G,\omega)$}
	$\ee =\emptyset$\;
	\While{ $\exists f \in E$ with $P(f^-,f^+) \not = \emptyset$}{
	$F \leftarrow \Ec(G_{\el}(f))$\;
	\While{ $\exists e \in F$ with $P(e^-,e^+) \not = \emptyset$}{
	Choose $p \in \Pe(e^-,e^+)$\;
	$c \leftarrow \{e\} \cup p$ \;
	$k \leftarrow  \argmax^*_{h \in \Ec(c)}\big(\big(\omega(e)- \Omega(H_{e,h},\sigma_{G,c,e,h})\big)-\big(\omega(h)- \Omega(H_{h,e},\sigma_{G,c,h,e})\big)\big)$\; 
	$\ee \leftarrow \ee \cup \{k\}$\;
	$(G,\omega) \leftarrow (G\setminus k, \omega)$\;
	$F\leftarrow F \setminus \{k\}$\;}}
	\Return $\ee$ 
	\caption{ \emph{CUT} \label{CUT}}
	\end{algorithm}
\end{minipage}
\end{figure}

\begin{proof}  If  $\ee \in \Sc(G,\omega)$  and $e\in \ee$ then $\ee \setminus \{e\}$ solves the minimal FASP
on $G(E\setminus\{e\})$.  
Thus, the exactness and completeness statements follow directly from Theorems \ref{Bellmann},\ref{res}.
For $c \in \Oe(G)$ we set $m(c):= \dim_{\Z_2}\Lambda^0(M_c)$. Then $m(c) \geq m(c,f)$ for all $f \in \Ec(c)$  with $m(c,e)$ and  Proposition \ref{siggi} implies that $m(c) \leq |E|-|V|+1$ holds.
Let $c'\in \Cc(c)$ with 
$c'\cap c \not = 0$, be any cycle in the 
component of all arc connected cycles containing $c$, introduced in Definition \ref{META}. Then  $m(c') \leq m(c)$ on $G\setminus e$ for every $e \in \Ec(c)$. Thus,
as long as at least one arc $e\in \Ec(c)$ was deleted  the maximal number of appearing $\Z_2$-linear independent meta cycles appearing for 
the computation of $\sigma_{G\setminus e,c',h,k}$, $h,k \in \Ec(c')$ are bounded by $m(c)$. Thus, by setting $G_0=G$, choosing a cycle $c_0 \in \Oe(G_0)$, determing $M_{c_0} = (V_{M_{c_0}},E_{M_{c_0}})$ and considering 
$c_k \in \Oe(G_k)$, $G_k = G_{k-1}\setminus V_{M_{c_{k-1}}}$, $k \geq 1$ we obtain cycles $c_0,\dots,c_n$, $n \leq |E|$ with $\Cc(c_i)\cap \Cc(c_j)= \emptyset$ and 
$\cup_{i=0}^n\Cc(c_i) = \Oe(G)$. 
Since $\Ec\big(\Cc(c_i)\big)\cap \Ec\big(\Cc(c_j))=\emptyset$ 
the parameter $m:=\max_{i=0,\dots,n}m(c_i)$ can be determined in $\Oc\big(|E|^3\big)$ due to Lemma \ref{MG}. 

Since the algorithm \emph{CUT} computes $\sigma_{G,c,e,h}, \sigma_{G,c,h,e}$ for fixed $e \in \Ec(c)$ and all $h \in \Ec(c)$ and $|\Ec(c)|\leq |V|$, cuts the right arc  and repeats the computation at most 
$|E|$ times by observing that $\Oc(|V|^2)=\Oc(|E|)$ the run time of the algorithm \emph{CUT} can be estimated as claimed.  
Recall, that due to Theorem \ref{res} the resolved graph can be computed in $\Oc\big(|V||E|^3\big)$. Therefore, the analogous argumentation 
yields the claimed run time for the algorithm  \emph{CUT \& RESOLVE}. 
\qed \end{proof}

Due to the fact that the FASP is NP complete, as expected our approach depends exponentially on some parameter, which in our approach is the number $m$ of liner independent meta cycles. 
In cases where $m$ is large we have to use another method to solve the FASP or use a heuristic.

%

\begin{figure}[t!]
\begin{minipage}[t]{1.0\textwidth}
	\begin{algorithm}[H]
	\KwIn{$G=(V,E,\omega)$}
	\KwOut{$\ee \in \Sc(G,\omega)$}
	$\ee =\emptyset$\;
	\While{ $\exists f \in E$ with $P(f^-,f^+) \not = \emptyset$}{
	$F \leftarrow \Ec(G_{\el}(f))$\;
	\While{ $\exists e \in F$ with $P(e^-,e^+) \not = \emptyset$}{
	Choose $p \in \Pe(e^-,e^+)$\;
	$c \leftarrow \{e\} \cup p$ \;
        $ k \leftarrow  \argmax^*_{h \in \Ec(c)}(\omega(e)- \Omega(H_{e,h},\sigma_{G,c,e,h})-\omega(h)- \Omega(H_{h,e},\sigma_{G,c,h,e}))$\; 
        $\ee \leftarrow \ee \cup \{k\}$\;
        $(G,\omega) \leftarrow \big(S(G\setminus k),\tau(\omega)\big)$\;
	$F\leftarrow \Ec\big((G_{\el}(e)\big)$ w.r.t. $G$\;}}
	\Return $\ee$ 
	\caption{\emph{CUT \& RESOLVE}\label{CR}}
	\end{algorithm}
\end{minipage}
\end{figure}

\section{Valid Greedy Solutions }\label{Greedy}

As for instance  shown in \cite{greedy}  a greedy solution for the FASP needs not to be optimal. We give a criterium on solutions which guarantees optimility. Moreover, 
we can estimate the failure of every sub optimal solution. Finally, we suggest a heuristic given by a hybrid technique of the already presented approaches. 

For given graph $(G,\omega)$ we consider the functions 
$$\theta_G, \varphi_G  : E \lo \N\,, \quad \theta(e) := |\Oe(e)| \,, \quad \varphi_G(e):= \big|\Ec\big(G_{\el}(e)\big)\big| \,.$$ 
Recall, that due to \cite{Arora} determing $\theta_G(e)$ is a NP-hard problem and the results of \cite{Tarjan} and \cite{Johnson1975} solving the problem in $\Oc\big(|\theta_G(e)|(|V|+|E|)\big)$, 
where $\theta_G(e)$ can depend exponentially on $G$.
However, \cite{Roberts} could establish efficent and close estimations of the number of $s-t$ paths. 
Since $|\Oe(e)|= |P_{\el}(e^-,e^+)|$ the result enables us to determine $\theta_G(e)$ efficently, with small failure. In contrast, $\varphi_G$ can be determined in $\Oc(|E|^2)$ due to Theorem \ref{TCycle}.

\begin{definition}
Let $(G, \omega)$ be a given weighted graph with $G=G_o$.  We introduce the efficent weights 
\begin{equation*}
\xi_{G,\omega}, \eta_{G,\omega}: E \lo \Q^+ \,, \quad \xi_{G,\omega}(e):= \frac{\theta_{G}(e)}{\omega(e)}\,, \,\,\,  \eta_{G,\omega}(e):= \frac{\varphi_{G}(e)}{\omega(e)}
\end{equation*}
and set $\omega_{\max}(G,\omega):= \max_{e \in E}\omega(e)$, $\theta_{\max}(G)= \max_{e\in E}\theta_G(e)$, $\varphi_{\max}(G):=\max_{e\in E}\varphi_G(e)$,  $\xi_{\max}(G,\omega):= \max_{e \in E}\xi(e)$, $\eta_{\max}(G,\omega):= \max_{e \in E}\eta(e)$, 
and $\mu(G,\omega) =  \li\lceil\frac{|\Oe(G)|}{\xi_{\max}(G,\omega)}\re\rceil$, $\upsilon(G,\omega)= \li\lceil\frac{|E|}{\eta_{\max}(G,\omega)}\re\rceil$, where $\li\lceil\cdot\re\rceil$ denotes the Gauss-bracket. 
\end{definition}
\begin{theo} \label{E} Let $(G,\omega)$ be a given graph. Then  
\begin{equation}\label{MU}
 \Omega(G, \omega) \geq \max\big\{\mu(G,\omega), \upsilon(G,\omega)\big\}\,.
\end{equation}
Moreover, there are infinitely many weighted graphs $(G,\omega)$ with $\mu(G,\omega)=\Omega(G,\omega)$ or $\upsilon(G,\omega)=\Omega(G,\omega)$.
\end{theo}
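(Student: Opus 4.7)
The plan is to establish the two lower bounds separately by a standard covering/double-counting argument, and then exhibit a family of cycles to witness tightness.

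First I would handle the $\mu$-bound. Let $\ee \in \Sc(G,\omega)$ be an optimal feedback arc set. Since every elementary cycle $c \in \Oe(G)$ must contain at least one arc of $\ee$, one has the covering
\begin{equation*}
\Oe(G) = \bigcup_{e\in\ee}\Oe(e),
\end{equation*}
so by the union bound $|\Oe(G)| \leq \sum_{e\in\ee}\theta_G(e)$. Substituting $\theta_G(e)=\xi_{G,\omega}(e)\omega(e) \leq \xi_{\max}(G,\omega)\omega(e)$ yields
\begin{equation*}
|\Oe(G)| \leq \xi_{\max}(G,\omega) \sum_{e\in\ee}\omega(e) = \xi_{\max}(G,\omega)\,\Omega(G,\omega).
\end{equation*}
Because $\omega:E\lo\N^+$ forces $\Omega(G,\omega)\in\N$, this inequality may be strengthened to $\Omega(G,\omega)\geq \lceil |\Oe(G)|/\xi_{\max}(G,\omega)\rceil = \mu(G,\omega)$.

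Next I would treat the $\upsilon$-bound in exactly the same spirit, only replacing cycles by arcs and $\Oe(e)$ by $G_{\el}(e)$. Passing to $\Gc_o(G)$ first (which by Remark \ref{GO} preserves $\Oe$ and hence $\Omega$), I may assume $G=G_o$. For any $e\in E=\Ec(G_o)$ there is an elementary cycle $c\in\Oe(e)$, and $c$ is cut by some $f\in\ee$; thus $c\in\Oe(f)$, so $e \in \Ec(c)\subseteq\Ec(G_{\el}(f))$. Hence
\begin{equation*}
E \subseteq \bigcup_{f\in\ee}\Ec(G_{\el}(f)),
\end{equation*}
which gives $|E|\leq \sum_{f\in\ee}\varphi_G(f) \leq \eta_{\max}(G,\omega)\,\Omega(G,\omega)$, and integrality of $\Omega$ again upgrades the estimate to $\Omega(G,\omega)\geq \upsilon(G,\omega)$. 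Combining the two cases proves \eqref{MU}.

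For the second claim it suffices to exhibit an infinite family where equality already occurs. I would take $(G_n,\omega_n)$ with $G_n$ a single directed elementary cycle on $n\geq 2$ arcs and $\omega_n\equiv 1$. Then $\Omega(G_n,\omega_n)=1$, $|\Oe(G_n)|=1$, $\theta_{G_n}(e)=1$ and $\varphi_{G_n}(e)=n$ for every arc, giving $\xi_{\max}=1$, $\eta_{\max}=n$, and hence $\mu(G_n,\omega_n)=\upsilon(G_n,\omega_n)=1=\Omega(G_n,\omega_n)$. This produces infinitely many witnesses simultaneously for both equalities, completing the proof.

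The argument is essentially a double-counting with a ceiling step, so the only mildly delicate point is the reduction to $G=G_o$ needed to apply the arc-covering estimate cleanly when the graph may contain acyclic parts; once that is noted the rest is routine. \qed
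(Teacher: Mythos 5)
Your proof is correct and follows essentially the same route as the paper's: a covering/counting bound $|\Oe(G)|\le\sum_{e\in\ee}\theta_G(e)\le\xi_{\max}(G,\omega)\,\Omega(G,\omega)$ (the paper phrases this as an exact telescoping sum over the sequential removal of the solution arcs $e_1,\dots,e_n$, you as a one-shot union bound, but the content is identical), followed by the integrality ceiling, and the analogous argument for $\upsilon$. Your single-cycle family for tightness is simpler than the paper's path-like chains of cycles but equally valid, and your explicit reduction to $G=G_o$ for the $\upsilon$-bound is a point the paper glosses over.
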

\begin{proof}
Let  $\ee=\{e_1,\dots,e_n\} \in \Sc(G,\omega)$ be an arbitrarily ordered  solution of the weighted FASP. We set $ G_0 = G$ and $G_i=(V_i,E_i):=G\big(E\setminus\{e_1,\dots,e_{i-1}\}\big)$, for $i\geq 1$ 
and  denote with $\omega_{G_i}$ the corresponding restriction of 
$\omega $ to $E_i$. Now due to the fact that $\xi_{G_i,\omega_i}(e)\leq \xi_{G,\omega}(e)$ for all $e \in E_i, 1\leq i\leq n$ we obtain 
\begin{equation*}
 \Omega(G, \omega) = \sum_{i=1}^n \omega_{G_i}(e_i) = \sum_{i=1}^n \frac{\theta_{G_i}(e_i)}{\xi_{G_i}(e_i)} \geq \frac{1}{\xi_{\max}(G,\omega)}\sum_{i=1}^n\theta_{G_i}(e_i)=  \frac{|\Oe(G)|}{\xi_{\max}(G,\omega)}\,. 
\end{equation*}
Since $\Omega(G, \omega) \in \N$ this proves \eqref{MU}.
Now let $(G,\omega)$ be a graph with $\omega=1$ and $\Oe(G) = \{c_0, \dots, c_n\}$, which are arranged path like, i.e., 
$|\Ec(c_i\cap c_{i+1})| =1$ and $|\Ec(c_i\cap c_{j})| =0$ if $j \not = i$. Then one verifies easily that $\mu(G,\omega)=\Omega(G,\omega)$. By replacing $\theta_G$ with $\varphi_G$ and 
$\mu(G,\omega)$ with $\upsilon(G,\omega)$ the exact same argumentaion yields the remaining claim.  
\qed \end{proof}
\begin{remark}
Note, that of course there are many more graphs with $\mu(G,\omega)=\Omega(G,\omega)$ or $\upsilon(G,\omega)=\Omega(G,\omega)$ then those used in the proof above. 
Nevertheless, it is hard to give a good condition on a graph such that 
$\mu(G,\omega)=\Omega(G,\omega)$ or $\upsilon(G,\omega)=\Omega(G,\omega)$ holds. For instance in \cite{greedy} an example of a planar graph is given, where this is not the case. 
Certainly, the  lower bounds can be used to  improve the performance of a 
variety of algorithms solving the FASP or to control the quality of a heuristic as the Algorithms \ref{GR},\ref{GRR}.
\end{remark}
We consider the heuristics  \emph{GREEDY-CUT} and  \emph{GREEDY-CUT \& RESOLVE} and discuss their properties. 
\begin{figure}[t!]
\begin{minipage}[t]{.48\textwidth}
\begin{algorithm}[H]
	\KwIn{$G=(V,E,\omega)$}
	\KwOut{Feedback set $\ee \subseteq E$ }
	$\ee =\emptyset$\;
	\While{$\xi_{\max}(G) \not = 0$ }{
			$\ee \leftarrow \ee \cup \argmax^*_{f\in E }\xi(f)$\;
			$(G,\omega) \leftarrow (G \setminus \ee,\omega)$\;
        }
	\Return $\ee$
	\caption{ \emph{GREEDY-CUT}\label{GR}}
	\vspace{4.25mm}
\end{algorithm}
\end{minipage}
\begin{minipage}[t]{.45\textwidth}
\begin{algorithm}[H]
	\KwIn{$G=(V,E,\omega)$}
	\KwOut{Feedback set $\ee \subseteq E$ }
	$\ee =\emptyset$\;
	\While{$\xi_{\max}(G) \not = 0$ }{
			$\ee \leftarrow \ee \cup \argmax^*_{f\in E }\xi(f)$\;
			$(G,\omega) \leftarrow \big(S(G \setminus \ee),\tau(\omega)\big)$\;
        }
	\Return $\ee$
	\caption{ \emph{GREEDY-CUT \& RESOLVE}\label{GRR}}
\end{algorithm}
\end{minipage}
\end{figure}
\begin{proposition}\label{APX} Let $(G,\omega)$ and $\ee$ be a solution of \emph{GREEDY-CUT} or  \emph{GREEDY-CUT \& RESOLVE} with respect to the effective weigth $\xi$. Then 
\begin{equation}\label{app}
\frac{\Omega(G,\omega)}{\Omega_{G,\omega}(\ee)}\geq \frac{\omega_{min}(G)}{\omega_{\max}(G)\cdot \theta_{\max}(G)}  \geq \frac{\omega_{min}(G)}{\omega_{\max}(G)\cdot |\Oe(G)|} \,. 
\end{equation}
If in particular $\omega \equiv1$ then $|\ee|\leq |E|/2$. 
\end{proposition}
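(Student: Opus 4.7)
The plan is to derive \eqref{app} by combining the lower bound on $\Omega(G,\omega)$ supplied by Theorem \ref{E} with a direct upper bound on the greedy cost coming from the invariant that every iteration destroys at least one elementary cycle; the unit-weight statement is then handled by a separate pairing argument.

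From Theorem \ref{E}, $\Omega(G,\omega)\geq |\Oe(G)|/\xi_{\max}(G,\omega)$, and since $\xi_{\max}(G,\omega)=\max_e \theta(e)/\omega(e)\leq \theta_{\max}(G)/\omega_{\min}(G)$ by definition, this rearranges to $\Omega(G,\omega)\geq |\Oe(G)|\,\omega_{\min}(G)/\theta_{\max}(G)$. For the upper bound I would write $\ee=\{e_1,\dots,e_n\}$ in greedy selection order and set $G_i=G\setminus\{e_1,\dots,e_{i-1}\}$. The loop condition $\xi_{\max}(G_i)\neq 0$ forces $\theta_{G_i}(e_i)\geq 1$ at every iteration, so removing $e_i$ destroys at least one elementary cycle. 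Because the sets of cycles destroyed at distinct iterations are pairwise disjoint and together exhaust $\Oe(G)$, telescoping gives $\sum_{i=1}^n \theta_{G_i}(e_i)=|\Oe(G)|$ and in particular $n=|\ee|\leq|\Oe(G)|$. Consequently
\[
\Omega_{G,\omega}(\ee)=\sum_{i=1}^n \omega(e_i)\leq n\,\omega_{\max}(G)\leq |\Oe(G)|\,\omega_{\max}(G),
\]
and the same estimate holds for \emph{GREEDY-CUT \& RESOLVE} since the resolve step only contracts structure whose cycles are already charged to the chosen arcs. Dividing the lower and upper bounds gives the first inequality of \eqref{app}, and the trivial observation $\theta_{\max}(G)\leq|\Oe(G)|$ yields the second.

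For the unit-weight claim $|\ee|\leq|E|/2$, the approximation ratio alone combined with Remark \ref{half} only yields $|\ee|\leq\theta_{\max}(G)\,|E|/2$, so a separate combinatorial argument is required. My plan is to construct an injection $\sigma:\ee\hookrightarrow E\setminus\ee$ that assigns to each $e_i$ a distinct partner $\sigma(e_i)\in\Ec(c_i)\setminus\{e_i\}$ lying on a witness cycle $c_i\subseteq G_i$ through $e_i$. For the last pick $e_n$ the partner automatically lies in $G_{n+1}=E\setminus\ee$; for earlier picks, whenever the natural partner has already been committed or would land in $\ee$, the max-$\theta$ property of greedy forces $\theta_{G_i}(e_i)\geq 2$ (the later-picked conflicting arc $e_j$ must itself satisfy $\theta_{G_j}(e_j)\geq 1$), supplying an alternative cycle. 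The main obstacle is verifying that this local recipe globalizes, which I would formalize via Hall's condition on the bipartite availability graph between $\ee$ and $E\setminus\ee$; the verification reduces to the sub-claim that for every $S\subseteq\ee$ at least half of the arcs in the cyclic core of $(E\setminus\ee)\cup S$ lie outside $\ee$, which itself follows by applying Remark \ref{half} component-wise to the strongly connected components of that core, exploiting that $E\setminus\ee$ is acyclic and hence $S\cap E(C)$ is a FAS of each such component $C$.
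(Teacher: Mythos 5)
Your proof of the displayed inequality \eqref{app} is correct and uses the same two ingredients as the paper: the lower bound $\Omega(G,\omega)\geq |\Oe(G)|/\xi_{\max}(G,\omega)$ from Theorem~\ref{E}, and the fact that the cycle sets destroyed at distinct greedy iterations are disjoint and exhaust $\Oe(G)$, so that $\sum_i\theta_{G_i}(e_i)=|\Oe(G)|$. Your bookkeeping is slightly different (you bound $\Omega_{G,\omega}(\ee)\leq n\,\omega_{\max}(G)\leq|\Oe(G)|\,\omega_{\max}(G)$ via $n\leq|\Oe(G)|$, whereas the paper bounds $\Omega_{G,\omega}(\ee)\leq|\Oe(G)|/\xi_{G_n}(e_n)$ using the monotonicity of $\xi_{\max}$ along the greedy run), but both reductions are valid and land on the same constant.

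The unit-weight claim $|\ee|\leq|E|/2$ is where your argument has a genuine gap. The decisive step of your Hall's-condition plan is the sub-claim that $|S\cap \Ec(C)|\leq|\Ec(C)|/2$ for the cyclic core $C$ of $(E\setminus\ee)\cup S$, which you propose to derive from Remark~\ref{half} because $S\cap\Ec(C)$ is a FAS of $C$. But Remark~\ref{half} is an existence statement: it says the \emph{minimum} feedback arc set of $C$ has size at most $|\Ec(C)|/2$. It gives no upper bound on the size of an \emph{arbitrary} FAS such as $S\cap\Ec(C)$ --- indeed that set could a priori contain well more than half the arcs of $C$, and ruling this out is essentially the statement you are trying to prove, so the appeal to Remark~\ref{half} is circular. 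A second, smaller problem is that even granting the sub-claim, Hall's condition $|N(S)|\geq|S|$ does not follow: an arc of $S$ whose every witness cycle passes through $\ee\setminus S$ need not lie in the cyclic core of $(E\setminus\ee)\cup S$ at all, so $|S\cap\Ec(C)|$ can be strictly smaller than $|S|$ and the chain of inequalities does not close. You correctly flag the globalization as ``the main obstacle,'' but the proposed resolution does not overcome it. The paper instead argues by induction on $|\ee|$: it splits off the first chosen arc $e_1$, applies the induction hypothesis to $\ee\setminus\{e_1\}$ on the subgraph $\Gc(\Oe(\ee\setminus\{e_1\}))$, and charges $e_1$ to at least two arcs of a loop-free contraction $G_1^*$ whose arc set is disjoint from that subgraph, so that the two halves add up to at most $|E|/2$. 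If you want to salvage an injection-style argument, you would need to extract the two-arcs-per-deletion charge from the greedy invariant itself (each $e_i$ lies on an elementary cycle of length at least $2$ in $G_i$ since $G$ is loop-free) and then prove disjointness of the charged pairs directly, rather than via Remark~\ref{half}.
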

\begin{proof} Certainly, it suffices to prove the first estimate in \eqref{app}. We show the claim for a solution $\ee=\{e_1,\dots,e_n\}$ of \emph{GREEDY-CUT}. Assume that $\ee$ is ordered with respect to appearing arcs, 
set  $ G_1 = G$ and $G_i=(V_i,E_i):=G\big(E\setminus\{e_1,\dots,e_{i-1}\}\big)$, for $i\geq 2$ 
and  denote with $\omega_{G_i}$, $\theta_{G_i}$ and $\xi_{G_i}$ the corresponding restrictions of $\omega $, $\theta_G$, $\xi_G$ to $E_i$. Then we compute 
$$ \Omega_{G,\omega}(\ee) = \sum_{i=1}^n \omega_i(e_i) = \sum_{i=1}^n \frac{\theta_{G_i}(e_i)}{\xi_{G_i}(e_i)} \leq \frac{1}{\xi_{G_n}(e_n)}|\Oe(G)| \,.$$
Since $\xi_n(e_n) = \xi_{\min}(\ee)$ we use Theorem \ref{E} to compute  
 $$\frac{\Omega(G,\omega)}{\Omega_{G,\omega}(\ee)} \geq \frac{\xi_{G_n}(e_n)|\Oe(G)|}{\xi_{\max}(G)|\Oe(G)|} \geq \frac{\omega_{min}(G)\cdot \theta_{G_n}(e_n)}{\omega_{\max}(G)\cdot \theta_{\max}(G)} \geq 
 \frac{\omega_{min}(G)}{\omega_{\max}(G)\cdot \theta_{\max}(G)} $$ and the claim follows. A proof of the statement for \emph{GREEDY-CUT \& RESOLVE} can be given by an easy adaption of the argument above 
 and is left to the reader. 
 Now let $\omega \equiv 1 $ then we argue by 
 induction on $|\ee|$ to show that $|\ee|\leq |E|/2$ for both algorithms. If $|\ee|=1$ then due to the fact that 
$G$ possesses no loops the claim follows. 
Now let $|\ee|>1$ we order $\ee=\{e_1,\dots,e_n\}$ with respect to appearance and consider $\ee_1:=\{e_1\}$, $\ee_{2}=\ee\setminus\{e_1\}$ and $G_1:=(V_1,E_1)= G(\Oe(e_1))$, $G_2:=(V_2,E_2)=G(\Oe(\ee_2))$. 
By induction we have $|\ee_2|\leq |E_2|/2$. Consider $G / (E_1 \cap E_2)$, delete all appearing loops and denote the resulting  graph with $G_1^*$. If $\Oe(G_1^*)=\{c\}$ then all cycles of
$\Oe(G)$ are totally isolated and the claim follows by triviality. If $|\Oe(G_1^*)|>1$ then $|\ee_1|\leq |E_1^*|/2$.  Since $E_1^* \cap E_2 = \emptyset$ this implies that 
\begin{equation*}
|\ee|= |\ee_{1}| +|\ee_2|\leq |E_{1}^*|/2 + |E_{2}|/2 \leq |E|/2
\end{equation*}
as claimed.
\qed \end{proof}

Be replacing $\theta_G$, with $\varphi_G$ and $\xi_{G,\omega}$ with $\eta_{G,\omega}$ in \emph{GREEDY-CUT} or  \emph{GREEDY-CUT \& RESOLVE} the analouge argumentaion yields. 
\begin{proposition} Let $(G,\omega)$with $G=G_o$ and $\ee$ be a solution of \emph{GREEDY-CUT} or  \emph{GREEDY-CUT \& RESOLVE} with respect to the effective weigth $\eta$. Then 
\begin{equation*}
\frac{\Omega(G,\omega)}{\Omega_{G,\omega}(\ee)}\geq \frac{\omega_{min}(G)}{\omega_{\max}(G)\cdot \varphi_{\max}(G)}  \geq \frac{\omega_{min}(G)}{\omega_{\max}(G)\cdot |E|} \,. 
\end{equation*}
If in particular $\omega \equiv1$ then $|\ee|\leq |E|/2$. 
\end{proposition}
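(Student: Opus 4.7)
My plan is to mirror the proof of Proposition \ref{APX}, substituting the effective weight $\xi$ by $\eta$, the quantity $\theta_G$ by $\varphi_G$, and the cycle count $|\Oe(G)|$ by the arc count $|E|$. The lower bound for $\Omega(G,\omega)$ is supplied directly by Theorem \ref{E}, which gives $\Omega(G,\omega) \geq \upsilon(G,\omega) \geq |E|/\eta_{\max}(G,\omega)$; together with the elementary estimate $\eta_{\max}(G,\omega) = \max_e \varphi_G(e)/\omega(e) \leq \varphi_{\max}(G)/\omega_{\min}(G)$, this yields $\Omega(G,\omega) \geq |E|\,\omega_{\min}(G)/\varphi_{\max}(G)$.

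For the upper bound on $\Omega_{G,\omega}(\ee)$, the main obstacle is that the telescoping identity used in Proposition \ref{APX}, namely $\sum_i \theta_{G_i}(e_i) = |\Oe(G)|$ (each cycle is charged exactly once, at the step at which its first arc gets picked), has no true analogue in the $\varphi$-setting: an arc $f \in E$ typically lies in $G_{\el}(e_i)\cap G_{\el}(e_j)$ relative to $G_i$ and $G_j$ for several distinct greedy steps, so one only obtains $\sum_i \varphi_{G_i}(e_i) \geq |E|$, which points in the wrong direction for upper-bounding the greedy cost. I would circumvent this by using the coarser bound $\Omega_{G,\omega}(\ee) = \sum_{e\in\ee}\omega(e) \leq |\ee|\,\omega_{\max}(G) \leq |E|\,\omega_{\max}(G)$. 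Dividing the two bounds cancels the common factor $|E|$ and produces $\Omega(G,\omega)/\Omega_{G,\omega}(\ee) \geq \omega_{\min}(G)/\bigl(\omega_{\max}(G)\varphi_{\max}(G)\bigr)$, and the further estimate $\varphi_{\max}(G) \leq |E|$ (a consequence of $G_{\el}(e)\subseteq G$) delivers the second claimed inequality. The validity of this estimate is unaffected by whether the arc sequence is produced by \emph{GREEDY-CUT} or \emph{GREEDY-CUT \& RESOLVE}, since neither the lower bound from Theorem \ref{E} nor the trivial bound $|\ee|\leq|E|$ refers to intermediate graphs.

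Finally, for $\omega \equiv 1$ the bound $|\ee| \leq |E|/2$ is proved by induction on $|\ee|$ exactly as in the last part of Proposition \ref{APX}: one separates $\ee$ into $\ee_1 = \{e_1\}$ and $\ee_2 = \ee\setminus\{e_1\}$, considers $G_1 = G(\Oe(e_1))$ and $G_2 = G(\Oe(\ee_2))$, and passes to $G_1^* = G_1/(E_1 \cap E_2)$ with loops removed so that $E_1^* \cap E_2 = \emptyset$; the induction step then yields $|\ee| = |\ee_1| + |\ee_2| \leq |E_1^*|/2 + |E_2|/2 \leq |E|/2$. This combinatorial argument uses only that each greedy choice lies on at least one remaining cycle, which is equally a property of $\xi$- and $\eta$-greedy (since $\eta_{G,\omega}(e) > 0$ iff $e$ is on a cycle in $G$), so no adaptation of the inductive step is required.
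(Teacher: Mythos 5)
Your proposal is correct, but it does not follow the paper's route: the paper proves this proposition only by the one-line remark that ``replacing $\theta_G$ with $\varphi_G$ and $\xi_{G,\omega}$ with $\eta_{G,\omega}$ the analogous argumentation yields'' the claim, i.e.\ it asks the reader to transplant the chain $\Omega_{G,\omega}(\ee)=\sum_i \varphi_{G_i}(e_i)/\eta_{G_i}(e_i)\leq |E|/\eta_{G_n}(e_n)$ from Proposition \ref{APX}. Your diagnosis of why that transplant is problematic is accurate and worth making explicit: deleting $e_i$ does not delete the other arcs of $G_{\el}(e_i)$, so an arc can be charged at several greedy steps and one only gets $\sum_i\varphi_{G_i}(e_i)\geq |E|$ (the direction that is needed, and does hold, in the proof of Theorem \ref{E}); indeed the intermediate inequality $\Omega_{G,\omega}(\ee)\leq |E|/\eta_{G_n}(e_n)$ already fails on small examples (two parallel arcs from $v$ to $u$ closed by $k$ internally disjoint $u$--$v$ paths of length two gives greedy cost $2$ versus $(2k+2)/(2k+1)$). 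Your repair --- pairing the Theorem \ref{E} lower bound $\Omega(G,\omega)\geq |E|\,\omega_{\min}(G)/\varphi_{\max}(G)$ with the trivial upper bound $\Omega_{G,\omega}(\ee)\leq |\ee|\,\omega_{\max}(G)\leq |E|\,\omega_{\max}(G)$ and cancelling $|E|$ --- is valid and gives exactly the stated ratio, with $\varphi_{\max}(G)\leq|E|$ supplying the second inequality. What this buys is robustness (no bookkeeping over the intermediate graphs $G_i$ at all); what it costs is that the bound is then seen to hold for \emph{every} feedback arc set, not just the greedy one, which exposes how weak the stated guarantee is but is nonetheless a complete proof of the proposition as written. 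The unweighted bound $|\ee|\leq|E|/2$ is correctly handled by observing that the induction in Proposition \ref{APX} uses only that each selected arc lies on a surviving cycle, a property shared by the $\eta$-greedy since $\eta_{G,\omega}(e)>0$ exactly when $\Oe(e)\neq\emptyset$.
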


\begin{example} Consider the directed clique $D_3$ from Figure~\ref{6} with constant weight $w\equiv 1$. Then $D_3$ coincides with its resolved graph and regardless of possible choices 
every candidate $\ee$ the algorithm  \emph{GREEDY-CUT} or  \emph{GREEDY-CUT \& RESOLVE}  proposes, satisfies $|\ee|=3$. Since $\mu(D_3) =3$ every candidate is optimal. 
\end{example}

Summarizing our results so far  the heuristics \emph{GREEDY-CUT} or \emph{GREEDY-CUT \& RESOLVE} solve the FASP with controlled variance in 
$\Oc(|E|^4)$, due to Theorem \ref{TCycle}, in case of effective weight $\eta$  and in $\Oc(f_\theta|E|^2)$ in case of effective weight $\xi$, where $f_\theta$ shall control the computation steps of $\theta_G(e)$, $\forall e \in E$. 
Even if we approximate $\theta(e)$ by the method of \cite{Roberts} the resulting algorithm remains an efficient heuristic.   
However, possibly there is a more accurate method available, given by a hybrid algorithm of the methods introduced in this article. 
We expect that an implementation of this strategy yields a fast and precise general FASP-SOLVER, which due to section \ref{Prel} is therfore also a FVSP-SOLVER.

\begin{strat}\label{S} For given weighted  graph $(G,\omega)$ 

\begin{enumerate}
 \item[1.] Compute the resolved graph $(S,\tau)$.  
 \item[2.] Choose a cycle $c \in \Oe(S)$ and compute the meta graph $M_{c}$. 
 \item[3a.] If the number of meta cycles $m(c)$ is large determine a ``good'' feedback vertex set $\nu_{M_{c}}$ of $M_{c}$, with respect to the vertex weight 
        $$\omega_{M_c}: V_{M_c}\lo \R^+\,, \quad 
       \omega_{V_c}(v) = \frac{1}{\omega(v)} \,, $$ 
       using one of the known or  presented methods. 
 \item[3b.] Alternatively, compute a maximal spanning tree $T_{M_c}=(V_{m_c},E_{T_c})$ with respect to the arc weight weight 
       $$\omega_{M_c}: E_{M_c}\lo \R^+\,, \quad 
       \omega_{M_c}(h) = \frac{1}{\omega(e)+ \omega(f)} \,, $$ 
       where $h=[e,f]$, $e,f, \in E$ and set $\ee_{M_c}=E_{M_c} \setminus E_{T_c}$. 
 \item[4.] Set $G^* = G/\nu_{M_{c}}$ and use CUT or CUT \& RESOLVE to solve the FASP on on the component $\Cc(c^*)$  of arc connected cycles containing $c^*=c/\nu_{M_c}$. 
 \item[5.] Choose a new cycle $c'$ of the resulting graph and repeat 1.-4. until no such cycle exists.
 \item[6.] Use the backtracking procedure of Proposition \ref{pro:construct_essential} to compute a feedback arc set $\ee \subseteq E$ of $G$.   
\end{enumerate}
 The union  $\nu_M$ of the meta feedback vertex sets of the meta graphs can be interpreted as arcs, which are forbidden to cut in $G$. 
 The resulting feedback arc set $\ee$ will be optimal up to this obstruction ,i.e., we have 
$$ M_{c^*}^* = M_c\setminus \nu_M\,, $$ 
where $M_{c^*}^*$ denotes the meta graph of $G^*$ with respect to $c^*=c/\nu_M$.
      Hence, the quality of this heuristic can be evaluated by measuring how good $G^*$ approximates $G$. Thus, if $|\nu_M|<<|E|$ and the weight of the forbidden arcs is very high, i.e.,  
       $$ \frac{\sum_{f \in \nu_M} \omega(f)}{|\nu_M|} >> \frac{\sum_{e \in E}\omega(e)}{|E|} $$
       the arcs of $\nu_M$ will probably  not be contained in any optimal solution, yielding the correctness of Strategy \ref{S}. Additionally, the lower bounds $\mu(G,\omega)$, $\upsilon(G,\omega)$ from section \ref{Greedy} can be used to validate correctness. 
       Analogous controls can be thought of, if we choose the alternative 3b. 
       \end{strat}

\section{Discussion}
\label{Algo}
An implementation of the described algorithms is planned to be realized. Certainly, a comparison of real run times with other approaches would be of great interest. 
So far we compare our results with other theoretical approaches. Due to the immense amount of results during the last decades we restrict our discussion to publications, 
which do not restrict themselves to very tight graph classes as \emph{tournaments} \cite{Karpinski2010} or \emph{reducible flow graphs} \cite{Ramachandran:1988}. 
Finally, we suggest how the approaches of this article might be adapted to related problems. 
\subsection{The algorithms from I. Razgon and J. Chen et al.}

Note that for given graph $(G,\omega,\gamma)$ with arc weight $\omega$ and vertex weight $\gamma$ a \emph{brute force} method of solving the FASP/FVSP is given by considering every subset 
$\ee \subseteq E$  or $\nu \subseteq V$  and check whether the graphs $G\setminus \ee $, $G\setminus \nu$ are acyclic, respectively. Since due to Remark \ref{GO}, checking for acyclicity requires $\Oc(|E|^2)$ operations,
we can generate a list of all FAS's or FVS's possessing length $l_A \leq |\Pc(E)|=2^{|E|}$, $l_V\leq |\Pc(V)|=2^{|V|}$. Choosing the cheapest FAS or FVS  yields therefore a brute force algorithm solving 
the FASP/FVSP in $\Oc(|2^{|E|}|E|^2|)$, $\Oc(2^{|V|}|E|^2)$, respectively.  

The algorithm of \cite{Razgon} solves the unweighted FVSP on simple graphs in $\Oc(1.9977^{|V|}|V|^{\Oc(1)})$. 
Compared to the brute force algorithm this yields almost no improvement. Therefore, the question occured whether the parametrised version of the FVSP could be solved by an \emph{fixed parameter tractable algorithm}.
Every NP-complete problem can be solved by a fixed parameter tractable algorithm, i.e., by choosing 
$p$ as the problem size there is an algorithm with complexity $\Oc(f(p))$, where $f$ is an on the parameter $p$ exponentially depending function. Thus, the 
term fixed parameter tractable could be misleading. The precise question is whether 
there exists an algorithm with run time $\Oc(f(k)|V|^{\Oc(1)})$ computing a FVS of length less than $k$ or determing that no such set exists.
Since the FVSP is NP-complete the function $f$ will be exponentially dependent on $k$ unless $P=NP$. 
Indeed, the algorithm of \cite{Chen} solves the parametrised version of the FVSP in 
$\Oc\big(|E|^44^kk^3k!\big)$. Thus, $f(k)=k^34^{k}k!$, increases even worse than exponentially in $k$. 
Since  a small feedback length almost always correlates to small graphs or very special graphs, e.g. tree-like graphs, even improvements of the  algorithm won't be usefull 
in many applications. Therefore, the article might be seen as an purely theoretical approach answering this question. 
Indeed, to the best of our knowledge none of the algorithms  were used for an implementation of a general FVSP/FASP-SOLVER.

In contrast,  the algorithms 
\emph{CUT} and \emph{CUT \& RESOLVE} solve the FASP or FVSP on  weighted multi-digraphs in $\Oc\big(2^{m}|E|^4\log(|V|)\big)$ and \linebreak  $\Oc\big(2^{n}\Delta(G)^4|V|^4\log(|E|)\big)$. 
The parameters $m$ and $n$ fulfill $m \leq  |E|-|V| +1$,  $n \leq (\Delta(G)-1)|V|-|E| +1$ and can be computed in $\Oc(|E|^3)$, $\Oc(\Delta(G)^3|V|^3)$, respectively.
Thus, in both cases we can efficently control the run time of the exact solutions, which enables us to a priori decide whether the given instance shall be solved exactly or 
by an heuristic, e.g., Strategy \ref{S}. This crucial difference to the other approaches and the fact that Strategy \ref{S} is an heuristic on the meta level and not on the instance itsself,
makes us confident that an implementation generates a fast and accurate FASP/FVSP-SOLVER yielding a deep impact on computational an applied sciences.

\subsection{The polytop approach from C. Lucchesi et al. } 
\begin{figure}[t!]
 \centering
 \input{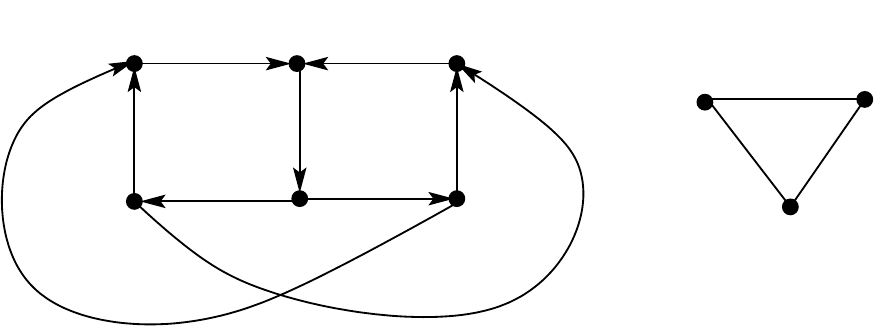_t}
 \caption{A directed version of $K_{3,3}$.}
 \label{Ex}
\end{figure}

In \cite{Lucchesi:1978} and \cite{Groetschel:1985} a polytope of arc sets is assigned to a given graph. The FASP translates to solve a certain linear optimization over this polytope. 
In the case of \emph{planar} or more general \emph{ weakly acyclic graphs} the polytope is integral, i.e., it possesses integral corners. Since the optimum will be obtained in at least one of the corners, 
one can apply the so called  \emph{ellipsoid method for submodular functions} \cite{Groetschel}  to find the  right corner in polynomial time, 
see also \cite{marti} for further details. The approach is  certainly remarkable though it contains some weaknesses. 

The first problem is that though the algorithm runs in polynomial time the degree of the polynomial depends on a variety of parameters and cannot be estimated 
by hand a priori. Therefore, there are planar or weakly acyclic graphs, which can be solved efficently from a theoretical view point but actually a computer based implementation of the approach 
cannot ensure to meet a performance behavior applications require. 

Secondly, the class of weakly acyclic graphs is not well classified yet. Thus, if we leave the class of planar 
graphs it is hard to say whether a given graph is weakly acyclic or not. 
For instance, consider  the directed version of $K_{3,3}$ in Figure \ref{Ex}, which is known to be a weakly acyclic graph. Then it is not hard to see 
that the meta graph $M_{c}$ contains only one cycle. Thus, though $G$ is not resolvable we can efficently solve the FASP on $G$ by applying 
the algorithm \emph{CUT}. In fact all examples of weakly acyclic graphs given in \cite{Groetschel} turn out to be efficently solvable by \emph{CUT}. 
Since the techniques of this article are not sensitive to topological obstructions as planarity  we expect that there are instances of the FASP, 
which are neither planar nor weakly acyclic and even though can be solved efficently by \emph{CUT} or \emph{CUT \& RESOLVE}. 
On the other hand, by arranging cycles along several meta cycles  it is quite easy to construct a planar graph $G$ with a number $m \in \Oc(|E|)$ of linear independent meta cycles. Thus, so far 
none of the approaches can state to solve the ``larger'' instance class efficently. 

However, a deeper understanding of the meta graphs and their topology  seem to be the most relevant 
tasks for further research, which  might  enable us to classify weakly acyclic graphs 
and yield a completely new  perspective to other questions in graph theory.

\subsection{Heuristics}

In \cite{Saab} a summarization of heuristic approaches is given and several new ones are introduced. The weak point in all these approaches is that they do not provide an non-empirical 
control of the variance of the heuristical solution from the optimum. Therefore, it is impossible  to guarantee whether a solution is tight to the optimum. 
In \cite{Huang} a good lower bound of the feedback length for Eulerian graphs is given and therefore 
it would be interesting how our bound behaves on this graph class. In general, if $\ee$ denotes a feedback set the heuristic \emph{GREEDY-CUT} or 
\emph{GREEDY-CUT \& RESOLVE} proposes, then  by Proposition \ref{APX}
we  have shown that if $\omega \equiv 1$ then 
\begin{equation}\label{est}
 \max\big\{\mu(G),\upsilon(G)\big\}\leq \Omega(G,\omega)  \leq \Omega_{G}(\ee)  \leq |E|/2\,,
\end{equation}
yielding a controlled variance, as long as $\mu(G)$ can be determined  or estimated from below, see \cite{Roberts}. We conjecture that \eqref{est} improves the known estimates given by \cite{berger}. 
Furthermore, other heuristics can be
improved by the results of this article. For instance, the counter example for the Greedy approach introduced in \linebreak \cite{greedy} is resolvable and therefore 
\emph{RESOLVE \& CUT} closes this gap. 
A comparison of Strategy \ref{S}, with the approximation of \linebreak  \cite{Even:1998}, where an  approximation ratio in 
$\Oc\big(\log(\Omega)\log(\log(\Omega))\big)$ was established, might be usefull as well.

\subsection{New Approach to the Subgraph Homeomorphism Problem}
\label{HOMP}

The NP-complete \emph{directed subgraph homeomorphism problem} studied by \linebreak \cite{Fortune:1980}
is to consider  two given graphs $G=(V_G,E_G)$ and $P=(V_P,E_P)$ together
with an injective  mapping $m: V_P\lo V_G$ of vertices of $P$ into the vertices of $G$. Now the problem is given by deciding whether there exists a
injective mapping from arcs of $P$ into pairwise node disjoint elementary paths of $G$ such that an
arc $f$  with head $h$ and tail $t$ is mapped on an elementary path from $m(t)$ to
$m(h)$.
For a given  graph $G^*=(V^*,E^*)$, $u,v \in V^*$, $f=(p,q) \in E^*$ with $\deg(p),\deg(q)\geq 2$ we consider a special instance of the 
directed subgraph homeomorphism problem by setting $G=G^*\setminus f$, $P=(V_P,E_P)$ with $V_P=\{a,b,c,d\}$, $E_P=\{(a,b),(c,d)\}$ 
and $m(a)=u, m(b)=p, m(c)=q,m(d)=v$.
Thus, solving the subgraph homeomorphism problem with respect to these special instances is equivalent to decide whether $f$ is an arc of $G^*_{\el}(u,v)$. 
Hence, in addition to the polynomial time solvable instance classes
known from
\cite{Fortune:1980}, e.g. stars, also problem instances as defined above are polynomial time
solvable due to Theorem \ref{TCycle}. Potentially, our observations can be generalized in regard of this problem.


\appendix

\section{The Essential Minor}

This section is used to prove Propositions \ref{cor:ecg_equiv} and \ref{pro:construct_essential}.
To do so we recall that $\Oe(e) = \li\{c \in \Oe(G) \mi e \in \Ec(c)\re\}$ and state the following Lemmas.

\begin{lm}\label{lem:ecg_paths} 
Let $G$ be a graph, $\omega: E \lo \N^+$ be an arc weight, $\ee \in
\Sc(G,\omega)$, and $e\in E$. Then either 
\begin{equation}\label{gamma}
\ee \cap [e]_{\sim_\Gamma} = \emptyset\quad\text{or}\quad |\ee \cap [e]_{\sim_\Gamma}|=1 \,.  
\end{equation}
If in particular, $\ee \cap [e]_{\sim_\Gamma} \not = \emptyset$ then $\ee \cap [e]_{\sim_\Gamma}$ minimizes $\omega$ on $[e]_{\sim_\Gamma}$. 
\end{lm}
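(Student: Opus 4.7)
The plan is to exploit the defining feature of the relation $\sim_\Gamma$: if $e',e'' \in [e]_{\sim_\Gamma}$ then both lie on a maximal directed branch-point-free path $(w_0,w_1,\dots,w_n,w_{n+1})$, where every internal vertex $w_i$ satisfies $\deg^-(w_i)=\deg^+(w_i)=1$. First I would establish the key structural observation that
$$\Oe(e') \;=\; \Oe(e'') \qquad \text{for all } e',e'' \in [e]_{\sim_\Gamma}.$$
Indeed, any elementary cycle $c$ running through one arc of the path must, upon reaching a vertex $w_i$, continue along the unique outgoing arc $(w_i,w_{i+1})$ and must have entered via the unique incoming arc $(w_{i-1},w_i)$. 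Iterating this forces $c$ to contain every arc of the branch-point-free path, hence the whole class $[e]_{\sim_\Gamma}$. So all arcs in the class are traversed by exactly the same cycles.

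Next I would rule out $|\ee \cap [e]_{\sim_\Gamma}| \geq 2$. Suppose $e',e'' \in \ee \cap [e]_{\sim_\Gamma}$ with $e' \neq e''$. Then I claim $\ee' := \ee \setminus \{e''\}$ is still a FAS. Any cycle $c$ of $G$ either (i) satisfies $c \notin \Oe(e'')$, in which case $\ee' \cap c = \ee \cap c$ is still non-empty since $\ee$ was a FAS, or (ii) $c \in \Oe(e'') = \Oe(e')$, in which case $e' \in c \cap \ee'$. Either way $G \setminus \ee'$ is acyclic, and $\Omega_{G,\omega}(\ee') < \Omega_{G,\omega}(\ee)$ since $\omega$ is strictly positive. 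This contradicts $\ee \in \Sc(G,\omega)$, proving \eqref{gamma}.

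Finally, assume $\ee \cap [e]_{\sim_\Gamma} = \{e'\}$ and suppose for contradiction that $e'$ does not minimize $\omega$ on $[e]_{\sim_\Gamma}$; pick $e^{\ast} \in [e]_{\sim_\Gamma}$ with $\omega(e^{\ast}) < \omega(e')$. Set $\ee^{\ast} := (\ee \setminus \{e'\}) \cup \{e^{\ast}\}$. The same case analysis as above, using $\Oe(e')=\Oe(e^{\ast})$, shows that $\ee^{\ast}$ is still a FAS, but now $\Omega_{G,\omega}(\ee^{\ast}) < \Omega_{G,\omega}(\ee)$, again contradicting optimality of $\ee$. Hence $\ee \cap [e]_{\sim_\Gamma}$ attains $\min_{f \in [e]_{\sim_\Gamma}} \omega(f)$. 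The only step requiring care is the structural claim $\Oe(e')=\Oe(e'')$, which hinges on the fact that the intermediate vertices of a $\sim_\Gamma$-path have unique in- and out-neighbours; once this is in hand, both conclusions reduce to a straightforward exchange argument. \qed
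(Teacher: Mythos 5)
Your proposal is correct and follows essentially the same route as the paper's proof: the paper likewise observes that all arcs of a $\sim_\Gamma$-class are traversed by exactly the same elementary cycles (because the internal vertices of the branch-point-free path have unique in- and out-arcs), and then concludes via the same exchange/optimality reasoning that at most one arc of the class is cut and that this arc must minimize $\omega$ on the class. Your write-up merely spells out the exchange arguments that the paper leaves implicit.
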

\begin{proof}
Let $\ee \in \Sc(G,\omega)$ and $e\in \ee$. Since every arc $f\in E$ with  with $e\sim_\Gamma f$  is connected by a branch point free path with $e$ we have that $\Oe(e)=\Oe(f)$. 
Thus, at most one  arc in $[e]_{\sim_\Gamma}$ will be cutted and this arc has to minimize $\omega$  on $[e]_{\sim_\Gamma}$.
\qed \end{proof}

\begin{lm}\label{lem:ecg_simple} 
Let $G$ be a positively weighted graph, $\ee \in \Sc(G,\omega)$, and $e\in E$.
Then either 
\begin{equation}\label{ffi}
\ee \cap [e]_{\sim_\Phi}= \emptyset\quad \text{or}\quad \ee \cap [e]_{\sim_\Phi}= [e]_{\sim_\Phi}\,. 
\end{equation}
\end{lm}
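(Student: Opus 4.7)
The plan is to prove the dichotomy by a short arc-swap argument: if $\ee$ cuts some but not all arcs in the parallel class $[e]_{\sim_\Phi}=F^+(e)$, then one of the cut arcs can be removed from $\ee$ without producing any new cycle, which contradicts optimality since $\omega$ takes strictly positive integer values.

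The first step will set up the contradiction. Assume $\ee\cap [e]_{\sim_\Phi}$ is a proper non-empty subset of $[e]_{\sim_\Phi}$, pick $f\in \ee\cap[e]_{\sim_\Phi}$ and $g\in [e]_{\sim_\Phi}\setminus \ee$, and form $\ee':=\ee\setminus\{f\}$. Because $\omega(f)\geq 1$ we immediately have $\Omega_{G,\omega}(\ee')<\Omega_{G,\omega}(\ee)$, so it will suffice to show that $\ee'$ is still a FAS in order to violate $\ee\in\Sc(G,\omega)$. Suppose therefore that $G\setminus \ee'$ still contains a cycle; by Remark \ref{rem:OG} I may pick an elementary one $c$. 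Since $G\setminus \ee$ is acyclic and $\ee\setminus \ee'=\{f\}$, the cycle $c$ must use the arc $f$.

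The key step is the arc swap. I will replace $f$ by $g$ in $c$ and set $c':=(c\setminus\{f\})\cup\{g\}$; since $\tail g=\tail f$ and $\head g=\head f$, the resulting multiset of arcs is still a directed closed walk, hence a directed cycle in $G$. All arcs of $c'$ lie in $E\setminus \ee$: those inherited from $c$ lie in $G\setminus \ee'$ and differ from $f$, so they lie in $G\setminus \ee$, while $g\notin\ee$ by choice. Hence $c'$ is a cycle in $G\setminus \ee$, contradicting that $\ee$ is a FAS, and \eqref{ffi} follows.

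The only point that requires any care is verifying that the swapped walk $c'$ really is a cycle in $G$ rather than something degenerate; this is immediate from the defining property of the parallel class $F^+(f)=F^+(g)$. Consequently, I do not expect a serious obstacle: the entire argument is a one-paragraph exchange move enabled by the existence of a parallel arc in $[e]_{\sim_\Phi}\setminus \ee$.
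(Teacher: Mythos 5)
Your proof is correct and rests on the same idea as the paper's: cycles through parallel arcs correspond to one another by substituting one parallel arc for the other, so a partially cut class $F^+(e)$ makes the cut arcs redundant. Your single-cycle swap $c\mapsto (c\setminus\{f\})\cup\{g\}$ is in fact a cleaner and more direct rendering of the paper's argument, which reasons globally about the cycle sets $\Oe(e)$, $\Oe(f)$ and an explicit replacement feedback set; note also that only positivity of $\omega$ is needed, not integrality.
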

\begin{proof}
Let $\ee \in \Sc(G,\omega)$ and $e \in \ee$. Assume there is $f \in
F^+(e)\setminus \ee$  then certainly $\ee \cap F^-(e)= F^-(e)$ otherwise there
would be a two-cycle that is not cutted. Now, let $e_1,\dots,e_k \in \ee\setminus F(e)$,
$k \in \N$, be such that $\Oe(f) \cap \Oe(e_i) \not = \emptyset$ and
$\Oe\big(\{e_1,\dots,e_k\}\big)\supseteq \Oe(f)$. Since the cycles in $\Oe(e)$ and
$\Oe(f)$ differ only in a single arc, i.e., $e$ and $f$, it suffices to cut the arcs 
$F^-(e)\cup \{e_1,\dots,e_k\}$ to cut all cycles in $\Oe(e)$, i.e., 
$$\Oe\big(F^-(e)\big)\cup \Oe\big(\{e_1,\dots,e_k\}\big)\supseteq \Oe(e)\,.$$ Since
$F^-(e)\cup \{e_1,\dots,e_k\}$ is therefore a cheaper possibility than $\ee$ cutting $\Oe(F^+(e))$, this contradicts that $\ee \in \Sc(G,\omega)$ and yields the claim.
\qed \end{proof}
Now we state again Proposition \ref{cor:ecg_equiv} and deliver its proof.
\begin{pro}
Let $G=(V,E,\omega)$ be a positively weighted graph with essential minor
$(C,\omega_C)$ and let $\ee \in \Pc(E)$ and $\ee_C$ be the image of $\ee$ in $(C,\delta)$. Then 
$$\ee\in \Sc(G,\omega) \eq \ee_C \in \Sc(C,\delta)\,.$$
In particular $\Omega(G,\omega)=\Omega(C,\omega_C)$.
\end{pro}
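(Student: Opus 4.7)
The plan is to prove the equivalence by induction along the construction of $(C,\delta)$. Since $(G_k,\omega_k)$ stabilizes in finitely many steps at $(C,\delta)$, it suffices to establish, for a single quotient step, that the natural projection induces a bijection between $\Sc(G_k,\omega_k)$ and $\Sc(G_{k+1},\omega_{k+1})$ preserving the total weight. The two cases I must handle are the $\Gamma$-contraction, i.e.\ $(G_k,\omega_k)\to (G_k/_\Gamma,\omega_k/_\Gamma)$, and the $\Phi$-contraction, i.e.\ $(G_k,\omega_k)\to(G_k/_\Phi,\omega_k/_\Phi)$; the ``image'' map featuring in the statement is then the composition of these successive projections.

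For the $\Gamma$-step, let $\pi_\Gamma: E\to E/_\Gamma$ denote the projection. Every elementary cycle of $G$ traverses any branch-point-free directed path it meets in its entirety, so $\pi_\Gamma$ induces a bijection between $\Oe(G)$ and $\Oe(G/_\Gamma)$: the inverse expands each equivalence class back to its defining branch-point-free path. Given $\ee\in\Sc(G,\omega)$, Lemma \ref{lem:ecg_paths} tells us that $|\ee\cap [e]_{\sim_\Gamma}|\leq 1$ and that whenever the intersection is nonempty it realizes the minimum weight $\omega/_\Gamma([e]_{\sim_\Gamma})$. Hence $\pi_\Gamma(\ee)$ is a FAS of $G/_\Gamma$ with $\Omega_{G/_\Gamma,\omega/_\Gamma}(\pi_\Gamma(\ee))=\Omega_{G,\omega}(\ee)$. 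Conversely, any $\ee_{/\Gamma}\in\Sc(G/_\Gamma,\omega/_\Gamma)$ lifts to a FAS of $G$ of equal weight by selecting, in every class $[e]_{\sim_\Gamma}\in\ee_{/\Gamma}$, an arc of minimum $\omega$-weight. Minimality on both sides forces equality of the optimal values.

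For the $\Phi$-step, let $\pi_\Phi: E\to E/_\Phi$ denote the projection. Every elementary cycle uses at most one arc of a given parallel class $F^+(f)$, and swapping this arc for any other member of the class yields another elementary cycle, so $\pi_\Phi$ surjects $\Oe(G)$ onto $\Oe(G/_\Phi)$ with each cycle of $G/_\Phi$ meeting the class $[e]_{\sim_\Phi}$ being hit by exactly $|[e]_{\sim_\Phi}|$ cycles of $G$. By Lemma \ref{lem:ecg_simple}, any $\ee\in\Sc(G,\omega)$ intersects each $\Phi$-class either trivially or totally; hence the full preimage of $\pi_\Phi(\ee)$ coincides with $\ee$ and $\Omega_{G/_\Phi,\omega/_\Phi}(\pi_\Phi(\ee))=\sum_{[e]_{\sim_\Phi}\subseteq\ee}\sum_{e'\in [e]_{\sim_\Phi}}\omega(e')=\Omega_{G,\omega}(\ee)$. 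Conversely, $\ee_{/\Phi}\in\Sc(G/_\Phi,\omega/_\Phi)$ lifts to $\ee=\pi_\Phi^{-1}(\ee_{/\Phi})$, which cuts every cycle of $G$ and has equal weight.

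Iterating along the chain $(G_0,\omega_0),(G_1,\omega_1),\dots,(G_K,\omega_K)=(C,\delta)$ concludes the proof, with the ``image'' map being the composition of the successive $\pi_\Gamma$ and $\pi_\Phi$. The principal obstacle is purely bookkeeping: one must verify that composing many projections still sends optima to optima and preserves weight. This is exactly what Lemmas \ref{lem:ecg_paths} and \ref{lem:ecg_simple} provide, since they force every optimal FAS of every intermediate graph to respect the equivalence classes being collapsed at the next step, so the projection is well-defined on optima throughout the recursion.
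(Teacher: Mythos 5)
Your proof follows essentially the same route as the paper's: both iterate over the construction of the essential minor and invoke Lemmas \ref{lem:ecg_paths} and \ref{lem:ecg_simple} to show that each $\Gamma$- and $\Phi$-contraction is a weight-preserving correspondence on feedback sets, hence on optima (the paper merely packages the two contractions into a single step $G_1=(G/_\Gamma)/_\Phi$ before iterating). The one shared subtlety is that the backward implication for an arbitrary $\ee\in\Pc(E)$ needs slightly more than a weight-preserving bijection between $\Sc(G,\omega)$ and $\Sc(C,\delta)$: one must also exclude a non-optimal $\ee$ (e.g.\ one containing a non-minimal arc of a $\Gamma$-class, or a proper nonempty part of a $\Phi$-class) whose image nevertheless coincides with that of an optimal set, which is the point the paper addresses by arguing that any $\ee$ with optimal image must itself satisfy \eqref{gamma} and \eqref{ffi}.
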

\begin{proof}
Let $\ee \in \Pc(E)$ and $\ee_1= (\ee/_\Gamma)/_\Phi \subseteq E_1$ be the image of $\ee$ in $G_1 = (G/_\Gamma)/_\Phi$. We recall that $\omega_1= (\omega/_\Gamma)/_\Phi$ 
was defined in Definition \ref{def:ecg} and  show that 
$$\ee\in \Sc(G,\omega) \eq \ee_1 \in \Sc(G_1,\omega_1) \,\,\text{and}\,\,\Omega(G,\omega)=\Omega(G_1,\omega_1)\,.$$
Assume that $\ee \in \Sc(G,\omega)$ then by Lemmas \ref{lem:ecg_paths}, \ref{lem:ecg_simple} and the construction of $(G_1,\omega_1)$
we obtain $\Omega_{G,\omega}(\ee)= \Omega_{G_1,\omega_1}(\ee_1)$.
Thus, if  $\ee_1 \not \in  \Sc(G_1,\omega_1)$ 
then we choose 
$\alpha\in   \Sc(G_1,\omega_1)$ and a FAS $\ee' \subseteq E$ of $G$ such that the equations \ref{gamma},\ref{ffi} hold and  
$(\ee'/_\Gamma)/_\Phi =\alpha$. Consequently  $\Omega_{G,\omega}(\ee^\prime) =  \Omega_{G_1,\omega_1}(\alpha)$ and therefore due to the construction of $(G_1,\omega_1)$ we get
$$\Omega_{G,\omega}(\ee^\prime) =  \Omega_{G_1,\omega_1}(\alpha) < \Omega_{G_1,\omega_1}(\ee_1) = \Omega_{G,\omega}(\ee)\,,$$
which contradicts that $\ee\in \Sc(G,\omega)$. Thus, $\ee_1 \in   \Sc(G_1,\omega_1)$. 

Vice versa assume that $\ee \subseteq E$ is such that $\ee_1 \in\Sc(G_1,\omega_1)$. 
We claim that equations \ref{gamma},\ref{ffi} are satisfied by $\ee$. Assume the opposite then due to Lemmas \ref{lem:ecg_simple} and \ref{lem:ecg_paths}  we can delete an arc $e \in \ee$ 
or replace an arc $e \in \ee$ by an arc $f \in [e]_{\sim_\Gamma}$ with $\omega(e) > \omega(f)$. If this is not the case then we can delete all arcs $f \in F^+(e) \cap \ee$ whenever  $e$ is such that  
$\emptyset \not= F^+(e) \cap \ee \not = F^+(e)$. If $\ee^\prime$ denotes this modified set, then $\ee'$ is FAS of $G$ and in all cases 
$$ \Omega_{G_1,\omega_1}(\ee_1)> \Omega_{G_1,\omega_1}(\ee_1').$$ 
A contradiction! Hence, the equations \ref{gamma},\ref{ffi} hold for $\ee$ and therefore the construction of $(G_1,\omega_1)$ yields
$$\Omega_{G,\omega}(\ee) =  \Omega_{G_1,\omega_1}(\ee_1)=\Omega(G_1,\omega_1)\,.$$  
Thus, if $\ee \not \in  \Sc(G,\omega)$ then we choose 
$\beta \in \Sc(G,\omega)$ and obtain that $\beta_1$ is a FAS of $G_1$ with 
$\Omega_{G,\omega}(\beta)=\Omega_{G_1,\omega_1}(\beta)<\Omega(G_1,\omega_1)$, 
which is impossible. Hence $\ee \in \Sc(G,\omega)$ and the  claim follows  by iteration of these arguments. 
\qed \end{proof}

\begin{pro}
Let $G =(V,E,\omega)$ be a finite, connected, directed,  weighted
multigraph then we can construct $(C,\delta)$ in time $\Oc(|V||E|^2)$.
Furthermore, there is an algorithm with run time $\Oc(|E|^2)$ which constructs a
solution $\ee \in S(G,\omega)$ given a solution $\ee_C \in \Sc(C,\delta)$.
\end{pro}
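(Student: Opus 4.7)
My plan is to treat the two quotient operations $/_\Gamma$ and $/_\Phi$ as elementary subroutines and then analyse the cost of alternating them until a fixed point is reached. Store $G$ in doubly linked adjacency lists (with cross pointers between an arc's appearance in $N_E^+$ and $N_E^-$) so that each arc can be located and deleted in constant time. For $/_\Gamma$ I would scan $V$ once; whenever a vertex $w$ has $\deg^-(w)=\deg^+(w)=1$ with unique incoming arc $e$ and outgoing arc $f$, I splice them into a single arc from $\tail{e}$ to $\head{f}$ of weight $\min\{\omega(e),\omega(f)\}$ and remove $w$. Walking forward and backward from $w$ one can, in fact, contract the entire branch-point-free path containing $w$ in time proportional to its length, so a full pass of $/_\Gamma$ costs $\Oc(|E|)$. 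For $/_\Phi$ I would bucket the outgoing arc list of each $v\in V$ by head, summing weights within each bucket; using a temporary array indexed by $V$ this costs $\Oc(\deg^+(v)+|V|)$ per vertex, hence $\Oc(|V|+|E|)$ per pass, and in the worst case the naive pairwise comparison gives $\Oc(|E|^2)$.

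Next I would bound the number of outer iterations. Recall that $(G_{k+1},\omega_{k+1})=(G_k,\omega_k)$ is the termination condition. If strict inequality holds then either $/_\Gamma$ contracted some branch-point-free segment (strictly reducing $|V|$) or $/_\Phi$ merged at least one parallel arc (strictly reducing $|E|$). Hence the total number of iterations is bounded by $|V|+|E|=\Oc(|E|)$, and in particular by $|V|$ when one observes that $/_\Phi$ can only expose new contractible vertices after at least one vertex was removed by the previous $/_\Gamma$. Multiplying the per-pass cost by this bound gives the claimed construction time $\Oc(|V||E|^2)$.

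For the backtracking algorithm, I would carry along the construction a labelling that, for every arc $\bar e$ of the current graph, stores its preimage in $E$ as an ordered list; concretely, each $/_\Phi$-class records the (unordered) set $F^+(e)$ that it collapses, and each $/_\Gamma$-class records the branch-point-free path together with an arc of $[e]_{\sim_\Gamma}$ attaining $\min_{e' \in [e]_{\sim_\Gamma}}\omega(e')$. Given $\ee_C \in \Sc(C,\delta)$, I traverse the labelling layer by layer, from $(C,\delta)$ back to $(G,\omega)$: whenever the current arc originates from a $\Phi$-merge I replace it by the full set $F^+(e)$ (Lemma~\ref{lem:ecg_simple}), and whenever it originates from a $\Gamma$-contraction I replace it by the stored minimiser (Lemma~\ref{lem:ecg_paths}). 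Since the labellings altogether have size $\Oc(|E|)$ and each arc of $G$ is produced at most once, this backtracking runs in $\Oc(|E|^2)$, and the output is a FAS of $G$ of weight $\Omega(C,\delta)=\Omega(G,\omega)$, hence an element of $\Sc(G,\omega)$ by Proposition~\ref{cor:ecg_equiv}.

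The main obstacle is the iteration-count bookkeeping: one has to rule out cascades in which $/_\Gamma$ and $/_\Phi$ keep triggering each other without reducing either $|V|$ or $|E|$. The argument above closes this by insisting that every non-terminating outer step already forces a strict decrease in one of the two quantities; the implementation-level care needed for the linear-time $/_\Gamma$ pass (maintaining both endpoints of a shrinking path in $\Oc(1)$) and for efficient $/_\Phi$ bucketing is routine but must be stated, since a more careless analysis would inflate the bound by extra factors.
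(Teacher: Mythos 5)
Your proposal is correct and follows essentially the same route as the paper: alternate linear-time passes of $/_\Gamma$ and $/_\Phi$ on an adjacency-list representation, bound the number of outer iterations by $\Oc(|V|)$, and recover $\ee$ from $\ee_C$ via a stored preimage labelling (the paper's map $\kappa:E_C\lo\Pc(E)$) whose correctness rests on Lemmas~\ref{lem:ecg_paths} and \ref{lem:ecg_simple} and Proposition~\ref{cor:ecg_equiv}. Your explicit argument that every non-terminating iteration after the first must delete a vertex is a welcome justification of the iteration bound that the paper only asserts, but it does not change the substance of the proof.
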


\begin{figure}[t!]
\begin{minipage}{0.5\textwidth}
\begin{algorithm}[H]
	\KwIn{$G=(V,E,\omega)$}
	\KwOut{$G/_\Gamma$, $\kappa$}
	$\kappa(e) \leftarrow \{e\}\,, \forall e \in E$\;
	\For{$v\in V$}{
		\If{$\deg^-(v)=\deg^+(v)=1$}{
			Let $u,w\in V: (u,v),(v,w)\in E$\;
			$E\leftarrow E\cup \{(u,w)\}\setminus\{ (u,v),(v,w) \}$\;
			$V\leftarrow V\setminus \{v\}$\;
			$\omega((u,w))\leftarrow\min\{\omega((u,v)),\omega((v,w))\}$\;
			$\kappa((u,w))\leftarrow\argmin^*_{\{(u,v),(v,w)\}}\omega(\cdot)$\;
		}
        }
	\Return $(G,\omega)$, $\kappa$
	\vspace{1mm}
	\caption{$G/_\Gamma$ \label{alg:gamma}}
\end{algorithm}
\end{minipage}\quad 
\begin{minipage}[t!]{0.5\textwidth}
	\begin{algorithm}[H]
	\KwIn{$G=(V,E,\omega)$, $\kappa$}
	\KwOut{$G/_\Phi$, $\kappa$}
	\For{$e=(u,v)\in E$}{
		\For{$f\in F^+(e)\setminus \{e\}$}{
			$\omega(e)\leftarrow \omega(e)+\omega(f)$\;
			$E\leftarrow E\setminus \{f\}$\;
			$\kappa(e)\leftarrow \kappa(e)\cup\{f\}$\;
					}
        }
	\Return $G$, $\kappa$
	\vspace{1mm}
	\caption{$G/_\Phi$, $\kappa$ \label{alg:phi}}
	\vspace{14,2mm}
	\end{algorithm}
\end{minipage}
\end{figure}

\begin{proof}
The graph $G/_\Gamma$ can be computed in a single iteration over $V$, see
Algorithm \ref{alg:gamma}, where $\kappa$ is explained later. Each non branching node $v$ is removed and its two incident
arcs $e=(u,v)$ and $f=(v,w)$ are replaced by an arc $(u,w)$ with weight
$\min\{\omega(e),\omega(f)\}$. 
This is possible in time $\Oc(|V|)$ if the graph is represented as  adjacency
list where the targets of the outgoing arcs and the origins of the ingoing
arcs are stored separately. Thus, the iteration over $V$ yields a runtime of $\Oc(|V|^2)$.

To construct $G/_\sim\Phi$ we iterate over $E_\Gamma$ yielding $G_1= (G/_\Gamma)/_\Phi$, see
Algorithm \ref{alg:phi}. For each arc $e$ the weight is updated to $\sum_{e'\in
F^+(e)}\omega(e')$ and the parallel arcs $F^+(e)\setminus\{e\}$ are purged
from the graph. 
This can be realized in time $\Oc(|E|+|V|)=\Oc(|E|)$ with a counting sort
prepossessing step if the target nodes in the adjacency
list are stored such that equal targets are stored consecutively.
Because the construction of $(C,\delta)$ requires at most $|V|$ iteration steps, i.e., if $(G_K,\omega_K) = (C,\delta)$ then $K\in\Oc(|V|)$, 
the essential minor $(C,\delta)$ can be computed in time $\Oc(|V|^3 + |V||E|^2)\subseteq \Oc(|V||E|^2)$.

A simple extension of the algorithms allows to compute the information that is
necessary to compute a solution $\ee\in\Sc(G,\omega)$  once $\ee_C\in \Sc(C,\delta)$ is given.
During the application of $\Gamma$ and $\Phi$ we store the set of arcs of
$G$ that are part of a solution if the corresponding arc from $G/_\Gamma$ and
$G/_\Phi$, respectively, are in a FAS. That is, an arc that gave the minimum
weight of the two arcs in a non branching path or all parallel arcs,
respectively, see Lemma \ref{lem:ecg_paths} and Lemma \ref{lem:ecg_simple} .
In Algorithms \ref{alg:gamma},\ref{alg:phi} this is realized by $\kappa$ which can be considered
as $\kappa: E_C\lo \Pc(E)$. The mapping is initialized as 
$\kappa(e)\leftarrow \{e\}$. 
Storing the arcs as linked list allows to update $\kappa$ in linear time, i.e., 
the asymptotic run time of Algorithms \ref{alg:phi},\ref{alg:gamma} remains unchanged. 
Note that, $\argmin^*$ returns only one arc in the case of equality. 
Now, replacing each arc $e\in\ee_C$ by $\kappa(e)$ yields $\ee$, which due to Proposition \ref{cor:ecg_equiv} 
is a solution for the FASP on $(G,\omega)$.
Thus, the  replacement can be realized in time $\Oc(|E|^2)$. 
\qed \end{proof}

\begin{remark}
Algorithm \ref{alg:gamma} may be extended to generate all solutions of the FASP for $G$
given all solutions for the FASP on $C$. Therefore the equal weight alternatives in a non
branching path need to be stored. The generation of the combinations of the
alternatives of different paths yields all solutions. Certainly, then the run time 
depends exponentially on the number of possible combinations of alternatives. 
\end{remark}

\begin{acknowledgements}
I want to thank Matthias Bernt for many fruitful discussions and his support for formalizing some algorithms. Moreover, a  heartful thank you goes to Peter F. Stadler
for the nice time at his bioinformatics institute in Leipzig.
\end{acknowledgements}


\end{document}